\newlength{\continueindent}
\newcommand*{\ALG@customparshape}{\parshape 2 \leftmargin \linewidth \dimexpr\ALG@tlm+\continueindent\relax \dimexpr\linewidth+\leftmargin-\ALG@tlm-\continueindent\relax}
\apptocmd{\ALG@beginblock}{\ALG@customparshape}{}{\errmessage{failed to patch}}
\newtheorem{theorem}{Theorem}
\newtheorem{corollary}{Corollary}
\newtheorem{algorithmmain}{Algorithm}
\let\oldtextcolor\textcolor
\renewcommand{\textcolor}[2]{%
  \begingroup
  \edef\tempa{#1}%
  \edef\tempb{green}%
  \ifx\tempa\tempb
    \endgroup #2%
  \else
    \endgroup \oldtextcolor{#1}{#2}%
  \fi
}
\newcommand{\mmw}[1] {{\color{blue}MMW: [#1]}}
\begin{document}
\preprint{APS/123-QED}

\title{Digital Quantum Simulations of the Non-Resonant Open Tavis--Cummings Model}

\author{Aidan N. Sims} 
\affiliation{School of Applied and Engineering Physics,
Cornell University, Ithaca, New York 14850, USA}
\affiliation{Department of Computer Science, Cornell University, Ithaca, New York 14850, USA}

\author{Dhrumil Patel}
\affiliation{Department of Computer Science, Cornell University, Ithaca, New York 14850, USA}

\author{Aby Philip}
\affiliation{School of Applied and Engineering Physics,
Cornell University, Ithaca, New York 14850, USA}
\affiliation{Institute of Fundamental Technological Research, Polish Academy of Sciences, Pawi\'{n}skiego 5B, 02-106 Warsaw, Poland}

\author{Alex H. Rubin}
\affiliation{Department of Electrical and Computer Engineering, University of California, Davis, California 95616, USA}
\affiliation{Department of Physics and Astronomy, University of California, Davis, California
95616, USA}

\author{Rahul Bandyopadhyay}
\affiliation{Department of Electrical and Computer Engineering, University of California, Davis, California 95616, USA}
\affiliation{Applied Quantum Algorithms, Leiden University, Leiden, 2333 CA, The Netherlands}

\author{Marina Radulaski}
\affiliation{Department of Electrical and Computer Engineering, University of California, Davis, California 95616, USA}

\author{Mark M. Wilde}
\affiliation{School of Electrical and Computer Engineering, Cornell University, Ithaca, New York
14850, USA}
\affiliation{School of Applied and Engineering Physics,
Cornell University, Ithaca, New York 14850, USA}


\begin{abstract}
    The open Tavis--Cummings model consists of $N$ quantum emitters interacting with a common cavity mode, accounts for losses and decoherence, and is frequently explored for quantum information processing and designing quantum devices. As $N$ increases, it becomes harder to simulate the open Tavis--Cummings model using traditional methods. To address this problem, we implement two quantum algorithms for simulating the dynamics of this model in the inhomogeneous, non-resonant regime, with up to three excitations in the cavity. We show that the implemented algorithms have gate complexities that scale polynomially, as $O(N^2)$ and $O(N^3)$, while the number of qubits used by these algorithms (space complexity) scales linearly as $O(N)$. One of these algorithms is the sampling-based wave matrix Lindbladization algorithm, for which we propose two protocols to implement its system-independent fixed interaction, resolving key open questions of [Patel~and~Wilde, \textit{Open Sys.~\&~Info.~Dyn.},~30:2350014~(2023)]. \textcolor{green}{We benchmark our results against a classical differential equation solver in a variety of scenarios and demonstrate that our algorithms accurately reproduce the expected dynamics.} 
\end{abstract}

\maketitle

\tableofcontents

\section{Introduction}

\subsection{Motivation}

    A foundational object of study in quantum optics is a linear cavity coupled to one or more two-level systems, representing atoms or quantum emitters. The well-known single-emitter case, described by the Jaynes--Cummings model and its variants~\cite{larson2021jaynes}, captures the physics underlying various quantum technologies, including cavity quantum electrodynamics (QED) experiments~\cite{kimble1998strong}, circuit QED systems~\cite{wallraff2004strong}, and quantum dots in photonic crystals~\cite{yoshie2004vacuum}, among many others. The Tavis--Cummings (TC) model extends this framework to $N$ quantum emitters interacting with a common cavity mode~\cite{PhysRev.170.379}. This extension is particularly relevant for modeling optical quantum devices based on color centers or atoms, where many emitters can easily occupy a single cavity due to their intrinsically small size.
    
    Coupling $N$ emitters to a cavity can enhance their collective coupling by a factor of $\sqrt{N}$, which can be beneficial for emitters with small dipole moments such as color centers~\cite{zhong2017interfacing, lukin2023two}, and it also opens the door to physical phenomena that are not present in the single-emitter case, such as super- and sub-radiance, novel types of photon blockade~\cite{dicke1954coherence, gross1982superradiance, radulaski2017photon, PhysRevLett.122.243602, white2022enhancing}, and collectively induced transparency~\cite{lei2023many}. These effects have potential applications in quantum technologies, particularly in developing enhanced light-matter interfaces, efficient single-photon sources, and optical quantum memories~\cite{PhysRevA.80.012317}. To explore this many-body physics, and to design experiments and devices making use of its collective effects, it is necessary to model the behavior of open TC systems that can exchange excitations with their environments (among other decoherence processes).

    For these reasons, among others, there has recently been a shift in focus to simulating open systems rather than closed systems. Closed models are governed by Schr\"{o}dinger's equation, and their dynamics are governed by a Hamiltonian. On the other hand, open systems that have Markovian dynamics are governed by the Lindblad master equation~\cite{cmp/1103899849, gorini1976completely}. Open models are of greater physical relevance because almost all physical models contain noisy or non-unitary interactions.


    It is well known that quantum systems are generally difficult to simulate classically. Indeed, the naive approach to solving an $N$-body Lindblad master equation using Liouville operators (based on a vectorized density matrix) requires memory and runtime that scale exponentially in $N$. There are a variety of classical techniques that improve on this scaling by making various assumptions, which all cut down the size of the Hilbert space to be simulated. These include the use of an effective Hamiltonian~\cite{radulaski2017nonclassical}, only valid in the single-excitation regime, scattering matrix methods~\cite{PhysRevLett.122.243602}, which focus on the dynamics of few-photon states of the scattered field, and the use of quantum trajectories~\cite{PhysRevA.89.052133}, which does not have a closed-form solution if the Hamiltonian does not conserve the number of excitations. Another result showed that quantum inverse methods can be used to find solutions; however, it is difficult to extract quantities of interest using these methods~\cite{ExactTC, QIM}. In addition to the aforementioned methods, variational techniques inspired by the density matrix renormalization group (DMRG) and matrix product states (MPS) have been leveraged to study the photon statistical dynamics of open quantum systems akin to the open TC model~\cite{Dhar_2018, Zens_2021, tiwary2024protectinginformationparametricallydriven}.    


    The challenge of simulating quantum systems was the original impetus behind Feynman's proposal for quantum computers~\cite{Feynman:1981tf}.
    Digital quantum computers have made great strides over the last two decades, with appreciable increases in qubit count and coherence times~\cite{QC_review}. A wide variety of physical models have also been successfully mapped onto qubits, including special cases of the Tavis-Cummings model~\cite{huang2022qubitizationbosons, Tudorovskaya_2024, Chiew_2023, Derby_2021, Marinkovic_2023, rubin2024}. 
    In addition, quantum simulations are one of the most promising near-term applications of digital quantum computers~\cite{RevModPhys.86.153, Clinton_2024}. 
     

    Significant progress has recently been made in developing new open-systems quantum simulation algorithms~\cite{ cleve2019efficient, childsopen, PhysRevResearch.4.023216,   PRXQuantum.3.010320,Suri2023twounitary,WMLpartI, WMLpartII} (see~\cite{openReview} for a review). These algorithms have applications in condensed matter physics~\cite{Olmos_2012, Prosen_2011, PhysRevE.86.061118}, quantum chemistry~\cite{10.1093/oso/9780198529798.001.0001chembook, molecularsystemsbook}, quantum optics~\cite{quantumnoisebook, RevModPhys.70.101}, entanglement preparation~\cite{PhysRevLett.117.040501, PhysRevA.78.042307, PhysRevLett.106.090502}, and other fields~\cite{PhysRevA.87.012324, verstraete2008quantumcomputationquantumstate, kastoryano2016quantumgibbssamplerscommuting}. 

\subsection{Contributions}
    
    In this paper, we implement the wave matrix Lindbladization (WML) algorithm~\cite{WMLpartI, WMLpartII} and a variant of the algorithm from~\cite{cleve2019efficient}, which we refer to as the Split $J$-Matrix algorithm, to simulate the open TC model. These two algorithms differ in their input model; the WML algorithm assumes sample access to program states that encode Lindblad operators in a set $\left\{L_i\right\}_i$, whereas the Split $J$-Matrix algorithm assumes that all these operators are available in matrix form. We show the results of using these algorithms to simulate the open TC model and compare their performances \cite{code_repo}.
    
    Our paper contains several key contributions. First, we resolve an open question from~\cite{WMLpartI} by designing two protocols for implementing the fixed interaction in the WML algorithm. Both of these protocols are based on the linear combination of unitaries method for channels from~\cite[Sections~3 \& 4]{cleve2019efficient}. We show that this fixed interaction is independent of the system being simulated and easily scales to larger systems, for the case in which the Lindblad operators are local and act on a constant number of qubits. Second, we show that the gate complexities---the number of one- and two-qubit gates---for the WML and Split J-Matrix algorithms scale quadratically and cubically with the number of emitters, $N$, respectively,  while the number of qubits scales linearly in $N$. This is an exponential improvement over the time and space required by typical classical Lindblad equation solvers. Finally, our results show that our quantum algorithms can be used to simulate non-resonant and inhomogeneous regimes of the open TC model, both of which are inaccessible to standard classical simulation techniques.

    \subsection{Paper Organization}

     The rest of the paper is structured as follows. In Section~\ref{subsec:Notation}, we explain the notation we use, and then Section~\ref{sec:Notation_preliminaries} provides more background on the non-resonant open TC model, the algorithm proposed in~\cite{cleve2019efficient}, which we refer to as the $J$-Matrix algorithm, and the WML algorithm. We then present, in Section~\ref{sec:methods_split_j_matrix}, an improved version of the $J$-Matrix algorithm, i.e., the Split $J$-Matrix algorithm.
     In Section~\ref{sec:wml-practical}, along with Appendices~\ref{app:WML_channel} and~\ref{app:reduce-aux-overhead}, we present two protocols for implementing the fixed interaction of the WML algorithm.
     In Section~\ref{sec:Methods}, we demonstrate how to map the excitation-number states of the cavity and emitters to qubits so that we can employ our quantum algorithms for simulating the open TC model.
     Section~\ref{sec:wml-program-states-tc} describes the program states that encode the Hamiltonian and Lindblad operators of the open TC model.
     In Section~\ref{sec:cc}, we investigate the gate complexities of our algorithms. Next, in Section~\ref{sec:Results},  we provide the results of using these algorithms to numerically simulate the behavior of the TC model. We conclude the paper in Sections~\ref{sec:Discussion} and \ref{sec:Conclusion} by summarizing our results and detailing questions for future research.

\section{Notation}

\label{subsec:Notation}

    We start by establishing some basic mathematical notation used throughout the rest of the paper (see~\cite{WMLpartI} for similar notation). First, let the Hilbert space of a $d$-dimensional system associated with the quantum system $S$ be denoted by $\mathcal{H}_S$. The set of quantum states acting on $\mathcal{H}_S$ is denoted by $\mathcal{D}(\mathcal{H}_S)$. The trace of a matrix $X$ is denoted by $\operatorname{Tr}[X]$, and the conjugate transpose or adjoint of $X$ is denoted by $X^\dag$. The partial trace over systems $B$ and $C$ in a joint state $\rho_{ABC}$ of systems $ABC$ is denoted by $\operatorname{Tr}_{BC}[\rho_{ABC}]$.

    To analyze the performance of the algorithms in this paper, we define various norms of an operator. For  all $p\in[1,\infty)$, the Schatten-$p$ norm of an operator $X$ is defined as
    \begin{equation} \label{eqn:p-norm}
        \left\|X\right\|_p \coloneqq \left(\operatorname{Tr}\!\left[\left(X^\dag X\right)^\frac{p}{2}\right]\right)^\frac{1}{p}.
    \end{equation}
    We primarily use $p = 1$, called the trace norm, $p = 2$, called the Hilbert--Schmidt norm, and $p = \infty$, called the operator norm. Note that the operator norm of a matrix corresponds to its maximum singular value. For notational convenience, we omit the subscript `$\infty$' when referring to the operator norm.

    The normalized diamond distance between two quantum channels $\mathcal{N}$ and $\mathcal{M}$ is defined as follows:
    \begin{multline}\label{def:diamond-dist}
        \frac{1}{2}\left\Vert \mathcal{N} - \mathcal{M} \right\Vert_{\diamond} 
        \coloneqq\\ \sup_{\rho \in \mathcal{D}(\mathcal{H}_{R} \otimes \mathcal{H}_{S})}  \frac{1}{2}\left \Vert (\mathcal{I}_{R} \otimes \mathcal{N}  )(\rho) - (\mathcal{I}_{R} \otimes \mathcal{M}  )(\rho)  \right \Vert_{1},
    \end{multline}
    where $R$ is a reference system (of arbitrarily large dimension) and $\mathcal{I}_{R}$ is the identity channel.

    We employ the unitary SWAP operator throughout this paper,  defined as follows:
    \begin{equation}\label{def:swap}
        \operatorname{SWAP} \coloneqq \sum_{i,j} |i\rangle\!\langle j| \otimes |j\rangle\!\langle i|.
    \end{equation}
     Note that a SWAP operation between registers of multiple qubits can be represented as the tensor product of pairwise SWAP operations. 
     A related operator is the unnormalized maximally entangled operator, represented as $|\Gamma\rangle\!\langle\Gamma|$, where
    \begin{equation}\label{def:gamma}
        \ket{\Gamma} \coloneqq \sum_{i} \ket{i}\!\ket{i}.
    \end{equation}

    Finally, the commutator of operators $A$ and $B$ is denoted by $[A,B] \coloneqq AB - BA$, the anti-commutator by $\{A,B\} \coloneqq AB+BA$, and we use the notation $[M]$ to denote the set $\{1, 2, \ldots, M\}$.

\section{Review} 

\label{sec:Notation_preliminaries}

    In this section, we provide a brief review of the non-resonant open TC model and some important background on the algorithms we will use to simulate this model. Specifically, we will discuss Trotterization, the Wave Matrix Lindbladization algorithm~\cite{WMLpartI,WMLpartII}, and the $J$-Matrix algorithm~\cite{cleve2019efficient,pocrnic2024lindbladian}. 

\subsection{Non-Resonant Open Tavis–Cummings Model}\label{subsec:TC_Model}

    The TC model involves a cavity coupled to~$N$ two-level emitters, whose dynamics are governed by the following Hamiltonian:
    \begin{equation}\label{def:TC_Hamil}
        H_{\operatorname{TC}}\coloneqq \omega_{C}a^{\dag}a+\sum_{j=1}^{N}\omega_j\,\sigma_{j}^{+}\sigma_{j}^- + g_{j}\left(\sigma_{j}^{+}a+\sigma_{j}^-a^{\dag}\right),
    \end{equation}
    where we have set $\hbar=1$ here and throughout, $a$ is the annihilation operator corresponding to the cavity, $\omega_{C} > 0$ is the frequency of the cavity, $\sigma_{j}^+ \coloneqq |1\rangle\!\langle 0|$ is the creation operator of the $j$-th emitter, $\sigma_{j}^-\coloneqq |0\rangle\!\langle 1|$ is the annihilation operator of the $j$-th emitter, $\omega_j > 0$ is the frequency of the $j$-th emitter, and $g_j > 0$ is the coupling strength between the cavity and the $j$-th emitter. The interaction between the cavity and $j$-th emitter is governed by the term $\sigma_{j}^{+} a+\sigma_{j}^- a^{\dag}$. This interaction allows the emitters and cavity to exchange excitations. 
    
    Additionally, a coherent photon pump can be attached to the system during evolution, so that excitations can be pumped into the system. This pump can be modeled by adding the following term to the Hamiltonian $H_{\text{TC}}$:
    \begin{equation}\label{def:TC_pump}
        E_P\left(ae^{i\omega_{P}t} + a^\dagger e^{-i\omega_{P}t}\right)\!,
    \end{equation}
    where $\omega_{P} > 0$ is the frequency of the pump and $E_P > 0$ is the power of the coherent pump, having the same units as~$\omega_{C}$. 
    
    We can also understand the Hamiltonian by considering how different parts of the system are affected by different terms in the Hamiltonian. This is shown in Table~\ref{table:tavis_cummings}.
    \begin{table} [ht]
        \centering
        \hfill
        \begin{center}
            \renewcommand{\arraystretch}{1.5} 
            \begin{tabular}{||c | c||} 
            \hline
            \makecell{Hamiltonian} & \makecell{Systems} \\ 
            \hline\hline
            \makecell{$\omega_C a^\dag a$} & \makecell{Cavity} \\ 
            \hline
            \makecell{$E_P\left(ae^{i\omega_{P}t} + a^\dagger e^{-i\omega_{P}t}\right)$} & \makecell{Cavity}  \\
            \hline
            \makecell{$\omega_{j}\,\sigma_{j}^{+}\sigma_{j}^-$} & \makecell{Emitter $j$}  \\
            \hline
            \makecell{$g_j\left( \sigma_j^+ a + a^\dagger \sigma_j^-\right)$} & \makecell{Cavity \& Emitter $j$}  \\
            \hline
            \end{tabular}
        \end{center}
        \caption{Different terms in the TC Hamiltonian, $H_{\operatorname{TC}}$, and the systems upon which they act. }
        \label{table:tavis_cummings}
    \end{table}
    
    Realistically, excitations can decay out of the cavity and emitters. The evolution of the system, when accounting for these decay processes, is governed by the following Lindblad master equation:
    \begin{equation}\label{eqn:open_TC_master}
       \frac{\partial\rho}{\partial t} = -i[H_{\text{TC}},\rho] + \kappa\mathcal{L}_a(\rho)+\sum_{j=1}^{N}\gamma\mathcal{L}_{\sigma_j^-}(\rho), 
    \end{equation}
    where $\rho$ is the combined state of the cavity and all $N$ emitters, $\kappa$ and $\gamma$ represent the rates of excitation loss by the cavity and the emitters, respectively, and 
    \begin{equation} \label{eqn:wml_lindbl}
    \kappa\mathcal{L}_a(\rho)+\sum_{j=1}^{N}\gamma\mathcal{L}_{\sigma^-_j}(\rho)
    \end{equation} 
    is the term that governs the dissipative part of the dynamics. Here, the Lindbladians $\mathcal{L}_a, \mathcal{L}_{\sigma^-_1}, \ldots, \mathcal{L}_{\sigma^-_N}$ are defined through the following superoperator:
    \begin{equation}\label{eqn:lindbladian}
    \mathcal{L}_L(\rho) \coloneqq L\rho L^\dagger - \frac{1}{2}\left\{L^\dagger L,\rho\right\},
    \end{equation}
    for every Lindblad operator $L \in \left\{a, \sigma^-_1, \ldots, \sigma^-_N  \right\}$.
    
    By simulating~\eqref{eqn:open_TC_master}, we want to study the behavior of two important quantities: population and the second-order photon correlation, which is denoted by $g^{(2)}(0)$. Specifically, we first want to estimate the population within the cavity and emitters at any given time $t$. The population refers to the number of excitations in a certain part of the system. The expected value of the population within the cavity is obtained using the following formula:
    \begin{equation}
        \operatorname{Tr}\!\left[a^{\dag} a\rho\right]\label{eq:pop-cavity},
    \end{equation}
    and the expected value of the population within the $j$-th emitter is obtained using 
    \begin{equation}
        \operatorname{Tr}\!\left[\sigma^+_{j} \sigma_{j}^-\rho\right]\label{eq:pop-emitter}.
    \end{equation}
    
    The second quantity of interest is  $g^{(2)}(0)$, the second-order photon correlation, when the cavity is in the steady-state regime ($\dot{\rho} = 0$). This quantity is defined as follows~\cite{STEVENS201325}:
    \begin{equation} \label{eqn:g_2_coherence}
        g^{(2)}(0) \coloneqq \frac{\operatorname{Tr}\!\left[a^\dag a^\dag aa\rho\right]}{\left(\operatorname{Tr}\!\left[a^\dag a\rho\right]\right)^2}.
    \end{equation}

    The relations between the frequency of the cavity, $\omega_{C}$, the frequency of the emitters, $\omega_{j}$, and the coupling strength between the cavity and emitter, $g_{j}$, are important factors to consider within the TC model. If the cavity and an emitter have the same frequency, then the cavity-emitter pair is considered resonant. If each emitter has the same frequency and the same coupling strength to the cavity, the system is considered homogeneous. When all the emitters and the cavity are resonant and homogenous, the system is classically tractable~\cite{Marinkovic_2023}. In this paper, we simulate non-resonant and inhomogeneous systems along with lossy cavities and emitters. 

\subsection{Excitation-Number State to Qubit Mapping for Open TC Model}\label{sec:Methods}

In this section, we demonstrate how to map the excitation-number states of the cavity and emitters to qubits so that we can employ the Split $J$-Matrix algorithm and the WML algorithm for simulating the open TC model.

To achieve this, we model the cavity as a two-qubit system, while each emitter is represented as a one-qubit system. This configuration enables us to simulate up to three excitations in the cavity. Accordingly, we represent each excitation-number state of the cavity in the following manner:
    \begin{equation}
    \label{def:photon_mapping}
        \begin{aligned}
        |00\rangle\!\langle00| & \implies 0 \textrm{ excitations,}  \\
        |01\rangle\!\langle01| & \implies 1 \textrm{ excitation,}  \\
        |10\rangle\!\langle10| & \implies 2 \textrm{ excitations,}  \\
        |11\rangle\!\langle11| & \implies 3 \textrm{ excitations}. 
    \end{aligned}
    \end{equation}
    The annihilation operator $a$ of the cavity can be then written as
    \begin{equation} \label{eqn:annihil_cavity}
        a  = |00\rangle\!\langle01|+\sqrt{2}\,|01\rangle\!\langle10|+\sqrt{3}\,|10\rangle\!\langle11|.
    \end{equation}
    Similarly, the annihilation operator $\sigma_j^{-}$ of the $j$-th emitter can be written as
    \begin{equation} \label{eqn:annihil_emitter}
        \sigma^-_j = |0\rangle\!\langle1|_j.
    \end{equation}
    From~\eqref{eqn:annihil_cavity} and~\eqref{eqn:annihil_emitter}, we obtain 
    \begin{align}
        \sigma_{j}^{+}\sigma_{j}^{-} & = |1\rangle\!\langle1|_{j}\label{eqn:number_op_emitter},\\
         a^{\dag}a & = |01\rangle\!\langle01| + 2\,|10\rangle\!\langle10| + 3\,|11\rangle\!\langle11|.\label{eqn:number_op_cavity}
    \end{align}

\subsection{Background on Trotterization}

\label{subsec:background_trotter}

    Trotterization is a technique for Hamiltonian simulation that leverages the idea that most physically relevant Hamiltonians are sums of smaller Hamiltonians, each acting locally on a constant number of qubits~\cite{lloyd1996universal}. The goal of Hamiltonian simulation is to implement the unitary evolution $e^{-iHt}$, where $H$ is the Hamiltonian of interest. However, it can be challenging to find a sequence of one- and two-qubit gates that realize this unitary evolution exactly. 
    
    To circumvent this, first note that $H$ can be written as a sum of local Hamiltonians. For instance, consider a Hamiltonian $H$ that can be expressed as $H \coloneqq H_1 + H_2 + H_3$, where $H_1$,  $H_2$, and $H_3$ are local Hamiltonians. In such a scenario, one can approximate the unitary $e^{-iHt}$ with the following compositions of unitaries:
    \begin{equation}\label{eqn:trotter_first_order}
        \left(e^{-iH_1\frac{t}{r}}e^{-iH_2\frac{t}{r}}e^{-iH_3\frac{t}{r}}\right)^r,
    \end{equation} 
    where $r \in \mathbb{N}$.
    As $r$ tends to infinity, the distance between the above sequence of unitaries and $e^{-iHt}$ goes to zero. In the literature, such a technique is known as first-order Trotterization, owing to the fact that the aforementioned sequence of unitaries implements the zeroth and first orders of the Taylor expansion of $e^{-iHt}$. 
    
    The second-order Trotter approach is similar, but in addition to applying the aforementioned sequence, i.e., $e^{-iH_1\tau}e^{-iH_2\tau}e^{-iH_3\tau}$, for time $\tau\coloneqq\frac{t}{2r}$, one also applies it in the reverse order for the same amount of time. For example, for $H = H_1 + H_2 + H_3$, the expression
    \begin{equation}\label{eqn:trotter_second_order}
        \left(e^{-iH_1\frac{t}{2r}}e^{-iH_2\frac{t}{2r}}e^{-iH_3\frac{t}{2r}}e^{-iH_3\frac{t}{2r}}e^{-iH_2\frac{t}{2r}}e^{-iH_1\frac{t}{2r}}\right)^r
    \end{equation}
    represents a second-order Trotterization. Note that the above discussion on the first-order and second-order Trotterization can be easily extended for Hamiltonians with an arbitrary number of summands, i.e., $H = \sum_j H_j$~\cite{Suzuki1991}.

\section{\texorpdfstring{$J$}{J}-Matrix and Split \texorpdfstring{$J$}{J}-Matrix Algorithms}

\subsection{Background on \texorpdfstring{$J$}{J}-Matrix Algorithm}

\label{subsec:background_j_matrix}

The authors of~\cite{cleve2019efficient,pocrnic2024lindbladian} proposed an algorithm, here called the $J$-matrix algorithm, to simulate the Lindbladian evolution of a finite-dimensional quantum system, which is in state $\rho$, for time $t$. This evolution is governed by the following Lindblad master equation:
    \begin{equation}\label{eqn:general_master}
        \frac{d\rho}{dt} = \mathcal{L}(\rho) \coloneqq  -i[H, \rho] + \sum_{k=1}^{K}\left(L_k\rho L^\dagger_k - \frac{1}{2} \left\{L^\dagger_k L_k,\rho\right\}\right),  
    \end{equation}
    where $H$ is a Hamiltonian and $L_1, L_2, \ldots, L_K$ are Lindblad operators (not necessarily the Lindblad operators in the open TC model). The superoperator $\mathcal{L}$ is a general Lindbladian, and note that the superoperator $\mathcal{L}_{L}$, as defined in~\eqref{eqn:lindbladian}, is a special case of $\mathcal{L}$ with no Hamiltonian and only one Lindblad operator. The Hamiltonian $H$ is Hermitian, but there is no constraint on the Lindblad operators $L_1, L_2, \ldots, L_K$.
    The $J$-matrix algorithm assumes that the Lindblad operators are embedded in a larger Hermitian matrix in the following way:
    \begin{equation} \label{eqn:sample_j_matrix}
        J\coloneqq\left[\begin{matrix}
            0 & L_1^{\dagger} & L_2^{\dagger} & \cdots & L_K^{\dagger}\\
            L_1  & 0 & 0 & \cdots & 0 \\
            L_2  & 0 & 0 & \cdots & 0 \\
            \vdots  & \vdots & \vdots & \ddots & \vdots \\
            L_K  & 0 & 0 & \cdots & 0 \\
        \end{matrix}\right].
    \end{equation}
    The core idea of the $J$-matrix algorithm is to simulate Lindbladian evolution by performing Hamiltonian evolution on a larger system that includes both the system qubits and some auxiliary qubits. $\lceil\log_2(K+1)\rceil$ auxiliary qubits suffice for simulating evolution with $K$ Lindblad operators. Below, we present pseudocode of the $J$-Matrix algorithm.    
    \begin{algorithmmain}[$J$-Matrix]\label{algo:j_matrix_intro} Set $n \coloneqq O\!\left(\frac{t^2}{\varepsilon}\right)$, where $t\geq 0$ is the simulation time and $\varepsilon \in [0,1]$ is the desired final error in normalized diamond distance. Repeat the following steps $n$ times:  
        \begin{enumerate}
            \item Initialize the auxiliary qubits to the state $|0\rangle\!\langle 0|^{\otimes \lceil \log_2(K+1)\rceil}$.
            \item Apply the unitary $e^{-iJ\sqrt{\frac{t}{n}}}$ to the auxiliary qubits and the system qubits. \label{protstep:j_matrix_unitaries}
            \item Trace out the auxiliary qubits.
            \item Apply the unitary $e^{-iH\!\frac{t}{n}}$ to the system qubits.
        \end{enumerate}
    \end{algorithmmain}
    Note that the unitary operator $e^{-iJ\sqrt{\frac{t}{n}}}$, in Step~\ref{protstep:j_matrix_unitaries} of the above algorithm, acts on both the system qubits and the auxiliary qubits. Additionally, we did not specify in the above algorithm how to decompose this unitary operator into smaller unitary gates that each act on a constant number of qubits. In general, if some structure of this unitary is not known in advance, it may require an exponential number of such gates to implement it. 
    
    Note that, in many physically relevant models, the Lindblad operators $L_1, L_2, \ldots, L_K$ are local operators that each act on only a constant number of qubits, and the Hamiltonian $H$ is a sum of local Hamiltonians, each of which also act on a constant number of qubits. To use this structure to our advantage, we propose an improved version of the standard $J$-matrix algorithm, which we call the Split $J$-Matrix algorithm. A similar algorithm has been analyzed in~\cite{pocrnic2024lindbladian}. The Split $J$-Matrix algorithm requires only $K$ auxiliary qubits, but not an auxiliary environment, which is potentially much larger than the system of interest, like in~\cite{pocrnic2024lindbladian}. In Section~\ref{sec:methods_split_j_matrix}, we explain the Split $J$-Matrix algorithm in more detail. In this algorithm, we employ Trotterization to decompose the unitary $e^{-iJ\sqrt{\frac{t}{n}}}$ into unitary operators that act only on a constant number of qubits. For this reason, we briefly overview the concept of Trotterization in the following subsection.

\subsection{Split \texorpdfstring{$J$}{J}-Matrix}~\label{sec:methods_split_j_matrix}

In this section, we present the Split $J$-Matrix algorithm for simulating the Lindbladian $\mathcal{L}$ as defined in~\eqref{eqn:general_master}. We begin by rewriting this Lindbladian as a sum of the following Lindbladians to simplify the gate complexity analysis for this algorithm, which we present later in Section~\ref{sec:gate-compl-split-J-matrix}:
\begin{equation}
    \mathcal{L}(\rho) =  \underbrace{\mathcal{H}(\rho) + \mathcal{H}'(\rho)}_{\text{coherent}} + \underbrace{\mathcal{N}(\rho)}_{\text{dissipative}}\label{eq:sum_coh_dis},
\end{equation}
where
\begin{align}
    \mathcal{N}(\rho) & \coloneqq \sum_{k=1}^{K}\left(\underbrace{L_k\rho L^\dagger_k - \frac{1}{2} \left\{L^\dagger_k L_k,\rho\right\}}_{\eqqcolon\mathcal{N}_k(\rho)}\right),\\
    \mathcal{H}(\rho) & \coloneqq \sum_{p=1}^{P}\underbrace{-i[H_{p}, \rho]}_{\eqqcolon \mathcal{H}_{p}(\rho)},\\
    \mathcal{H}'(\rho) & \coloneqq \sum_{q=1}^{Q} \underbrace{-i[H'_{q}, \rho]}_{\eqqcolon \mathcal{H}'_{q}(\rho)}.
\end{align}
Here, the coherent part of~\eqref{eq:sum_coh_dis} is composed of the following Hamiltonians: the mutually commuting local Hamiltonians $\{H_{p}\}_p$, and the mutually non-commuting local Hamiltonians $\{H'_{q}\}_q$. Furthermore, we assume that the Hamiltonians in these sets act only on a constant number of qubits. In the dissipative part of~\eqref{eq:sum_coh_dis}, we assume that the Lindblad operators $L_{1}, \ldots, L_{K}$ commute with each other. Both these assumptions are quite common and also hold for the open TC model.

Recall that the naive implementation of the $J$-Matrix algorithm (Algorithm~\ref{algo:j_matrix_intro}) requires the Lindblad operators to be embedded in a larger Hermitian operator, $J$, as shown in~\eqref{eqn:sample_j_matrix}. Furthermore, it involves applying the unitary $e^{iJ\sqrt{\tau}}$ for some small amount of time $\tau$ on both the system qubits and the auxiliary qubits (see Step~\ref{protstep:j_matrix_unitaries} of Algorithm~\ref{algo:j_matrix_intro}). It is simple to see that applying this unitary naively suffers from a critical drawback---the gate complexity for implementing it in general, without any assumption on the Lindblad operators, scales exponentially with the number of system qubits.

We can mitigate this issue for the case that we are considering, that is, the case where the Lindblad operators are local operators acting on a constant number of qubits, and these operators are mutually commuting operators. Hence, we can break the larger matrix $J$  into smaller matrices, namely, $J_{L_1}, \ldots, J_{L_K}$, each encoding only one Lindblad operator at a time. These smaller matrices are defined as follows:
\begin{align}\label{eq:small-J-matrices}
        J_{L_k}& \coloneqq\left[\begin{matrix}
            0  & L_k^{\dagger} \\
            L_k & 0 \\
        \end{matrix}\right],
\end{align}
for all $k \in [K]$. The key idea is then to apply easier-to-implement local unitaries $e^{-iJ_{L_1}\sqrt{\tau}}, \ldots, e^{-iJ_{L_K}\sqrt{\tau}}$ in parallel.  This approach allows us to achieve the same dynamics as applying the larger unitary $e^{-iJ\sqrt{\tau}}$ to the entire system.

In a similar vein, the naive implementation of the $J$-Matrix algorithm involves applying the unitary
\begin{equation}
    e^{-i\left(\sum_{p=1}^{P} H_p + \sum_{q=1}^{Q} H'_q\right)\tau}\label{eq:split-J-full-H-U}
\end{equation}
to simulate the coherent part of~\eqref{eq:sum_coh_dis}. Here as well if there is no structure to the Hamiltonian
\begin{equation}
    \sum_{p=1}^{P} H_p + \sum_{q=1}^{Q} H'_q,
\end{equation}
then the gate complexity for implementing the above unitary is exponential in the number of qubits in general. However, for our case, the above Hamiltonian is the sum of local Hamiltonians that act on a constant number of qubits. Therefore, we apply the second-order Trotterization to the unitary in~\eqref{eq:split-J-full-H-U} in order to decompose it into a product of easier-to-implement local unitaries. Refer to~\eqref{eqn:trotter_second_order} for an explanation of second-order Trotterization, along with an example. With the above notions in place, we now present pseudocode for the Split $J$-Matrix algorithm below.
    \begin{algorithmmain}[Split~$J$-Matrix]\label{protocol:split_j_matrix}
       Set 
       \begin{equation}
           n\coloneqq O\!\left(\frac{(K^2 + Q^2)\lambda_{\max}^2 t^2}{\varepsilon}\right)\label{eq:comp-split-J-mat},
       \end{equation}
       where $t\geq 0$ is the simulation time, $\varepsilon\in (0,1)$ is the desired final error in normalized diamond distance, and
       \begin{multline}\lambda_{\max}\coloneqq\max\Big\{\left\Vert H_{1}\right\Vert, \ldots, \left\Vert H_{P}\right\Vert, \left\Vert H'_{1}\right\Vert, \ldots, \left\Vert H'_{Q}\right\Vert,\\ \left\Vert L_1\right\Vert^2, \ldots, \left\Vert L_K\right\Vert^2 \Big\}\label{eq:split-lambda}.
        \end{multline}
        Repeat the following steps $n$ times:
       \begin{enumerate}
           \item Initialize the auxiliary qubits to the state $|0\rangle^{\otimes K}$.
           \item Apply the local unitaries $e^{-iJ_{L_k}\sqrt{\frac{t}{n}}}$ in parallel, where the unitary $e^{-iJ_{L_k}\sqrt{\frac{t}{n}}}$ acts on the $k$-th auxiliary qubit and the system qubits that $L_k$ acts on non-trivially.
           \item Trace out all the $K$ auxiliary qubits.
           \item Apply the local unitaries $e^{-iH_p\frac{t}{2n}}$ in parallel, where the unitary $e^{-iH_p\frac{t}{2n}}$ acts on the system qubits that $H_p$ acts on non-trivially.
           \item Apply the local unitaries $e^{-iH'_{q}\frac{t}{2N}}$ sequentially, where the unitary $e^{-iH'_{q}\frac{t}{2N}}$ acts on the system qubits that $H'_q$ acts on non-trivially.
           \item Repeat Steps 4 and 5 in the reverse order\textcolor{green}{, with the order of the emitters also reversed.} 
       \end{enumerate}
    \end{algorithmmain}

To understand the expression for the number of iterations, $n$, in~\eqref{eq:comp-split-J-mat}, refer to Section~\ref{sec:gate-compl-split-J-matrix} for an in-depth analysis of the gate complexity of the above algorithm.

\section{Wave Matrix Lindbladization}

\subsection{Background on Wave Matrix Lindbladization}

\label{subsec:background_DME_WML}

    Wave Matrix Lindbladization (WML) \cite{WMLpartI,WMLpartII} is related conceptually to  Density Matrix Exponentiation (DME)~\cite{Lloyd2014QuantumAnalysis}, the latter of which is used to simulate Hamiltonian dynamics when the Hamiltonian is made available in the form of quantum states. See also~\cite{Kimmel_2017,go2024densitymatrixexponentiationsamplebased} for further exposition of DME. While DME is used to simulate closed system dynamics, WML is used to simulate Lindbladian dynamics~\cite{WMLpartI,WMLpartII}. Under the umbrella term of WML, there are two algorithms for simulating Lindbladian dynamics: the sampling-based WML algorithm and the Trotter-like WML algorithm. For our purposes, we focus on the sampling-based algorithm, as we will use it later to simulate the open TC model. For the sake of brevity, we will henceforth refer to the sampling-based WML algorithm simply as the WML algorithm.

 The WML algorithm assumes that the Hamiltonian $H$ is given as a linear combination of mixed states $\left\{\sigma_{j}\right\}_{j=1}^{J}$:
 \begin{equation}\label{eqn:program_states}
        H = \sum_{j=1}^J c_j\sigma_{j},
\end{equation}
where each $c_j \in \mathbb{R}$.
This algorithm also assumes that each Lindblad operator $L_k$ is a local operator, acting on a constant number of qubits, and is given encoded in a pure state $\ket{\psi_k}$ in the following way:
    \begin{equation} \label{def:WML_prog_state}
        \ket{\psi_k} \coloneqq \frac{(L_k \otimes I)\ket{\Gamma}}{\left\|L_k\right\|_2^2},
    \end{equation}
 where $|\Gamma\rangle$ is the unnormalized maximally entangled vector, defined in~\eqref{def:gamma}, and that    we have sample access to multiple copies of $\sigma_{j}$ for all $j \in [J]$ and $\psi_k \coloneqq |\psi_k\rangle\!\langle \psi_k|$ for all $k \in [K]$. In~\cite{WMLpartI,WMLpartII}, the authors referred to these states as program states, a term we will adopt in this paper. The WML algorithm consists of two registers: the system register, initialized in the $d$-dimensional quantum state $\rho$, and the program register. Pseudocode for this algorithm is as follows.

    \begin{algorithmmain}[WML]\label{algo:wml}
    Set $n \coloneqq O\!\left(\frac{c^2 t^2}{\varepsilon}\right)$ and $\Delta \coloneqq \frac{ct}{n}$, where
    \begin{align}\label{def:WML_c}
        c \coloneqq \sum_{j = 1}^J |c_j| + \sum_{k = 1}^K \left\Vert L_k\right\Vert^2_2,
    \end{align}
    $t\geq 0$ is the simulation time, and $\varepsilon\in(0,1)$ is the desired final error in normalized diamond distance. Repeat the following steps $n$ times:
    \begin{enumerate}
        \item \label{protstep:wml_algo_step1} Randomly sample a Hamiltonian program state $\sigma_j$ or a Lindbladian program state $\psi_k$, where $\sigma_j$ has probability $\frac{|c_j|}{c}$ of being sampled and $\psi_k$ has probability $\frac{\left\|L_k\right\|^2_2}{c}$ of being sampled.
        \item Initialize the program register to the state sampled above.
        \item If a Hamiltonian program state $\sigma_j$ is sampled in Step~\ref{protstep:wml_algo_step1}, apply the unitary $e^{-\textnormal{sgn}(c_j)i\,\operatorname{SWAP}\Delta}$ on both the system and program registers. Here, $\operatorname{sgn}(x)$ evaluates to $1$ if $x$ is non-negative and $-1$ otherwise. 
        \item \label{protstep:wml_algo_step4} If a Lindbladian program state $\psi_k$ is sampled in Step~\ref{protstep:wml_algo_step1} instead, apply the quantum channel $e^{\mathcal{M}\Delta}$ on both the program register and the system registers on which $L_k$ acts non-trivially. Here,
        $\mathcal{M}$ is a single-operator Lindbladian:
        \begin{equation}\label{eq:intro-M}
            \mathcal{M} (\cdot) \coloneqq M(\cdot) M^{\dagger} - \frac{1}{2} \left \{M^{\dagger}M, \cdot \right\},
        \end{equation}
        with Lindblad operator
        \begin{equation}
            M \coloneqq \frac{1}{\sqrt{Q}}\left(I_1\otimes |\Gamma\rangle\! \langle \Gamma |_{23}\right) \left( \operatorname{SWAP}_{12} \otimes I_3\right),
            \label{eqn:fixed_interaction_WML}
        \end{equation}
        where register 1 is the system register, registers 2 and 3 jointly represent the program register, and $Q$ is the dimension of the system registers on which $L_k$ acts non-trivially.
            \item Trace out the program register. 
        \end{enumerate}
    \end{algorithmmain}

\subsection{Realizing the Fixed Interaction \texorpdfstring{$e^{\mathcal{M}\Delta}$}{exp(M Delta)}}\label{sec:wml-practical}

In this section, we answer the following question: How can we realize the fixed interaction, that is, the quantum channel $e^{\mathcal{M}\Delta}$ (see Step~\ref{protstep:wml_algo_step4} of Algorithm~\ref{algo:wml}), of the WML algorithm? An answer to this question will resolve one of the key open problems of~\cite{WMLpartII}. Note that this answer applies more broadly to the case where we use the WML algorithm for simulating a general Lindbladian evolution where the Lindblad operators are local operators; that is, it is not limited to simulating the open TC model.

To this end, we employ the LCU-based Lindbladian simulation algorithm proposed in~\cite{cleve2019efficient} to realize the quantum channel $e^{\mathcal{M}\Delta}$. Note that this algorithm assumes an input model where the Lindblad operators are represented as linear combinations of unitaries. 

For our case, we have the Lindbladian $\mathcal{M}$ with a single Lindblad operator $M$ as defined in~\eqref{eqn:fixed_interaction_WML}. Using this LCU-based algorithm, we implement a map $\mathcal{M}_\Delta$ that approximates $e^{\mathcal{M}\Delta}$, where
    \begin{equation}\label{eqn:lcu_approx_wml}
        \mathcal{M}_\Delta(\rho) = \sum^1_{j=0}A_j \rho A_j^\dagger,
    \end{equation}
    $A_0 = I -\frac{\Delta}{2}M^\dagger M$, and $A_1 = \sqrt{\Delta} M$.

    We begin by finding a representation for the Lindblad operator $M$ as a linear combination of unitaries, which, as mentioned before, is the input model for the LCU-based algorithm.
    To simplify things, let us consider the case when the system of interest is a single qubit. Now, since the expression for $M$ consists of operators such as $|\Gamma\rangle\!\langle\Gamma|$ (defined in~\eqref{def:gamma}) and  $\operatorname{SWAP}$ (defined in~\eqref{def:swap}), we first represent these operators as a linear combination of Pauli matrices:   \begin{align}\label{eqn:gamma_swap_as_sum_unitary}
        |\Gamma\rangle\!\langle\Gamma| &= \textcolor{green}{\frac{1}{2}\left(\right.} I\otimes I+ X\otimes X-Y\otimes Y+Z\otimes Z  \textcolor{green}{\left.\right)}, \\
        \operatorname{SWAP}&= \textcolor{green}{\frac{1}{2}\left(\right.} I\otimes I+ X\otimes X+Y\otimes Y+Z\otimes Z \textcolor{green}{\left.\right)}.
    \end{align}
    Plugging the above equations into~\eqref{eqn:fixed_interaction_WML}, we can express $M$ as a linear combination of the 16 Pauli matrices.  Using these 16 Pauli matrices, we can directly use the procedure described in \cite{cleve2019efficient} to realize the fixed interaction $e^{\mathcal{M}\Delta}$. In Appendix~\ref{app:WML_fixed_inter_unitaries}, we show how to extend the implementation of $e^{\mathcal{M}\Delta}$ beyond a single qubit to multiple qubits.
    
    A direct implementation involves 16 controlled unitaries, and each unitary would require up to six control qubits. However, we can reduce the number of controlled unitaries using symmetries inherent to $M$. We provide a detailed circuit diagram and step-by-step procedure to implement $e^{\mathcal{M}\Delta}$ using these symmetries and the LCU method in Appendix~\ref{app:WML_channel}. 

    It is crucial to also note that the LCU-based algorithm requires a number of auxiliary qubits that scale logarithmically with the number of Pauli matrices needed to express $M$. 
    In Appendix~\ref{app:reduce-aux-overhead}, we outline how the aforementioned 16 Pauli matrices can be combined so that $M$ can be expressed as a sum of four unitaries. This quadratic improvement halves the number of required auxiliary qubits. Although this is a constant improvement, it is important for the actual implementation of the algorithm. Additionally, in Appendix~\ref{app:reduce-aux-overhead}, we provide detailed pseudocode of the LCU-based algorithm for approximately implementing the map $e^{\mathcal{M}\Delta}$ using this fewer number of auxiliary qubits.

\subsection{WML Program States for Open TC Model}\label{sec:wml-program-states-tc}

To employ the WML algorithm for simulating the open TC model, governed by the Lindblad master equation in~\eqref{eqn:open_TC_master}, we first need to answer the following question related to the input model of this algorithm: What are choices for program states that encode the Lindblad operators $\sqrt{\kappa}a, \sqrt{\gamma}\sigma^-_1, \ldots, \sqrt{\gamma}\sigma^-_N  $ and the Hamiltonian $H_{\operatorname{TC}}$ of the open TC master equation? We answer this question in what follows.

Recall that the Hamiltonian for the open TC model with a coherent drive is as follows:
    \begin{multline} \omega_{C}a^{\dag}a+\sum_{j=1}^{N}\omega_i\,\sigma_{j}^{+}\sigma_{j}^- + g_{j}\left(\sigma_{j}^{+}a+\sigma_{j}^-a^{\dag}\right)\\
         + E_P\left(ae^{i\omega_{P}t} + a^\dagger e^{-i\omega_{P}t}\right).\!
    \end{multline}
    Now, let us break down this Hamiltonian into program states.
    From~\eqref{eqn:number_op_emitter} and~\eqref{eqn:number_op_cavity}, it is clear that the program state corresponding to the Hamiltonian term $\sigma_{j}^{+}\sigma_{j}^-$ is $|1\rangle\!\langle1|_{j}$ and those corresponding to $a^{\dag}a$ are $|01\rangle\!\langle01|$, $|10\rangle\!\langle10|$, and $|11\rangle\!\langle11|$. Applying a similar analysis on the interaction terms of the Hamiltonian, we get:
    \begin{multline}\label{eqn:inter_hamil_prog_state}
        a \otimes \sigma_{j}^{+}+ a^{\dag}\otimes \sigma_{j}^- =\Psi_{1}-\Psi_{2} + \sqrt{2}\left(\Psi_{3}-\Psi_{4}\right)  \\ + \sqrt{3}\left(\Psi_{5}-\Psi_{6}\right),
    \end{multline}
    where $\Psi_{p} \equiv|\Psi_{p}\rangle\!\langle\Psi_{p}|$, with $p\in\{1,\ldots, 6\}$, are the program states and
    \begin{align}
        &|\Psi_{1}\rangle \coloneqq \frac{1}{\sqrt{2}}\left(|001\rangle+|010\rangle\right),\quad|\Psi_{2}\rangle \coloneqq  I\otimes Z \otimes I\, |\Psi_{1}\rangle,\notag \\
        &|\Psi_{3}\rangle \coloneqq \frac{1}{\sqrt{2}}\left(|011\rangle+|100\rangle\right),\quad|\Psi_{4}\rangle \coloneqq  Z \otimes I \otimes I\, |\Psi_{3}\rangle,\notag \\
        &|\Psi_{5}\rangle \coloneqq \frac{1}{\sqrt{2}}\left(|101\rangle+|110\rangle\right),\quad |\Psi_{6}\rangle \coloneqq  I \otimes Z \otimes I\,|\Psi_{5}\rangle.
    \end{align}
    Likewise, the coherent cavity drive term, $ae^{i\omega_{C}t} + a^\dagger e^{-i\omega_{C}t}$, can be expressed as follows:
    \begin{equation}\label{eqn:coher_drive_prog_state}
         \left(\Phi_{1}-\Phi_{2}\right) + \sqrt{2}\left(\Phi_{3}-\Phi_{4}\right) + \sqrt{3}\left(\Phi_{5}-\Phi_{6}\right),
    \end{equation}
    where $\Phi_{p} \equiv|\Phi_{p}\rangle\!\langle\Phi_{p}|$, with $p\in\{1,\ldots, 6\}$, are the program states and
    \begin{align}
        &|\Phi_{1}\rangle \coloneqq \frac{1}{\sqrt{2}}\left(|00\rangle+e^{-i\omega_Ct}|01\rangle\right),\quad |\Phi_{2}\rangle \coloneqq  I\otimes Z\, |\Phi_{1}\rangle,\notag \\
        &|\Phi_{3}\rangle \coloneqq \frac{1}{\sqrt{2}}\left(|01\rangle+e^{-i\omega_Ct}|10\rangle\right),\quad |\Phi_{4}\rangle \coloneqq  Z \otimes I\, |\Phi_{3}\rangle,\notag \\
        &|\Phi_{5}\rangle \coloneqq \frac{1}{\sqrt{2}}\left(|10\rangle+e^{-i\omega_Ct}|11\rangle\right),\quad |\Phi_{6}\rangle \coloneqq  I \otimes Z\,|\Phi_{5}\rangle.
    \end{align}
    
For the program states associated with the Lindblad operators, we can apply the operators in~\eqref{eqn:annihil_cavity} and~\eqref{eqn:annihil_emitter} to the definition of program states in~\eqref{def:WML_prog_state} to obtain
\begin{align}
    (a \otimes I)|\Gamma\rangle & = |0\rangle|1\rangle,\\
    (\sigma_j^{-} \otimes I)|\Gamma\rangle & = |0\rangle|1\rangle,
\end{align}
for all $j \in [N]$. 

It is important to note that all the program states mentioned in this section are easy to prepare.

\section{Gate Complexity}\label{sec:cc}
In this section, we investigate the gate complexities of the WML and Split $J$-Matrix algorithms for implementing the map $e^{\mathcal{L}t}$, where $\mathcal{L}$ is the Lindbladian defined in~\eqref{eqn:general_master}. By gate complexity, we mean the total number of one- and two-qubit gates required to implement these algorithms. Note that the results of this section are applicable for general Lindbladians, beyond the open TC model.

\subsection{Gate Complexity of the Split \texorpdfstring{$J$}{J}-Matrix Algorithm}\label{sec:gate-compl-split-J-matrix}

With the notations and definitions introduced in Section~\ref{sec:methods_split_j_matrix}, we can rewrite the Split $J$-Matrix algorithm in the quantum channel form in the following way:
\begin{multline}
    \Bigg(\left(\prod_{p=1}^{P} e^{\mathcal{H}_{p}t/2n}\right)\circ \left(\prod_{q=1}^{Q} e^{\mathcal{H}'_{ q}t/n}\right) \circ \left(\prod_{p=1}^{P} e^{\mathcal{H}_{p}t/2n}\right)\\ \circ \mathcal{J}_{1}(t/n)\circ\cdots\circ\mathcal{J}_{K}(t/n) \Bigg)^{\circ n}\label{eq:J-matrix-channel-form},
\end{multline}
where
    \begin{align}
        \mathcal{J}_{k}(t/n)(\rho)& \coloneqq\operatorname{Tr}_{A}\!\left[e^{-iJ_{k}\sqrt{t/n}}\rho e^{iJ_{k}\sqrt{t/n}}\right],\label{def:general_h_for_j}\\
        J_{k}& \coloneqq L^\dagger_{k}\otimes |0\rangle\!\langle 1|_{A} + L_{k}\otimes |1\rangle\!\langle 0|_{A}, \label{def:general_j_for_l}
    \end{align}  
for all $k \in [K]$, and $A$ is a single-qubit auxiliary system.
Note that for simulating the coherent part of the Lindbladian $\mathcal{L}$ in~\eqref{eq:sum_coh_dis}, the Split $J$-Matrix algorithm employs the second-order Trotterization for first splitting $\mathcal{H} + \mathcal{H}'$ and then the first-order Trotterization for splitting the summands in $\mathcal{H}$ and in $\mathcal{H}'$. However, for our purposes and to simplify the analysis, we employ the first-order Trotterization for all three aforementioned splittings. Note that a similar analysis can be performed for the second-order Trotterization, or any higher-order Trotterization, but we leave that for future work. To this end, the Split $J$-Matrix algorithm in quantum channel form is
\begin{multline}
    \Bigg(\left(\prod_{p=1}^{P} e^{\mathcal{H}_{p}t/n}\right)\circ \left(\prod_{q=1}^{Q} e^{\mathcal{H}'_{ q}t/n}\right)\\ \circ \mathcal{J}_{1}(t/n)\circ\cdots\circ\mathcal{J}_{K}(t/n) \Bigg)^{\circ n}\label{eq:sJm-channel-1},
\end{multline}
which can be written more compactly as follows:
\begin{align}
    \left(e^{\mathcal{H}t/n}\circ \left(\prod_{q=1}^{Q} e^{\mathcal{H}'_{ q}t/n}\right) \circ \mathcal{J}_{1}(t/n)\circ\cdots\circ\mathcal{J}_{K}(t/n) \right)^{\circ n}\label{eq:sJm-channel}
\end{align}
due to the following fact:
\begin{equation}
    e^{\mathcal{H}t/n} = \prod_{p=1}^{P} e^{\mathcal{H}_{p}t/n}.
\end{equation}

\begin{theorem}[Gate complexity of the Split $J$-Matrix algorithm]\label{thm:split_j_matrix} 
        Let $\mathcal{L}$ be a Lindbladian, as defined in~\eqref{eq:sum_coh_dis} such that the Lindblad operators $L_{1}, L_{2}, \ldots, L_{K}$ commute with each other. The Split $J$-Matrix algorithm, represented as a quantum channel in~\eqref{eq:sJm-channel}, uses the following number of one- and two-qubit gates such that it is $\varepsilon$-close to the target channel $e^{\mathcal{L}t}$ in normalized diamond distance: 
        \begin{equation}
            O\!\left(\frac{ (P+Q+K)(K^2 + Q^2)\lambda_{\max}^2 t^2}{\varepsilon}\right),
        \end{equation}
        where $\lambda_{\max}$ is defined in~\eqref{eq:split-lambda}.
    \end{theorem}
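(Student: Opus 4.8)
The plan is to split the argument into two nearly independent pieces: a per-iteration error analysis that justifies the choice of $n$ in~\eqref{eq:comp-split-J-mat}, and a per-iteration gate count, whose product gives the stated complexity. The two are glued together by subadditivity of the normalized diamond distance under composition of quantum channels, which lets me promote a bound on a single step of~\eqref{eq:sJm-channel} to a bound on all $n$ steps against the ideal evolution $e^{\mathcal{L}t}=(e^{\mathcal{L}t/n})^{\circ n}$.

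For the gate count, within one iteration of~\eqref{eq:sJm-channel} I apply the $P$ local unitaries $e^{-iH_p t/n}$, the $Q$ local unitaries $e^{-iH'_q t/n}$, and the $K$ dissipative blocks $\mathcal{J}_k(t/n)$, each of the latter consisting of preparing one auxiliary qubit in $|0\rangle$, applying the unitary $e^{-iJ_k\sqrt{t/n}}$, and discarding that qubit. By the locality hypotheses every $H_p$, $H'_q$, and $J_k$ acts on a constant number of qubits ($J_k$ on the support of $L_k$ together with one auxiliary qubit), so each unitary admits an exact decomposition into $O(1)$ one- and two-qubit gates. Hence a single iteration costs $O(P+Q+K)$ gates.

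For the error, writing $\tau\coloneqq t/n$, I would bound $\tfrac12\|\Phi_\tau-e^{\mathcal{L}\tau}\|_\diamond$ for the one-step channel $\Phi_\tau$ by splitting it into (i) the cost of replacing each $\mathcal{J}_k(\tau)$ by $e^{\mathcal{N}_k\tau}$, and (ii) the first-order Trotter error of $e^{\mathcal{H}\tau}\bigl(\prod_q e^{\mathcal{H}'_q\tau}\bigr)\bigl(\prod_k e^{\mathcal{N}_k\tau}\bigr)$ relative to $e^{\mathcal{L}\tau}$. For (i), I Taylor-expand $e^{-iJ_k\sqrt\tau}$ acting on the system tensored with the auxiliary qubit in $|0\rangle$. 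Since $J_k$ flips the auxiliary label, a parity argument shows that only terms whose left and right powers of $J_k$ have equal parity survive the partial trace; consequently only integer powers of $\tau$ appear, the $O(\tau)$ term reproduces exactly $\rho+\tau\,\mathcal{N}_k(\rho)$, and therefore $\tfrac12\|\mathcal{J}_k(\tau)-e^{\mathcal{N}_k\tau}\|_\diamond=O(\tau^2\|L_k\|^4)=O(\tau^2\lambda_{\max}^2)$, telescoping over $k$ to $O(K\tau^2\lambda_{\max}^2)$. For (ii), I invoke the standard first-order Trotter bound for Lindbladians, controlling the error by $\tau^2$ times the sum of diamond norms of pairwise commutators of the generators $\{\mathcal{H},\mathcal{H}'_1,\dots,\mathcal{H}'_Q,\mathcal{N}_1,\dots,\mathcal{N}_K\}$. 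Using that the $H_p$ mutually commute (so $e^{\mathcal{H}\tau}=\prod_p e^{\mathcal{H}_p\tau}$ is exact and contributes no internal splitting error) and that the $L_k$ mutually commute (annihilating many commutators), the dominant surviving contributions are the $O(Q^2)$ pairs among the non-commuting $\mathcal{H}'_q$ and the $O(K^2)$ pairs among the $\mathcal{N}_k$, with cross terms absorbed into $O(K^2+Q^2)$; each commutator has norm $O(\lambda_{\max}^2)$, so (ii) is $O((K^2+Q^2)\tau^2\lambda_{\max}^2)$.

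Combining (i) and (ii), the per-step error is $O((K^2+Q^2)\lambda_{\max}^2 t^2/n^2)$; summing over $n$ steps gives $O((K^2+Q^2)\lambda_{\max}^2 t^2/n)$, and requiring this to be at most $\varepsilon$ recovers exactly $n=O((K^2+Q^2)\lambda_{\max}^2 t^2/\varepsilon)$ as in~\eqref{eq:comp-split-J-mat}. Multiplying $n$ by the per-step gate count $O(P+Q+K)$ then yields $O\!\bigl((P+Q+K)(K^2+Q^2)\lambda_{\max}^2 t^2/\varepsilon\bigr)$, as claimed. I expect the main obstacle to be part (ii): rigorously establishing the first-order Trotter error bound at the level of the (non-unital, contractive) Lindbladian superoperators, and carefully bookkeeping which commutators vanish under the commuting-operator hypotheses so as to obtain the clean $(K^2+Q^2)$ scaling rather than a looser count that would, for instance, drag in a factor of $P$ through $\|\mathcal{H}\|_\diamond$. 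The parity/Taylor argument in (i) is conceptually routine but must be carried out to the correct order to confirm that the error is genuinely $O(\tau^2)$ and not $O(\tau^{3/2})$.
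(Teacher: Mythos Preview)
Your overall architecture---per-step gate count $O(P+Q+K)$, per-step error $O(\tau^2)$, and subadditivity over $n$ iterations---matches the paper's, and your treatment of (i) via the parity/Taylor expansion of $e^{-iJ_k\sqrt\tau}$ is exactly what the paper does in the ``second term'' of its analysis. The gap is in (ii), and it is precisely the obstacle you flag at the end but do not resolve. Your sentence ``the dominant surviving contributions are the $O(Q^2)$ pairs among the non-commuting $\mathcal{H}'_q$ and the $O(K^2)$ pairs among the $\mathcal{N}_k$'' is internally inconsistent: you have just argued that the $L_k$ commute, which (as the paper uses explicitly) forces $[\mathcal{N}_j,\mathcal{N}_k]=0$, so those $\binom{K}{2}$ commutators contribute nothing. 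Meanwhile, a genuine commutator bookkeeping on the list $\{\mathcal{H},\mathcal{H}'_1,\dots,\mathcal{H}'_Q,\mathcal{N}_1,\dots,\mathcal{N}_K\}$ picks up the cross-terms $[\mathcal{H},\mathcal{H}'_q]$ and $[\mathcal{H},\mathcal{N}_k]$; since $\|\mathcal{H}\|_\diamond\le 2P\lambda_{\max}$, these give $O((PQ+PK)\lambda_{\max}^2\tau^2)$, and you cannot ``absorb'' that into $K^2+Q^2$ without an extra assumption. So as written, the commutator route does not deliver the stated $(K^2+Q^2)$.

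The paper avoids this by \emph{not} using a commutator-based Trotter estimate for the coarse split. It inserts the intermediate channel $e^{\mathcal{H}\tau}\circ e^{\mathcal{H}'\tau}\circ e^{\mathcal{N}\tau}$ (three blocks, using $e^{\mathcal{N}\tau}=\prod_k e^{\mathcal{N}_k\tau}$ from commutativity) and bounds $\|e^{\mathcal{L}\tau}-e^{\mathcal{H}\tau}\circ e^{\mathcal{H}'\tau}\circ e^{\mathcal{N}\tau}\|_\diamond$ by a brute-force Taylor-tail comparison: both series agree to first order in $\tau$, and every higher-order term is dominated by $\|\mathcal{L}\|_{\max}^r\tau^r$ with $\|\mathcal{L}\|_{\max}\coloneqq 2K\lambda_{\max}$ taken as a uniform upper bound on the block norms. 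Summing the geometric tail gives $O(K^2\lambda_{\max}^2\tau^2)$, and this is where the $K^2$ actually originates---from the \emph{block norm} $\|\mathcal{N}\|_\diamond\le 2K\lambda_{\max}$ squared, not from counting $\mathcal{N}_k$-pair commutators. The $Q^2$ then enters separately, as a third term $\|e^{\mathcal{H}'\tau}-\prod_q e^{\mathcal{H}'_q\tau}\|_\diamond$, for which the paper cites the standard first-order Hamiltonian Trotter bound. If you want to salvage the commutator route, you would need either to assume $P\le K$ (tacit in the paper's choice $\|\mathcal{L}\|_{\max}=2K\lambda_{\max}$ and true in the TC model where $P=Q=K=N$) or to replace $K^2+Q^2$ by $\max(P,Q,K)^2$ in the statement.
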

    \begin{proof}
        See Appendix~\ref{app:split-J-matrix-analysis}.
    \end{proof}

Specifically, for the open TC model, we have $P=Q=K=N$, where $N$ is the number of emitters. Therefore, the above theorem directly implies the following result:
\begin{corollary}
    The Split $J$-Matrix algorithm uses the following number of one- and two-qubit gates to approximate the open TC model dynamics with one cavity and $N$ emitters:
\begin{equation}
    O\!\left(\frac{ N^3\lambda_{\max}^2 t^2}{\varepsilon}\right),
\end{equation}
where $\varepsilon$ is the approximation error in normalized diamond distance.
\end{corollary}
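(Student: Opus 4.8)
The plan is to specialize the general bound of Theorem~\ref{thm:split_j_matrix}, namely $O((P+Q+K)(K^2+Q^2)\lambda_{\max}^2 t^2/\varepsilon)$, to the open TC model by reading off the three counting parameters $P$, $Q$, and $K$ from the structure of $H_{\operatorname{TC}}$ and the Lindblad operators in~\eqref{eqn:open_TC_master}, and then verifying that the commutativity hypothesis of the theorem is satisfied. Since the bound is monotone in each of $P$, $Q$, and $K$, it suffices to show that each of these scales as $O(N)$; the final $O(N^3)$ then follows by substitution.

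First I would decompose the coherent part. The terms $\omega_{C}a^{\dag}a$ and $\omega_j\,\sigma_j^+\sigma_j^-$ (for $j\in[N]$) are all diagonal in the computational basis under the mapping of Section~\ref{sec:Methods} (see~\eqref{eqn:number_op_emitter} and~\eqref{eqn:number_op_cavity}) and act on registers that are either disjoint or identical, so they mutually commute; these form the set $\{H_p\}_{p=1}^{P}$ with $P=N+1=O(N)$. The interaction terms $g_j(\sigma_j^+ a + \sigma_j^- a^\dag)$ each couple the cavity to a single emitter, and because distinct such terms share the non-commuting cavity operators $a,a^\dag$, they do not mutually commute; these form the set $\{H'_q\}_{q=1}^{Q}$ with $Q=N$. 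The optional pump term of~\eqref{def:TC_pump} acts only on the cavity and contributes a single additional summand to the non-commuting set, leaving $Q=O(N)$ unchanged. Each of these Hamiltonian summands acts on a constant number of qubits under the mapping of Section~\ref{sec:Methods}, so the locality assumptions underlying Theorem~\ref{thm:split_j_matrix} are met.

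Next I would count the dissipative part and discharge the theorem's hypothesis. The Lindblad operators are $\sqrt{\kappa}\,a$ together with $\sqrt{\gamma}\,\sigma_j^-$ for $j\in[N]$, giving $K=N+1=O(N)$. To invoke Theorem~\ref{thm:split_j_matrix} I must confirm that these operators mutually commute: $a$ acts only on the cavity register and each $\sigma_j^-$ acts only on emitter $j$, so any two of them act on disjoint registers and therefore commute. This verifies the commutativity hypothesis. Finally, substituting $P,Q,K=O(N)$ into the bound of Theorem~\ref{thm:split_j_matrix} gives $P+Q+K=O(N)$ and $K^2+Q^2=O(N^2)$, whose product is $O(N^3)$, yielding the gate count $O(N^3\lambda_{\max}^2 t^2/\varepsilon)$.

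The only subtlety I anticipate is the bookkeeping of the commutation structure, specifically the correct assignment of each Hamiltonian summand to the commuting set $\{H_p\}$ or the non-commuting set $\{H'_q\}$, and the confirmation that the cavity and emitter decay operators genuinely commute so that the hypothesis of Theorem~\ref{thm:split_j_matrix} applies. Since every count differs from $N$ by at most an additive constant, none of these details affect the final $O(N^3)$ scaling, and the corollary follows directly from the theorem.
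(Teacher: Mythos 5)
Your proposal is correct and follows essentially the same route as the paper, which obtains the corollary by direct substitution into Theorem~\ref{thm:split_j_matrix} with the one-line observation that $P=Q=K=N$ for the open TC model. Your additional bookkeeping---assigning the diagonal terms $\omega_C a^\dag a$ and $\omega_j\sigma_j^+\sigma_j^-$ to the commuting set, the interaction and pump terms to the non-commuting set, and verifying that the Lindblad operators $\sqrt{\kappa}\,a$ and $\sqrt{\gamma}\,\sigma_j^-$ commute because they act on disjoint registers---is sound and in fact discharges hypotheses the paper leaves implicit, with the additive-constant discrepancies ($N+1$ versus $N$) irrelevant to the asymptotic $O(N^3)$ bound.
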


\subsection{Gate Complexity of the WML Algorithm}

Recall that the WML algorithm assumes that the Hamiltonian $H$ is given as a linear combination of quantum states $\{\sigma_j\}_{j=1}^J$, as shown in~\eqref{eqn:program_states}, and that each Lindblad operator $L_k$ is given encoded in a pure state $\ket{\psi_k}$, as shown in~\eqref{def:WML_prog_state}. Step 1 of the WML algorithm is the sampling step where the state $\sigma_j$ is sampled with probability $\frac{c_j}{c}$ (Case~1), the state $\sigma_j$ is sampled with probability $\frac{(-c_j)}{c}$ (Case~2), and the state $\psi_k$ is sampled with probability $\frac{\left \Vert L_k\right\Vert^2_2}{c}$ (Case~3), where $c$ is defined in~\eqref{def:WML_c}. 
Step~2 is simply initializing the program register with the sampled state. Note that the system register is in the state $\rho$. Depending on the case, Step~2 can be represented as the following appending channels, defined for all $j\in [J]$ and $k \in [K]$:
\begin{align}
    \operatorname{(Case\ 1\ and\ Case\ 2):} & \quad \mathcal{P}_{1, j}(\rho) \coloneqq \rho \otimes \sigma_j\\
    \operatorname{(Case\ 3):} & \quad \mathcal{P}_{2, k}(\rho) \coloneqq \rho \otimes \psi_k.
\end{align}
Step~3 of the algorithm involves applying one of the following three quantum channels jointly to the system and program registers, also depending on the case:
\begin{align}
    \operatorname{(Case\ 1):}& \quad e^{\mathcal{N}_1 c\tau}(\rho \otimes \sigma_j) \\
    \operatorname{(Case\ 2):}& \quad e^{\mathcal{N}_2 c\tau}(\rho \otimes \sigma_j)\\
    \operatorname{(Case\ 3):}& \quad e^{\mathcal{M} c\tau}(\rho \otimes \psi_k),
\end{align}
where 
\begin{align}
    \mathcal{N}_1(\cdot) & \coloneqq -i[\operatorname{SWAP}, \cdot]\\
    \mathcal{N}_2(\cdot) & \coloneqq i[\operatorname{SWAP}, \cdot]\\
    \mathcal{M}(\cdot) & \coloneqq M(\cdot)M^{\dagger} -\frac{1}{2}\left\{M^{\dagger}M, \cdot\right\},
\end{align}
with the Lindblad operator $M$ defined as:
\begin{equation}
    M = \frac{1}{\sqrt{Q}}\left(I_{1}\otimes |\Gamma\rangle\!\langle\Gamma|_{23}\right)\left(\operatorname{SWAP}_{12}\otimes I_3\right),
\end{equation}
where $Q \coloneqq 2^q$. Finally, Step~4 of the algorithm is to trace out the program register, and we repeat all the above-mentioned steps $n$ times. 

We represent each iteration of the above algorithm, i.e., Steps 1 to 4, as a quantum channel $\mathcal{A}_{\operatorname{WML, \tau}}^{\operatorname{(ideal)}}$, where $\tau \coloneqq t/n$. This channel is defined as follows:
\begin{align}
    \mathcal{A}_{\operatorname{WML, \tau}}^{\operatorname{(ideal)}} & \coloneqq
    \sum_{j: c_j>0} \frac{c_j}{c} \operatorname{Tr}_{2}\circ~ e^{\mathcal{N}_1 c\tau} \circ \mathcal{P}_{1, j} \notag \\
    & \qquad+  \sum_{j: c_j<0} \frac{(-c_j)}{c} \operatorname{Tr}_{2}\circ~ e^{\mathcal{N}_2 c\tau} \circ \mathcal{P}_{1, j}\notag\\
    & \qquad + \sum_{k} \frac{\left \Vert L_k\right\Vert_2^2}{c} \operatorname{Tr}_{23}\circ~ e^{\mathcal{M} c\tau} \circ \mathcal{P}_{2, k}.\label{eq:ideal-WML-iterate} 
\end{align}
This implies that the entire algorithm can be expressed as the composition of the above channel $n$ times: 
\begin{equation}
    \left (\mathcal{A}_{\operatorname{WML, \tau}}^{\operatorname{(ideal)}} \right)^{\!\circ n}\label{eq:ideal-WML}.
\end{equation}
Note that we use a superscript ``ideal" because we assume that the channels $e^{\mathcal{N}_1 c\tau}, e^{\mathcal{N}_2 c\tau}, $ and $e^{\mathcal{M} c\tau}$ can be implemented exactly without any errors. However, this assumption is not practical. While the channels $e^{\mathcal{N}_1 c\tau}$ and $ e^{\mathcal{N}_2 c\tau}$ can be implemented exactly in principle because they are unitary channels, the same cannot be said for the non-unitary Lindbladian channel $e^{\mathcal{M} c\tau}$.

As mentioned in Section~\ref{sec:wml-practical}, we implement the Lindbladian channel $e^{\mathcal{M} c\tau}$ using an LCU-based algorithm introduced in~\cite{cleve2019efficient}. Let us represent this algorithm as a quantum channel $\mathcal{R}_{c\tau}$. Additionally, we represent  this version of the WML algorithm that employs algorithm $\mathcal{R}_{c\tau}$ as a subroutine for implementing $e^{\mathcal{M} c\tau}$ as
\begin{equation}
    \left (\mathcal{A}_{\operatorname{WML, \tau}}^{\operatorname{(LCU)}} \right)^{\!\circ n},\label{eq:practical-WML}
\end{equation}
where
\begin{align}
    \mathcal{A}_{\operatorname{WML, \tau}}^{\operatorname{(LCU)}} & \coloneqq
    \sum_{j: c_j>0} \frac{c_j}{c} \operatorname{Tr}_{2}\circ~ e^{\mathcal{N}_1 c\tau} \circ \mathcal{P}_{1, j} \notag \\
    & \qquad +  \sum_{j: c_j<0} \frac{(-c_j)}{c} \operatorname{Tr}_{2}\circ~ e^{\mathcal{N}_2 c\tau} \circ \mathcal{P}_{1, j} \notag \\
    & \qquad+ \sum_{k} \frac{\left \Vert L_k\right\Vert_2^2}{c} \operatorname{Tr}_{23}\circ~\mathcal{R}_{c\tau} \circ \mathcal{P}_{2, k}\label{eq:practical-wml-iterate}
\end{align}

\begin{theorem}[Gate complexity of the LCU-based WML algorithm]
\label{thm:gate-comp-WML}
Let $\mathcal{L}$ be a Lindbladian as defined in~\eqref{eqn:general_master}. The LCU-based WML algorithm, represented as a quantum channel in~\eqref{eq:practical-WML}, uses the following number of one- and two-qubit gates such that it is $\varepsilon$-close to the target channel $e^{\mathcal{L}t}$ in normalized diamond distance:
\begin{equation}
    O\!\left(\frac{c^2 t^2 \ln^2(ct/\varepsilon)}{\varepsilon \ln\ln(ct /\varepsilon)}\right),
\end{equation}
where $c$ is defined in~\eqref{def:WML_c}.
\end{theorem}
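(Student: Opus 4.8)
The plan is to control the total error with a triangle inequality that isolates the cost of replacing the exact fixed interaction $e^{\mathcal{M}c\tau}$ by its LCU realization, and then to multiply the gate cost of one iterate by the number of iterations $n=O(c^2t^2/\varepsilon)$ fixed in Algorithm~\ref{algo:wml}. Writing $V_n\coloneqq(\mathcal{A}^{(\mathrm{ideal})}_{\mathrm{WML},\tau})^{\circ n}$ and $W_n\coloneqq(\mathcal{A}^{(\mathrm{LCU})}_{\mathrm{WML},\tau})^{\circ n}$ for the ideal and LCU-based $n$-fold iterates of~\eqref{eq:ideal-WML-iterate} and~\eqref{eq:practical-wml-iterate}, the triangle inequality gives $\tfrac{1}{2}\Vert W_n-e^{\mathcal{L}t}\Vert_\diamond\le\tfrac{1}{2}\Vert V_n-e^{\mathcal{L}t}\Vert_\diamond+\tfrac{1}{2}\Vert W_n-V_n\Vert_\diamond$, so it suffices to drive each summand below $\varepsilon/2$.

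For the first summand I would directly invoke the convergence guarantee for the ideal sampling-based WML algorithm established in~\cite{WMLpartI,WMLpartII}: with $n=O(c^2t^2/\varepsilon)$ and $\tau=t/n$, one has $\tfrac{1}{2}\Vert V_n-e^{\mathcal{L}t}\Vert_\diamond\le\varepsilon/2$. This is exactly the iteration count already quoted in Algorithm~\ref{algo:wml}, so no fresh analysis is needed here beyond the citation.

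For the second summand I would use subadditivity of the normalized diamond distance under composition, $\tfrac{1}{2}\Vert W_n-V_n\Vert_\diamond\le n\cdot\tfrac{1}{2}\Vert\mathcal{A}^{(\mathrm{LCU})}_{\mathrm{WML},\tau}-\mathcal{A}^{(\mathrm{ideal})}_{\mathrm{WML},\tau}\Vert_\diamond$, and observe that the two iterates differ only in their Case-3 branch. Because the appending maps $\mathcal{P}_{2,k}$ and the partial traces are channels (hence diamond-contractive), the probabilities $\Vert L_k\Vert_2^2/c$ sum to at most one, and---crucially---the fixed interaction $\mathcal{M}$ is \emph{system-independent}, i.e.\ the same channel appears in every branch, the per-iterate discrepancy collapses to $\tfrac{1}{2}\Vert\mathcal{R}_{c\tau}-e^{\mathcal{M}c\tau}\Vert_\diamond=:\delta$. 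Requiring $n\delta\le\varepsilon/2$ then fixes the target per-step accuracy $\delta=\Theta(\varepsilon/n)=\Theta(\varepsilon^2/(c^2t^2))$, so that $\log(1/\delta)=\Theta(\log(ct/\varepsilon))$.

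It remains to count gates. Cases~1 and~2 apply $e^{\pm i\operatorname{SWAP}c\tau}$ on a constant number of qubits, costing $O(1)$ gates apiece. For Case~3 I would invoke the gate complexity of the LCU-based subroutine of~\cite{cleve2019efficient} realized in the referenced appendices: since $M$ acts on a constant number of qubits, implementing $\mathcal{R}_{c\tau}$ to diamond error $\delta$ costs $O\!\left(\log^2(1/\delta)/\log\log(1/\delta)\right)$ one- and two-qubit gates, with no dependence on the system size. Substituting the value of $\delta$ yields a per-iterate cost $O\!\left(\log^2(ct/\varepsilon)/\log\log(ct/\varepsilon)\right)$, and multiplying by $n=O(c^2t^2/\varepsilon)$ gives the claimed bound. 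The main obstacle is this last step: establishing the $O\!\left(\log^2(1/\delta)/\log\log(1/\delta)\right)$ per-segment LCU cost with constants independent of $N$, and cleanly separating the synthesis error from the unavoidable first-order truncation error $\tfrac{1}{2}\Vert\mathcal{M}_\Delta-e^{\mathcal{M}\Delta}\Vert_\diamond=O(\Delta^2)$ with $\Delta=c\tau$; the latter is of the same order as the per-step budget and must be absorbed into the constant hidden in $n$ (equivalently, by splitting $\varepsilon$ across the convergence, truncation, and synthesis errors) rather than suppressed at fixed $n$.
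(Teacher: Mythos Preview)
Your proposal is correct and follows essentially the same two-part triangle-inequality decomposition as the paper's proof: bound $\tfrac{n}{2}\Vert e^{\mathcal{L}\tau}-\mathcal{A}^{(\mathrm{ideal})}_{\mathrm{WML},\tau}\Vert_\diamond$ to fix $n=O(c^2t^2/\varepsilon)$, then bound $\tfrac{n}{2}\Vert\mathcal{A}^{(\mathrm{ideal})}_{\mathrm{WML},\tau}-\mathcal{A}^{(\mathrm{LCU})}_{\mathrm{WML},\tau}\Vert_\diamond$ by collapsing the per-iterate discrepancy to $\tfrac{1}{2}\Vert e^{\mathcal{M}c\tau}-\mathcal{R}_{c\tau}\Vert_\diamond$ and invoking the gate count from~\cite{cleve2019efficient}. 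The one substantive difference is that for the first summand you simply cite~\cite{WMLpartI,WMLpartII}, whereas the paper re-derives the sample-complexity bound from scratch via explicit Taylor expansions of each branch of~\eqref{eq:ideal-WML-iterate} (this also tracks the constant-size dimension $Q$ of the program register through the bound $\Vert\mathcal{M}\Vert_\diamond\le 2Q$). Your citation-based shortcut is perfectly valid; the paper's explicit derivation buys a self-contained argument with visible constants. Your worry in the final paragraph about separating the first-order truncation error $O(\Delta^2)$ from the synthesis error is already handled internally by Theorem~1 of~\cite{cleve2019efficient}, which the paper invokes directly---so that ``obstacle'' dissolves once you appeal to the black-box guarantee rather than to the single-segment map $\mathcal{M}_\Delta$ of~\eqref{eqn:lcu_approx_wml}.
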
 
\begin{proof}
See Appendix~\ref{app:wml_complexity}.
\end{proof}

The above theorem and the relation between $c$ and $N$ (the number of emitters), which we show to be $c = O(N)$ in Appendix~\ref{app:constants}, implies the following result:
\begin{corollary} The LCU-based WML algorithm uses the following number of one- and two-qubit gates to approximate the open TC model dynamics with one cavity and $N$ emitters:
\begin{equation}
    O\!\left(\frac{N^2 t^2 \ln^2(Nt/\varepsilon)}{\varepsilon \ln\ln(Nt /\varepsilon)}\right),
\end{equation}
where $\varepsilon$ is the approximation error in normalized diamond distance.
\end{corollary}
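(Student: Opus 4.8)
The plan is to split the total simulation error into two pieces measured in normalized diamond distance---the error of the \emph{ideal} WML iteration $\left(\mathcal{A}_{\operatorname{WML},\tau}^{\operatorname{(ideal)}}\right)^{\circ n}$ in~\eqref{eq:ideal-WML} relative to the target $e^{\mathcal{L}t}$, and the extra error introduced by replacing each exact fixed-interaction channel $e^{\mathcal{M}c\tau}$ by its LCU realization $\mathcal{R}_{c\tau}$, yielding the practical iterate $\mathcal{A}_{\operatorname{WML},\tau}^{\operatorname{(LCU)}}$ in~\eqref{eq:practical-wml-iterate}---and then to count gates once both pieces are forced below $\varepsilon$. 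The first piece is already handled by the correctness guarantee of the WML algorithm from~\cite{WMLpartI,WMLpartII}: taking $n = O(c^2 t^2/\varepsilon)$ as prescribed in Algorithm~\ref{algo:wml} makes $\tfrac{1}{2}\left\Vert \left(\mathcal{A}_{\operatorname{WML},\tau}^{\operatorname{(ideal)}}\right)^{\circ n} - e^{\mathcal{L}t}\right\Vert_{\diamond} \le \varepsilon/2$, so the remaining work is to bound the LCU-replacement error and convert the per-segment precision into a gate count.

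For the replacement error I would compare $\mathcal{A}_{\operatorname{WML},\tau}^{\operatorname{(LCU)}}$ and $\mathcal{A}_{\operatorname{WML},\tau}^{\operatorname{(ideal)}}$ summand by summand using their explicit forms in~\eqref{eq:practical-wml-iterate} and~\eqref{eq:ideal-WML-iterate}. The two channels agree except in the Case~3 terms, where $\mathcal{R}_{c\tau}$ stands in for $e^{\mathcal{M}c\tau}$; since the Case~3 weights $\left\Vert L_k\right\Vert_2^2/c$ sum to at most $1$, and since appending a program state and the partial traces $\operatorname{Tr}_{2},\operatorname{Tr}_{23}$ are channels and hence contractive under the diamond norm, the single-iteration error is at most $\delta \coloneqq \max_k \tfrac{1}{2}\left\Vert \mathcal{R}_{c\tau} - e^{\mathcal{M}c\tau}\right\Vert_{\diamond}$. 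Invoking subadditivity of the diamond distance under composition, the $n$-fold iterates differ by at most $n\delta$, so demanding $n\delta \le \varepsilon/2$ forces $\delta = O(\varepsilon/n) = O(\varepsilon^2/(c^2 t^2))$; the two pieces together then give total error at most $\varepsilon$.

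The gate count follows by invoking the LCU-based Lindbladian-simulation result of~\cite{cleve2019efficient} applied to the \emph{single, system-independent} fixed interaction $\mathcal{M}$ generated by $M$ in~\eqref{eqn:fixed_interaction_WML}. The key structural fact is that $M$ acts on a constant number of qubits with $\left\Vert M\right\Vert = O(1)$ and admits a fixed LCU expansion into a constant number of unitaries (four, after the reduction of Appendix~\ref{app:reduce-aux-overhead}); consequently the segment ``time'' $c\tau = ct/n$ is tiny and the truncation order of the LCU series needed for per-segment error $\delta$ is controlled purely by the precision, namely $m = O\!\left(\tfrac{\log(1/\delta)}{\log\log(1/\delta)}\right)$, with a per-segment gate cost of order $O\!\left(\tfrac{\log^2(1/\delta)}{\log\log(1/\delta)}\right)$. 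Multiplying this by $n = O(c^2 t^2/\varepsilon)$ and substituting $1/\delta = O(c^2 t^2/\varepsilon^2)$, so that $\log(1/\delta) = O(\log(ct/\varepsilon))$ and $\log\log(1/\delta) = O(\log\log(ct/\varepsilon))$, collapses to the advertised $O\!\left(\tfrac{c^2 t^2 \ln^2(ct/\varepsilon)}{\varepsilon \ln\ln(ct/\varepsilon)}\right)$.

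The main obstacle I anticipate is the error-propagation bookkeeping of the middle step: one must verify rigorously that substituting $\mathcal{R}_{c\tau}$ for $e^{\mathcal{M}c\tau}$ inside one iteration and then composing $n$ iterations accumulates error only \emph{linearly} in $n$, which rests on monotonicity/contractivity of the diamond norm under the appending and partial-trace channels together with the triangle inequality for composed maps. A secondary subtlety is extracting the precise $\ln^2/\ln\ln$ dependence from the Cleve--Wang analysis: one must confirm that the per-segment cost carries the extra logarithmic factor (from the \textsc{prepare}/\textsc{select} compilation and gate synthesis of the truncated series) so that it scales as $m\log(1/\delta)$ rather than $m$ or $m^2$, while checking that every source of system-size dependence has already been absorbed into $n$ through the constant $c$ defined in~\eqref{def:WML_c}.
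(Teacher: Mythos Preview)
Your proposal correctly reproduces the two-part error-splitting strategy the paper uses in Appendix~\ref{app:wml_complexity} to prove Theorem~\ref{thm:gate-comp-WML}: bound the ideal-WML error against $e^{\mathcal{L}t}$ to fix $n=O(c^2t^2/\varepsilon)$, then bound the per-iteration LCU-replacement error by $\tfrac{1}{2}\lVert e^{\mathcal{M}c\tau}-\mathcal{R}_{c\tau}\rVert_\diamond$ via contractivity of the appending and partial-trace channels, accumulate linearly in $n$, and invoke \cite{cleve2019efficient} for the per-segment gate cost. That part is essentially identical to the paper's argument.

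However, you stop at the $c$-dependent bound $O\!\left(\tfrac{c^2 t^2 \ln^2(ct/\varepsilon)}{\varepsilon \ln\ln(ct/\varepsilon)}\right)$, which is the content of Theorem~\ref{thm:gate-comp-WML}, not the corollary. The corollary's actual content is the substitution $c=O(N)$ for the open TC model, which the paper establishes separately in Appendix~\ref{app:constants} by summing the program-state weights from Section~\ref{sec:wml-program-states-tc}: the $|c_j|$ from the Hamiltonian decomposition and the $\lVert L_k\rVert_2^2$ from the Lindblad operators together scale linearly in the number of emitters (for a fixed cavity truncation). Your final remark that ``every source of system-size dependence has already been absorbed into $n$ through the constant $c$'' gestures at this but does not carry it out; without that computation you have proved the theorem, not the corollary.
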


\section{Results}

\label{sec:Results}

    We developed our simulations using Qiskit v.0.45 and ran them on the QASM simulator from Qiskit-Aer, which is a noiseless quantum computer simulator.
    We then compare these simulations with simulations using the classical Lindblad master equation solver of QuTiP~\cite{QuTiP_1,QuTiP_2}. Finally, we use the Matplotlib Pyplot library to generate the figures in this section.

\subsection{Population Plots}

\label{subsec:population_plots}

    We first demonstrate that our algorithms accurately model the populations of the cavity and emitters over a given time interval. To generate plots, we first select equally spaced times over this interval.  At each selected time, we calculate the populations of the cavity and emitters using one of our quantum algorithms. To understand how to calculate these, refer to the paragraph surrounding~\eqref{eq:pop-cavity} and~\eqref{eq:pop-emitter}. 

    First, we consider a system consisting of a cavity and a single emitter ($N=1$) with $\omega_C = \omega_{E,1} = 245$ THz, $\kappa =24.5$ GHz, $\gamma=0.4$ GHz, and $g_1=100$ GHz, evolving according to the Lindblad master equation, as defined in~\eqref{eqn:open_TC_master}, from time $t_1=0$ ns to $t_2=0.25$ ns. At $t=0$ ns,
    there are two excitations in the cavity but none in the emitter. We begin by selecting 250 equally spaced times over the time interval $[0 , 0.25]$ ns. For each selected time, we ran the $J$-Matrix algorithm starting from $t_1=0$ ns to this time to calculate the populations of the cavity and emitter. For each run of this algorithm, we employed $n=100$ steps. We plot these results in Figure~\ref{fig:2_j_plots}, where the top plot corresponds to the population plots produced using the $J$-Matrix algorithm and the bottom plot corresponds to the population plots produced using the classical solver of QuTiP.

    \begin{figure}[ht]
        \centering
        \includegraphics[width=\columnwidth]{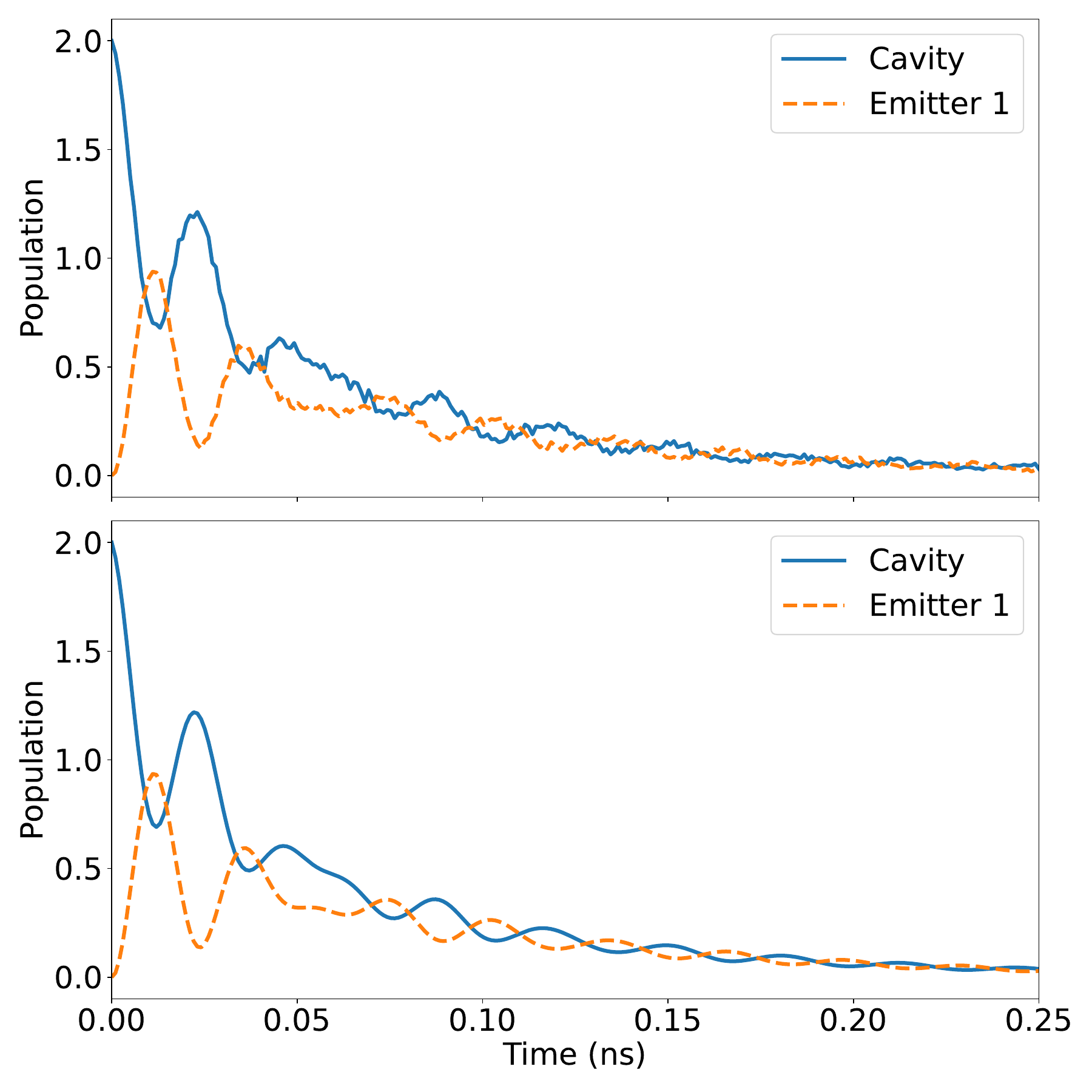}
        \caption{Population of a resonant single-emitter system initialized with two excitations between $t = 0$ and $t = 0.25$ ns. The cavity is coupled to a single resonant emitter ($\omega_C = \omega_{E,1} = 245$ THz) and the system parameters are $(\kappa, \gamma, g_1) = (24.5, 0.4, 100)$ GHz. The top plot represents the result of the $J$-matrix quantum algorithm run on the QASM simulator, while the bottom plot represents the classical solution simulated in QuTiP. The simulation used 1000 shots, giving statistical shot noise of approximately $1/\sqrt{1000} \approx 0.03$.}
        \label{fig:2_j_plots}
    \end{figure}
    
    In Figure~\ref{fig:pumped_j_plots}, we consider the same resonant system as above, but initialized with one cavity excitation instead of two. In addition, we add a coherent drive of strength $E_P = \kappa/2$ (i.e., the cavity is pumped with excitations at a rate of $50\%$ at which they decay) and plot the populations of the cavity and emitter over the same times as we did previously.
    
    \begin{figure}[ht]
        \centering
        \includegraphics[width=\columnwidth]{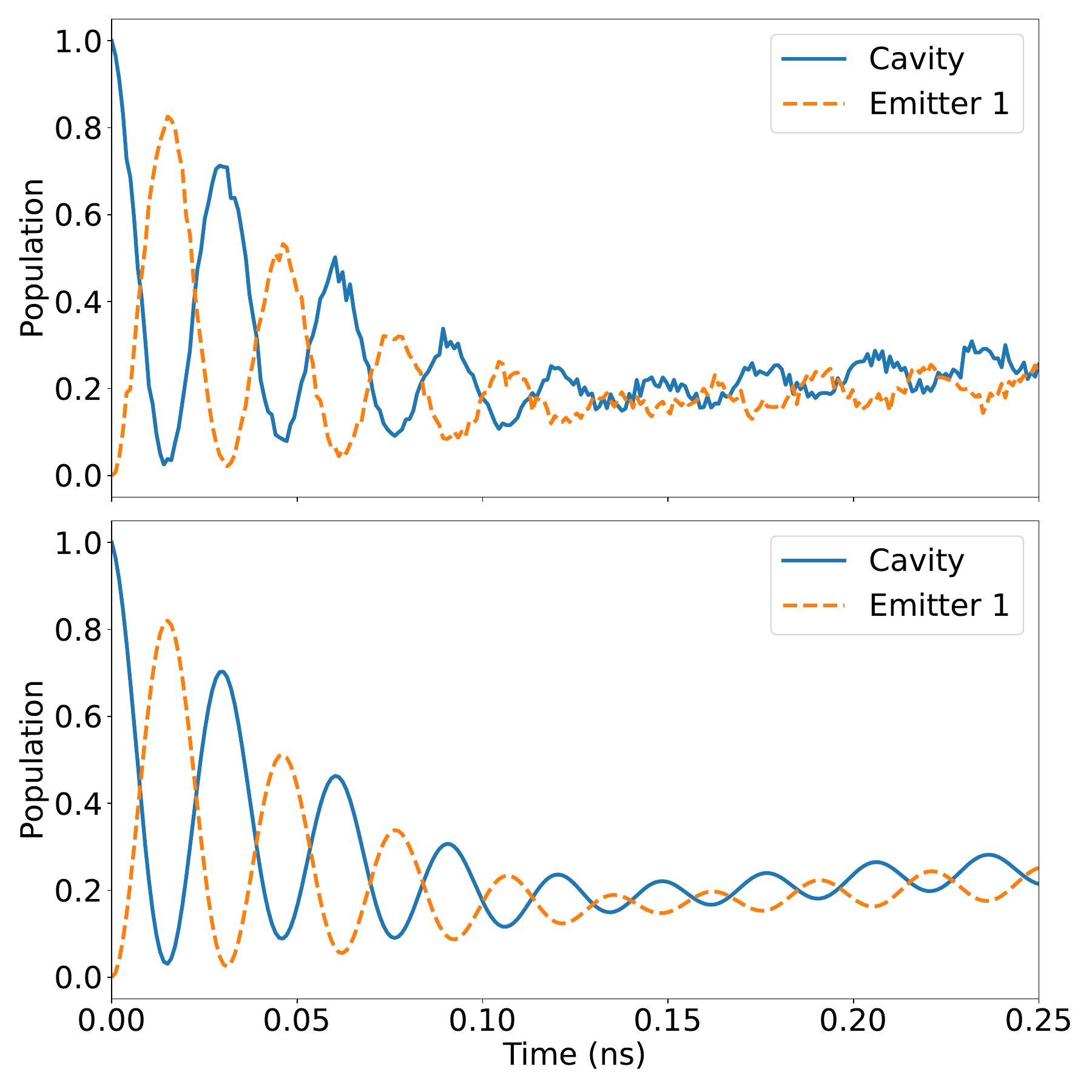}
        \caption{Population of a driven resonant single-emitter system initialized with one photon between $t = 0$ and $t = 0.25$ ns. The cavity is coupled to a single resonant emitter ($\omega_C = \omega_{E,1} = 245$ THz) and the system parameters are $(\kappa, \gamma, g_1) = (24.5, 0.4, 100)$ GHz. The system also has a coherent drive of power $E_P=\kappa/2$.
        The top plot represents the result of the $J$-matrix quantum algorithm run on the QASM simulator, while the bottom plot represents the classical solution simulated in QuTiP.
        The simulation used 1000 shots, giving statistical shot noise of approximately $1/\sqrt{1000} \approx 0.03$.}
        \label{fig:pumped_j_plots}
    \end{figure}

    It is important to note that our algorithms are not limited to homogenous and resonant systems. To demonstrate the inhomogenous case, we consider a system consisting of a cavity with a single excitation and four emitters ($N=4$). We set the cavity frequency to be the same as we set previously, i.e., $\omega_C = 245$ THz; however, we set different frequencies for different emitters because we are considering the inhomogenous case. Specifically, we set the frequency of the first emitter to be $\omega_{E,1} = 245.1$ THz, and then each successive emitter has a higher frequency by the same amount so that $\omega_{E, i} = \omega_{E,i-1} + 100$ GHz. Furthermore, we set $\kappa$ and $\gamma$ to be the same as before, i.e.,  24.5 GHz and 0.4 GHz, respectively, and we set  $g_i = 100$ GHz, for all $i \in \{1,\ldots,4\}$. We then selected 200 equally spaced times from the time interval $[0, 0.25]$ ns, ran the Split $J$-Matrix algorithm for each of these selected times, and calculated the populations of the cavity and four emitters. For each run of this algorithm, we employed $n=50$ steps. We plot the simulation results in Figure~\ref{fig:Staple_j_plots}, where the top plot corresponds to the population plots produced using the Split $J$-Matrix algorithm and the bottom plot corresponds to the population plots produced using QuTiP.
    
    \begin{figure}[ht]
        \centering
        \includegraphics[width=\columnwidth]{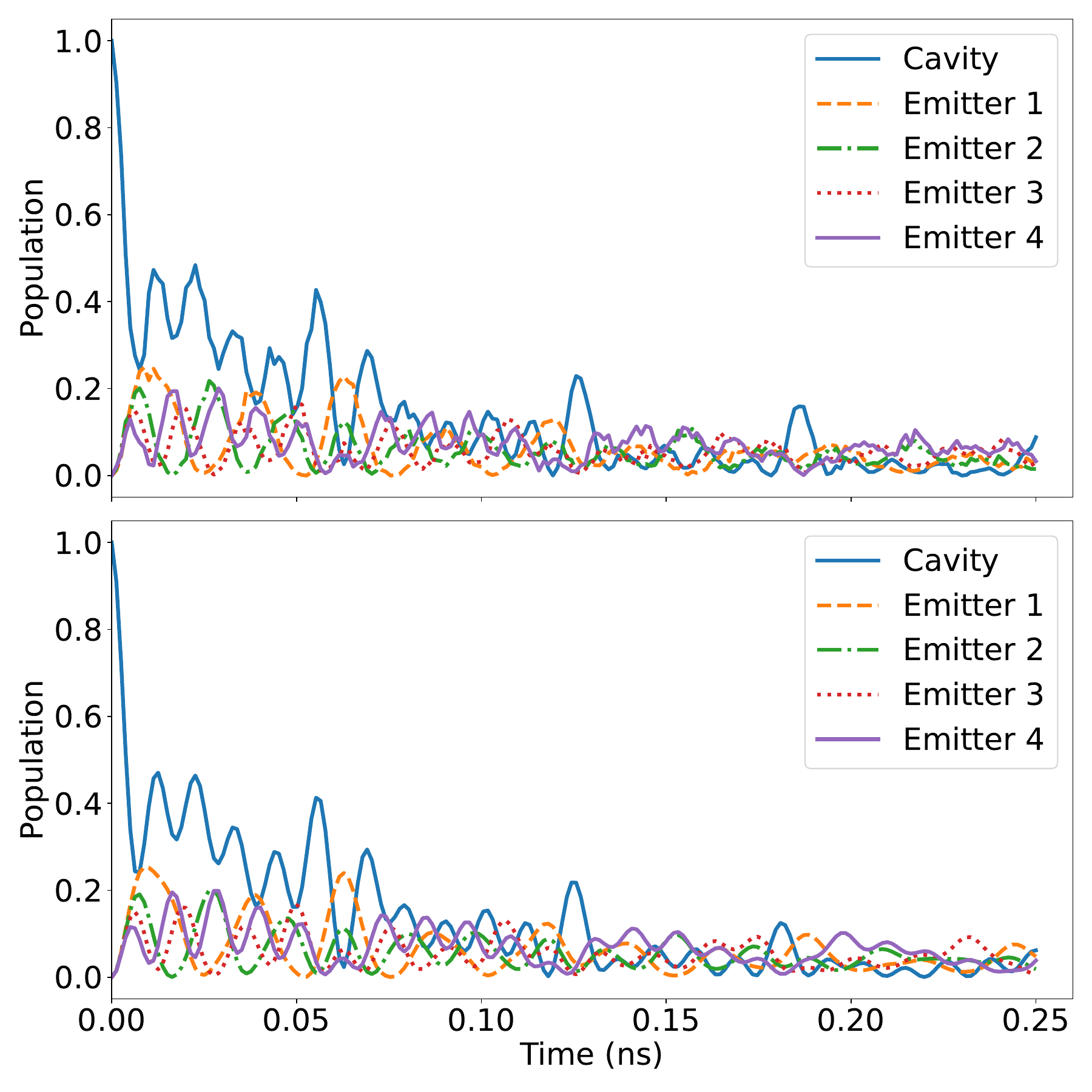}
        \caption{Population of an off-resonant inhomogeneous $N=4$ emitter system initialized with one excitation between $t = 0$ and $t = 0.25$ ns.
        The cavity frequency is $\omega_C = 245$ THz and emitter frequencies are $(\omega_{E,i}) = (245.1, 245.2, 245.3, 245.4)$ THz.
        System parameters are $(\kappa, \gamma, g_i) = (24.5, 0.4, 100)$ GHz.
        The top plot represents the result of the Split $J$-matrix quantum algorithm run on the QASM     simulator, while the bottom plot represents the classical solution simulated in QuTiP.
        The simulation used 1000 shots, giving statistical shot noise of approximately $1/\sqrt{1000} \approx 0.03$.}
        \label{fig:Staple_j_plots}
    \end{figure}
     
    
    Our most important result is that our algorithms expand the parameter space for simulations of the TC model. To give an indication of this, we model the population of a non-resonant $N=9$ emitter system. To this end, we consider a cavity with frequency $\omega_C = 245$ THz. 
    The emitter frequencies are $\omega_{E,i}-\omega_C=(100, -400, -100, 400, 100, 100, 400, -200, -500 )$ GHz.
    Again, we set $\kappa = 24.5$ GHz, $\gamma = 0.4$ GHz, and $g =100$ GHz. To begin with, we initialize the system with three excitations so that the dynamics of all nine emitters are easier to visualize as the system decays. In Figure~\ref{fig:9_emitters_j_graph}, for generating the top plot, we ran the Split $J$-Matrix algorithm for 150 selected times from the time interval $[0, 0.25]$ ns. For each run of the algorithm, we employed $n=45$ steps. Note that the plots in Figure~~\ref{fig:9_emitters_j_graph} are cut off at a population of one excitation because almost all systems spend the entire time in this range.
    
    \begin{figure} [ht]
        \centering
        \includegraphics[width=\columnwidth]{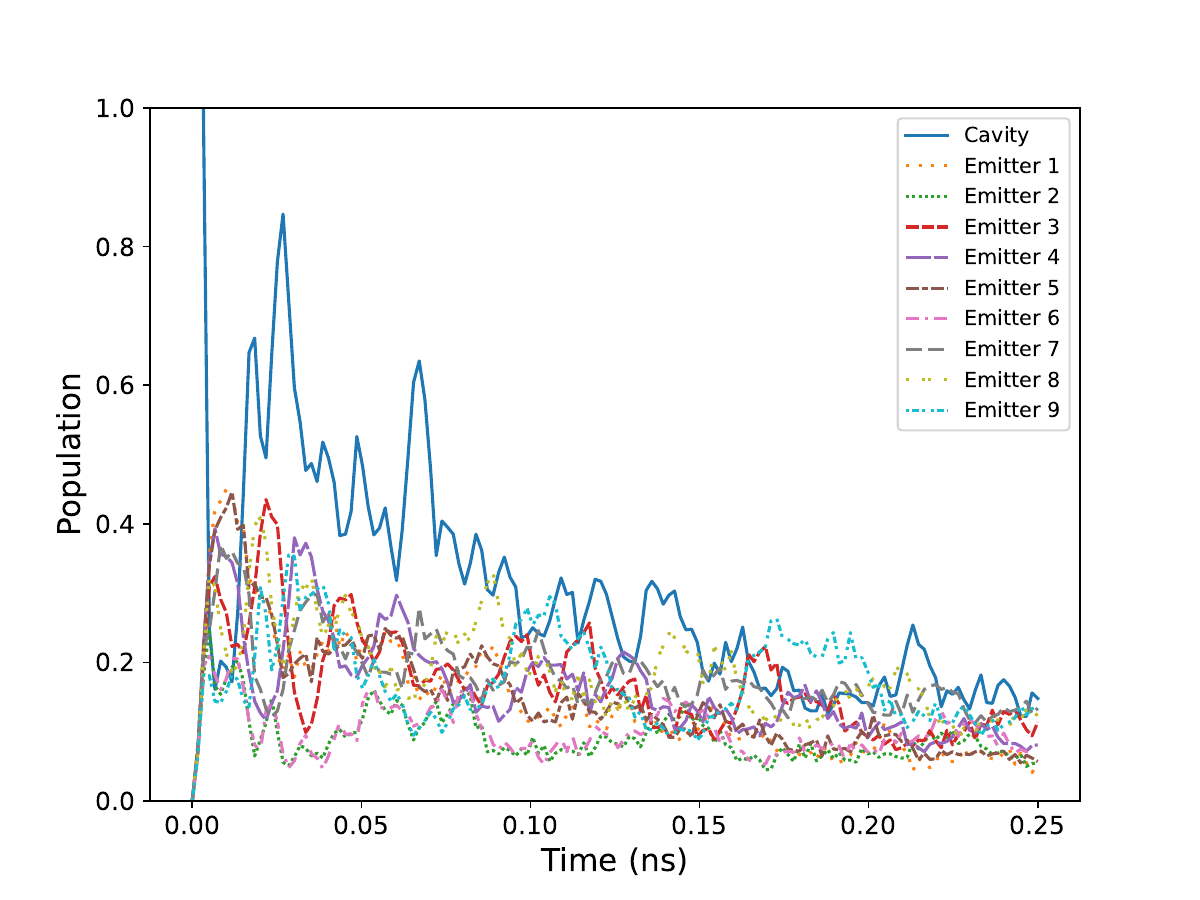}
        \caption{Population of an $N=9$ emitter system initialized with three excitations between $t = 0$ and $t = 0.25$ ns.
        The cavity frequency is $\omega_C = 245$ THz and the emitter frequencies are $\omega_{E,i}-\omega_C=\{100, -400, -100, 0, 100, 100, 400, -200, -500\}$ GHz.
        System parameters are $(\kappa, \gamma, g_i) = (24.5, 0.4, 100)$ GHz. The plot was generated by the Split $J$-Matrix algorithm run on the QASM simulator with 1000 shots, giving statistical shot noise of approximately $1/\sqrt{1000} \approx 0.03$.}
        \label{fig:9_emitters_j_graph}
    \end{figure}

    Next, we consider a system consisting of a cavity and two emitters ($N=2$) with $\omega_C = 245$ GHz, $\omega_{E, 1} = \omega_C + 0.4$ GHz, and $\omega_{E, 2} = \omega_C + 1.3$ GHz. For simulating this system, we make use of a rotating frame, in which 
    \begin{equation}\label{eqn:how_to_elimin_identity}
        e^{-iHt} = e^{-i(H-aI)t}
    \end{equation}
    up to a global phase, where $a$ is some real number and $I$ is the identity matrix. Consequently, simulating the system with $\omega_C = 0$ GHz, $\omega_{E, 1} = 0.4$ GHz, and $\omega_{E, 2} = 1.3$ GHz yields the same results as simulating the aforementioned system with higher values of $\omega_C$, $\omega_{E, 1}$, and $\omega_{E, 2}$. The rotating frame is crucial for employing the WML algorithm (Algorithm~\ref{algo:wml}) to simulate the system even more effectively. This is because it significantly reduces the value of $c$, which directly depends on the values of $\omega_C$, $\omega_{E, 1}$, and $\omega_{E, 2}$. This reduction in the value of $c$ leads to a significant decrease in the runtime of the WML algorithm, which is proportional to $c^2$, as proved in Theorem~\ref{thm:gate-comp-WML}.
    \textcolor{green}{For color center systems, the emitter frequencies $\omega_{E, i}$ are quite high; the $c^2$ time complexity of WML thus makes it more amenable to apply it to systems of few emitters, or in which the emitters have identical or nearly identical frequencies.}
    For the WML algorithm, we create plots by employing a hybrid algorithm in which we use the Split $J$-matrix algorithm for implementing the fixed interaction map $e^{\mathcal{M}\Delta}$ defined in Step~\ref{protstep:wml_algo_step4} of Algorithm~\ref{algo:wml}. Finally, in Figure~\ref{fig:WML_2_emitter}, we plot the population plots at 19 evenly spaced times from the time interval [0, 3] ns.
    

    
    \begin{figure} [ht]
        \centering
        \includegraphics[width=\columnwidth]{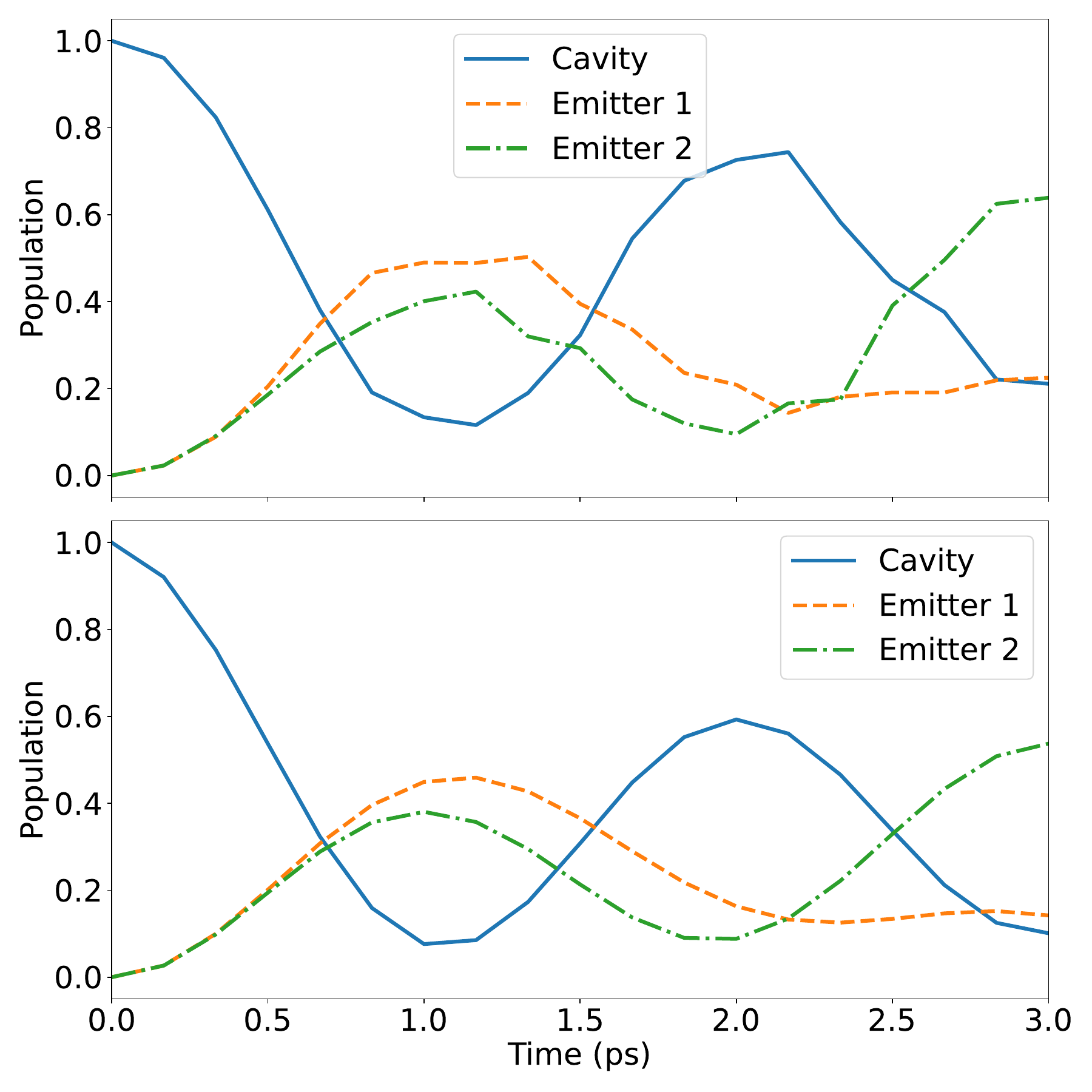}
        \caption{Population of an $N=2$ emitter system between $t = 0$ and $t = 3$ ns.
        The cavity frequency is $\omega_C = 245$ GHz and the emitter frequencies are $\omega_{E,i} - \omega_C = (0.4, 1.3)$ GHz.
        System parameters are $(\kappa, \gamma, g) = (160, 19.6, 1000)$ MHz.
        The top plot was generated by a hybrid algorithm run on the QASM simulator and the bottom plot by QuTiP.
        The simulation used 1000 shots, giving statistical shot noise of approximately $1/\sqrt{1000} \approx 0.03$.}
        \label{fig:WML_2_emitter}
    \end{figure}
    
    We next employ the hybrid algorithm to model the dynamics of a non-resonant $N=4$ emitter system. In this system, the emitter frequencies are $\omega_{E,i} - \omega_C = (0.2, .5, .75, 1)$ GHz, and the system parameters are $(\kappa, \gamma, g) = (160, 22.5, 800)$ MHz.
    The results of this simulation, between 0 and 2 ns, are shown in Figure~\ref{fig:WML_4_emitter}, where for generating the top plot, we evaluate the system populations at 11 evenly spaced times.
    
    \begin{figure}[ht]
        \centering
        \includegraphics[width=\columnwidth]{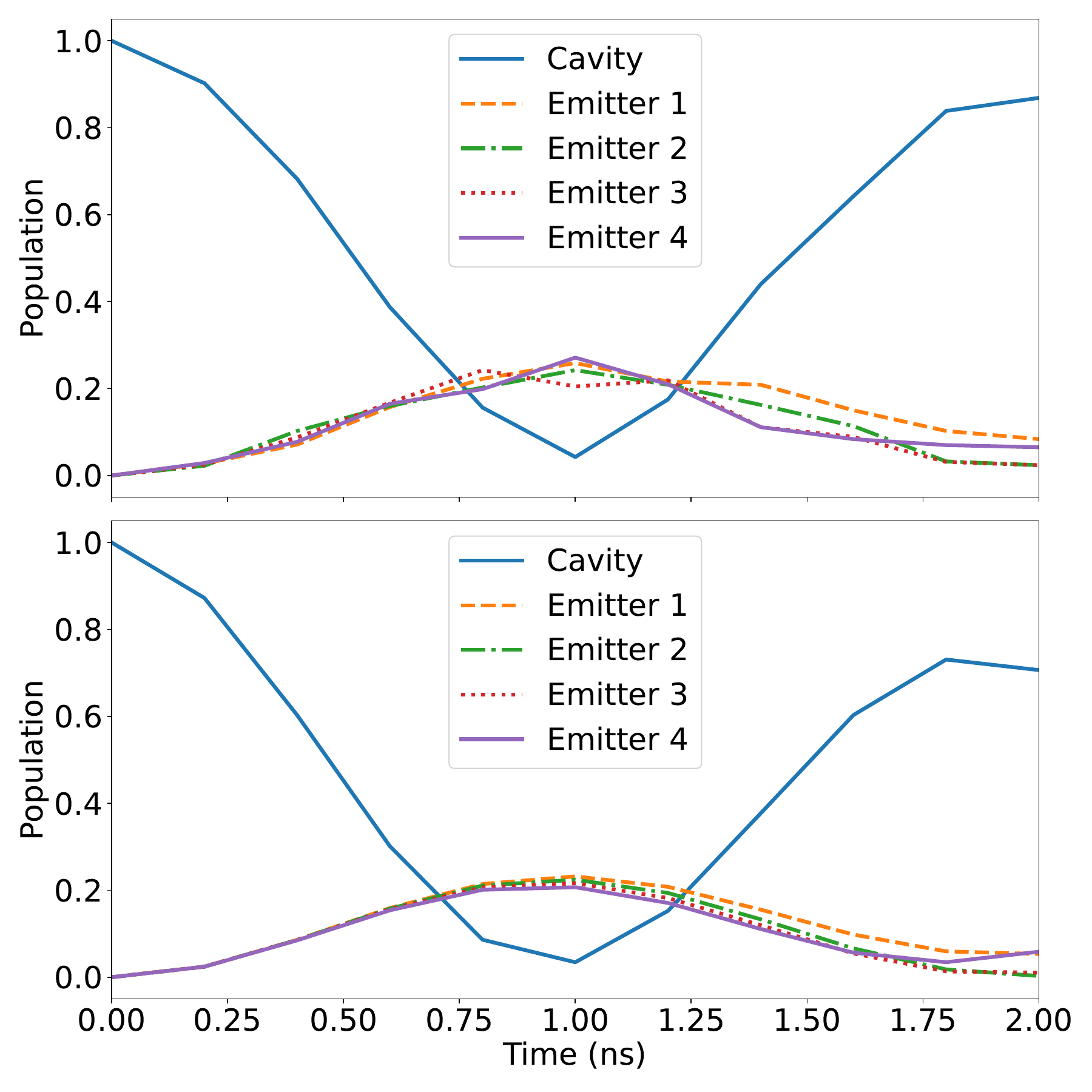}
        \caption{Population of an $N=4$ emitter system between $t=0$  and $t=2$ ns.
        The cavity frequency is $\omega_C = 245$ GHz and the emitter frequencies are $\omega_{E,i} - \omega_C = (0.2, .5, .75, 1)$ GHz, and the system parameters are $(\kappa, \gamma, g) = (160, 22.5, 800)$ MHz.
        The top plot was generated by a hybrid algorithm run on the QASM simulator and the bottom by QuTiP.
         The simulation used 1000 shots, giving statistical shot noise of approximately $1/\sqrt{1000} \approx 0.03$.}
        \label{fig:WML_4_emitter}
    \end{figure}

\subsection{\texorpdfstring{$g^{(2)}(0)$}{g2(0)} Coherence}

\label{subsec:g_2_coherence_plots}

    Estimating the $g^{(2)}(0)$ coherence of the cavity, as defined in~\eqref{eqn:g_2_coherence}, accurately is a challenging task when using a sampling algorithm. This is because, in the steady-state regime, the numerator ${\operatorname{Tr}[a^\dag a^\dag aa\rho]}$ and the denominator ${\operatorname{Tr}[a^\dag a\rho]}$ of~\eqref{eqn:g_2_coherence} tend to be very close to zero, and thus many samples are needed to sample sufficiently many non-zero values. Estimating  $g^{(2)}(0)$ by estimating its numerator and denominator separately requires numerous samples, and an estimate of the number of samples required to approximate quantities like $g^{(2)}(0)$ can be found in~\cite{FracHoeff}. In this paper, we use the median of means method~\cite{medianofmeans} to estimate  $g^{(2)}(0)$. Although the median of means method also requires that we separately estimate both the numerator and denominator of $g^{(2)}(0)$, it uses binning to obtain slightly better convergence. 
    
    We aim to approximate $g^{(2)}(0)$ within $0.1$ of the QuTiP value. To demonstrate that our simulations are able to approximate $g^{(2)}(0)$ within these bounds, we consider the following examples: A resonant single-emitter system, where $\omega_C = \omega_{E,1} = 245$ THz, $\kappa$ = 24.5 GHz, $\gamma$ = 0.4 GHz, and $g_1$ = 100 GHz, is initialized with one excitation and has attached to it a coherent drive of strength $E_P=\kappa/5$. The value for the $g^{(2)}(0)$ coherence, estimated using the classical solver of QuTiP, is 0.1895. Our estimate of $g^{(2)}(0)$, in Figure~\ref{fig:median_o_means_plot_1}, is within 0.1 of this value.

    
    \begin{figure} [ht]
        \centering
        \includegraphics[width=\columnwidth]{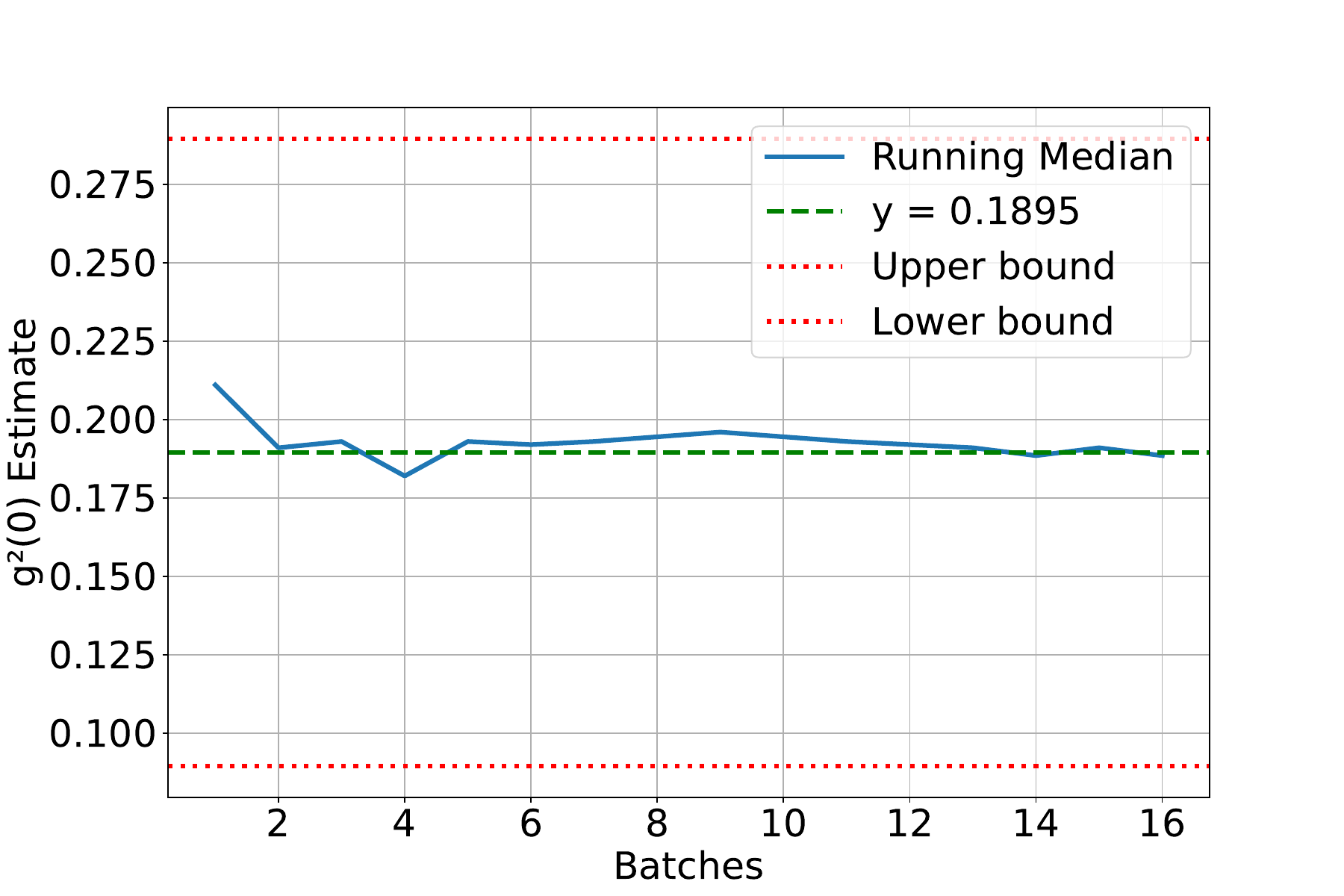}
        \caption{Coherence in a driven resonant cavity with a single emitter showing the running median estimate for the $g^{(2)}(0)$ coherence after each batch mean. The median of means approach is used to estimate the QuTiP value of 0.1895.
        The emitter is resonant with the cavity: $\omega_C = \omega_{E,1} = 245$ THz.
        The system parameters are $(\kappa, \gamma, g) = (24.5, 0.4, 100)$ GHz, and the cavity is subjected to a pump of strength $E_P = \kappa/5$. The plot was generated by the Split $J$-Matrix algorithm run on the QASM simulator with 1000 shots per batch. Based on QuTiP simulations, the error in a single $g^{(2)}$ simulation is about 0.19, so the expected statistical shot noise in each batch is about 0.19/$\sqrt{1000} \approx$ 0.006.
        }
        \label{fig:median_o_means_plot_1}
    \end{figure}

    Now, we demonstrate that the WML algorithm (Algorithm~\ref{algo:wml}) can also be used to estimate the $g^{(2)}(0)$ coherence of a non-resonant system with one emitter.
    The cavity frequency is $\omega_C = 245$ THz.
    The emitter frequency is $\omega_{E,1}-\omega_c = 180$ MHz and the system parameters are $\{\kappa, \gamma, g, E_P\} = \{1.8, 0.1, 0.2, \kappa/2\}$.
    For this system, we divide the total number of shots into 20 batches of 1500 shots each, and we plot the running median of these 20 batches in Figure~\ref{fig:median_o_means_plot_3}.

    \begin{figure} [ht]
        \centering
        \includegraphics[width=\linewidth]{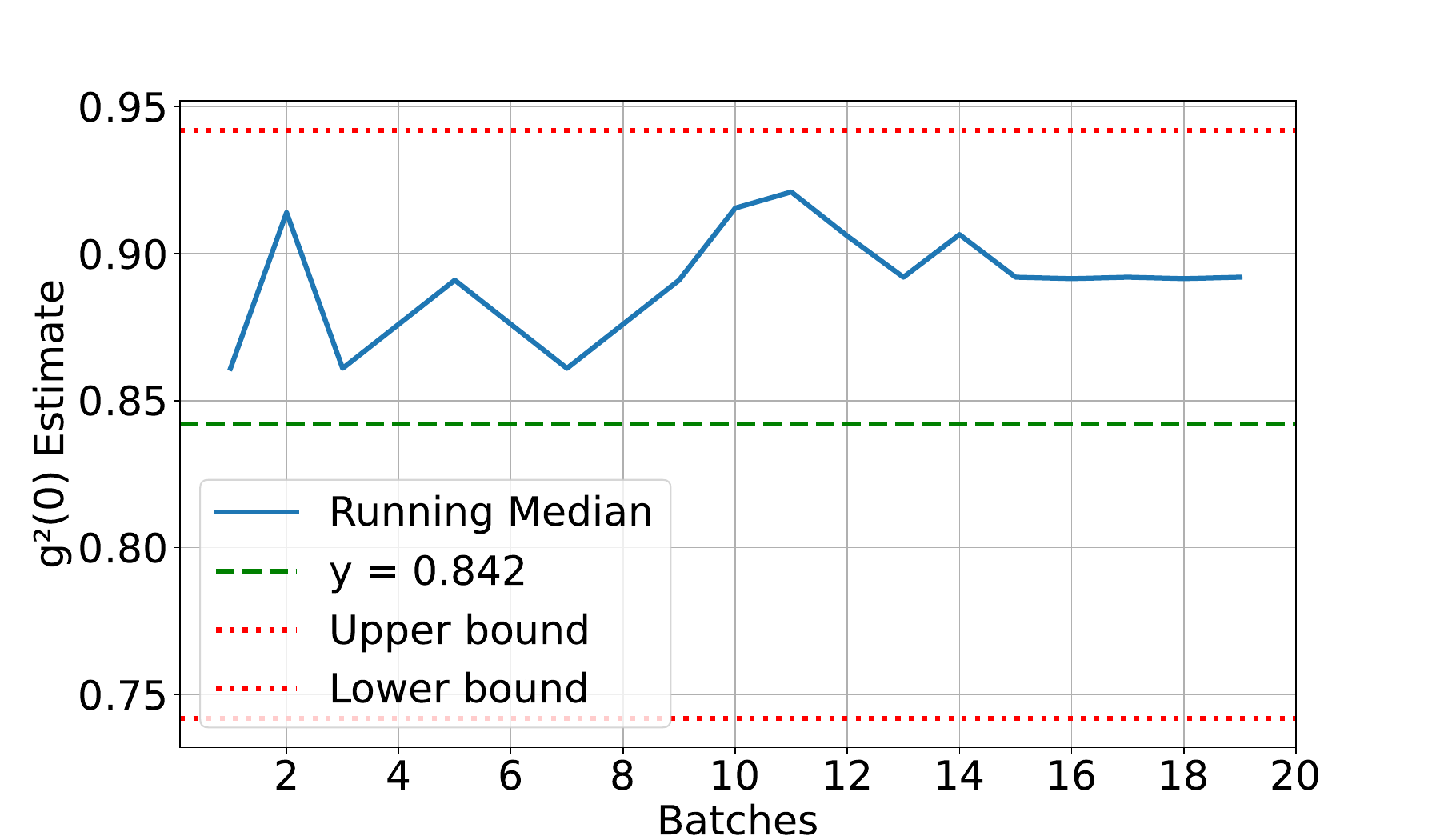}
        \caption{Coherence in a driven non-resonant cavity with a single emitter showing the running median estimate for the $g^{(2)}(0)$ coherence after each batch mean. The median of means approach is used to estimate the QuTiP value of 0.842.
        The emitter frequency is $\omega_{E,1}-\omega_c = 180$ MHz and the system parameters are $(\kappa, \gamma, g) = (1.8, 0.1, 0.2)$ GHz, and the cavity is subjected to a pump of strength $E_P = \kappa/2$. The plot was generated by the hybrid algorithm run on the QASM simulator using 1500 shots per batch. Based on QuTiP simulations, the error in a single $g^{(2)}$ simulation is about 0.04, so the expected statistical shot noise in each batch is about 0.04/$\sqrt{1500}\,\approx$\,0.001.}
        \label{fig:median_o_means_plot_3}
    \end{figure}
    
    Finally, consider a larger system with eight emitters, a size difficult to simulate numerically on a typical classical computer.
    We set the frequencies of these eight emitters as follows: $\omega_{E,i}-\omega_C = (20, 50, 75, 40, 15, 30, 57, 15)$ GHz.
    Furthermore, the other system parameters are $(\kappa, \gamma, g) = (2.83, 0.8, 10)$, and the cavity is subjected to a coherent drive of strength $E_P = \kappa/2$.
   For this system, we divide the total number of shots into 13 batches of 3000 shots each. We plot the running median of these 13 batches in Figure~\ref{fig:median_o_means_plot_2}; the median quickly converges to $g^{(2)}(0) \sim 0.867$.

    \begin{figure} [ht]
        \centering
        \includegraphics[width=\columnwidth]{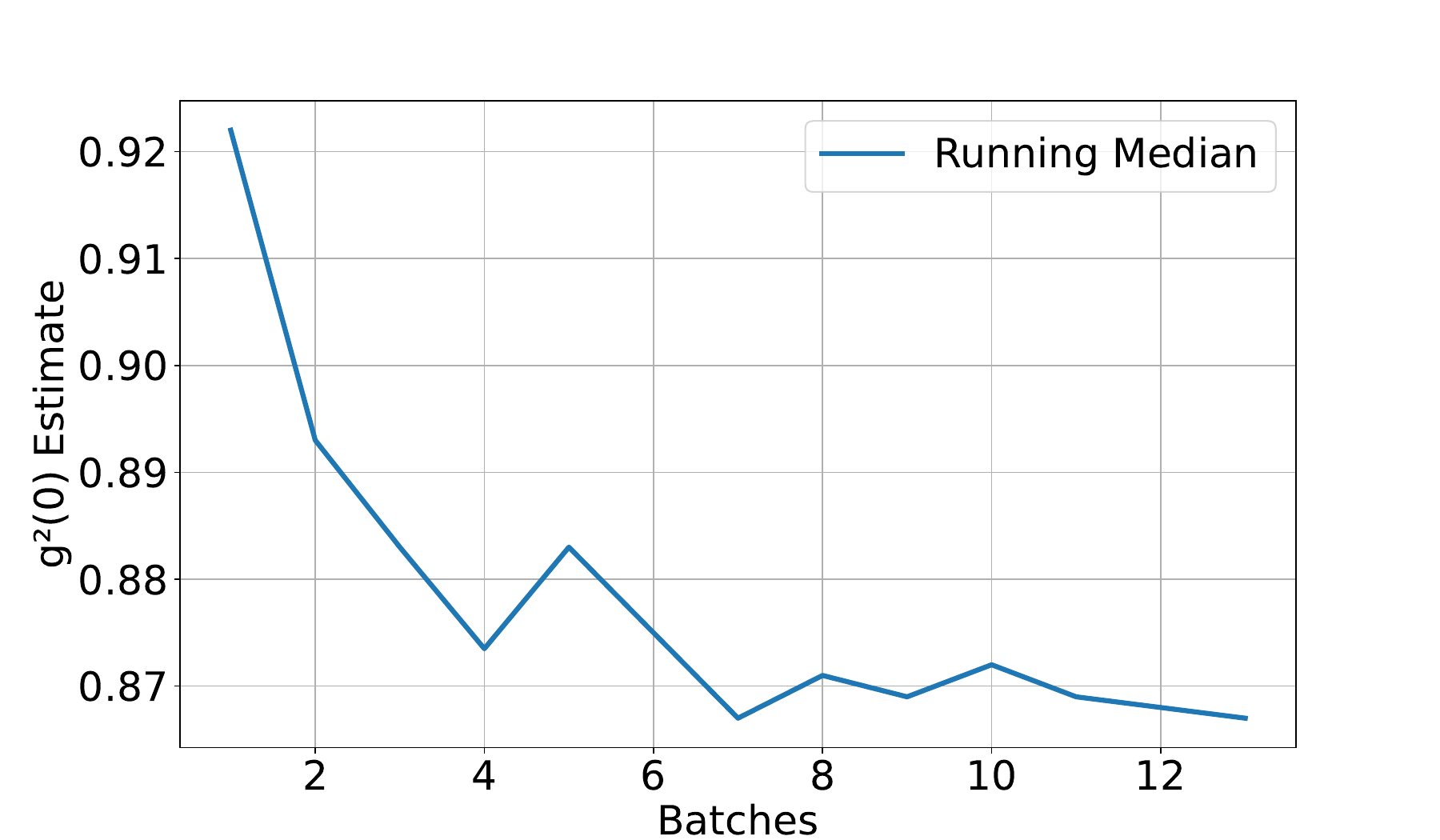}
        \caption{Coherence in a driven inhomogeneous systems of a cavity with eight emitters showing the running estimate for the $g^{(2)}(0)$ coherence after each batch mean.
        Emitters frequencies are $\omega_{E,i}-\omega_C = (20, 50, 75, 40, 15, 30, 57, 15)$ GHz.
        The system parameters are $(\kappa, \gamma, g) = (2.83, 0.8, 10)$ GHz, and the cavity is subjected to a pump of strength $E_P = \kappa/2$. The plot was generated by the Split $J$-Matrix algorithm run on the QASM simulator with 3000 shots per batch. Because this large system is difficult to simulate by typical classical means with QuTiP, we do not estimate the shot noise.}
        \label{fig:median_o_means_plot_2}
    \end{figure}

\section{Discussion}

\label{sec:Discussion}

    In this section, we discuss important considerations when deciding which of the algorithms, i.e., the $J$-Matrix algorithm (see Section~\ref{subsec:background_j_matrix}), the Split $J$-Matrix algorithm (see Section~\ref{sec:methods_split_j_matrix}), and the WML algorithm~\cite{WMLpartI,WMLpartII}, to use when simulating a system of interest using a quantum computer.

    The choice between the $J$-Matrix algorithm and the Split $J$-Matrix algorithm depends on the size and number of Lindblad operators in the system of interest. The standard $J$-Matrix algorithm may be better for situations where the Lindblad operators are not local operators and thus the matrix encoding these operators cannot be split into smaller operators acting on subsystems. The Split $J$-Matrix algorithm is well suited for systems with multiple Lindblad operators, each acting on a constant number of qubits, like in the open TC model. 
    
    The main difference between the WML~\cite{WMLpartI,WMLpartII} and the Split $J$-Matrix algorithm is the input model. WML assumes sample access to program states that encode the Hamiltonian and Lindblad operators of a given model. This is an easy assumption to satisfy if the program states required can be efficiently prepared. On the other hand, if we are provided with classical descriptions of the Lindblad operators, then we can use the Split $J$-Matrix algorithm since we can obtain a classical description of the unitary operators required.
    \textcolor{green}{Another difference between the two algorithms is that the WML algorithm's runtime is sensitive to the quantity $c$ (defined in \eqref{algo:wml}).
    As described in Sec.~\ref{sec:Results}, this means that systems featuring many emitters with different frequencies are more expensive to simulate than systems with few emitters, or emitters with lower frequency variation.
    In contrast, while the simulation time depends on system size for the Split $J$-Matrix method, its performance degrades much more gradually as these factors increase.}

    
    In the context of the open TC model, we discuss gates involving three or more qubits in each algorithm, which must be decomposed into one- and two-qubit gates. The decomposition is specific to the quantum computer in question. Several auxiliary qubits are used in the LCU-based WML algorithm to simulate the fixed interaction. This subroutine uses a series of controlled-unitary gates. In order to apply these gates, four control qubits and three target qubits are used, for a total of seven qubits. On the other hand, the largest gates in the Split $J$-Matrix algorithm act on only three qubits  (i.e., the cavity $J$-Matrix gate and the cavity-emitter interaction Hamiltonian gate). 

    
    The simulations presented in this work can handle up to three excitations in the cavity. This can be achieved by changing the way the Pauli-$X$ gates are applied to initialize the cavity qubits. To simulate $R$ excitations in the cavity, our techniques require $\log_2(R+1)$ qubits corresponding to the cavity. The dimension of the cavity annihilation operator increases linearly in $R$, and the constants $c$ and $\lambda_{\max}$ scale quadratically in $R$, as shown in Appendix~\ref{app:constants}. The gate complexity of both the WML and Split $J$-Matrix algorithms then scale according to Theorem~\ref{thm:gate-comp-WML} and Theorem~\ref{thm:split_j_matrix}, respectively. Additionally, if we assume that the number of emitters, $N$, is proportional to the number of excitations in the cavity, $R$, we can see that our techniques scale polynomially with the number of excitations. This is a marked improvement over commonly used classical simulation techniques, which scale exponentially with number of excitations (see Appendix~\ref{app:classical-complexity} for details). The final modification we should consider is when emitters can hold more than a single excitation at a time, e.g., modeling three-level atoms. This would  require using multiple qubits per emitter. The gate complexities then scale according to the dimension of annihilation operators of the emitters. The number of qubits required in these algorithms scales linearly in $N$, another substantial improvement over the exponential scaling of typical solvers.

\section{Conclusion}

\label{sec:Conclusion}

    The key contributions of our paper are three-fold. First, we implemented two open quantum simulation algorithms---the WML algorithm and the Split $J$-Matrix algorithm---to model the behavior of the open TC model. Our results show that these quantum algorithms can model open model dynamics accurately. Furthermore, our \textcolor{green}{theoretical findings broaden the parameter regimes for simulating the open TC model (non-resonant and inhomogeneous) using our quantum algorithms.} Second, we proposed two efficient LCU-based protocols for implementing the fixed interaction channel of the WML algorithm. This resolves one of the key open questions of prior studies~\cite{WMLpartI, WMLpartII}. Third, we investigated the gate complexity of our algorithms. We discovered that the gate complexities of the WML and Split $J$-Matrix algorithms scale quadratically and cubically with respect to the number of emitters in the system, respectively,  while the number of qubits scales linearly as $O(N)$.
    
    Looking ahead, one open question is how the simulations would perform if the algorithms were run on hardware with bosonic modes instead of qubits. Finally, it would be interesting to extend these algorithms to model the Tavis--Cummings--Hubbard model, in which multiple cavities are coupled to each other and to emitters.

\section*{Author Contributions}

\noindent
\textbf{Author Contributions}: The following describes the
different contributions of the authors of this work, using
roles defined by the CRediT (Contributor Roles Taxonomy) project~\cite{CRediT}:

\noindent\textbf{AS:} Formal analysis, Investigation, Methodology, Project administration, Software, Visualization, Writing - original draft, Writing - review \& editing.

\noindent\textbf{DP:} Conceptualization, Formal Analysis, Investigation, Methodology, Project Administration, Supervision, Validation, Visualization, Writing – original draft, Writing – review \& editing

\noindent\textbf{AP:} Formal analysis, Supervision, Methodology, Investigation, Writing – original draft, Writing – review \& editing

\noindent\textbf{AR:} Methodology, Writing - original draft, Writing - review \& editing.

\noindent\textbf{RB:} Methodology, Software, Writing - original draft, Writing - review \& editing.

\noindent\textbf{MR:} Conceptualization, Methodology, Funding acquisition, Writing – review \& editing.

\noindent\textbf{MMW:} Conceptualization, Formal Analysis, Funding acquisition, Methodology,  Validation,  Writing – review \& editing.
    
\begin{acknowledgments}
    We  thank Valla Fatemi for helpful discussions at Cornell Quantum Day.
    
    AP acknowledges support from the National Science Centre Poland (Grant~No.~2022/46/E/ST2/00115). MR acknowledges support from NSF CAREER (Award 2047564). DP and MMW acknowledge support from
    the Air Force Office of Scientific Research  under agreement no.~FA2386-24-1-4069.
    The U.S.~Government is authorized to reproduce and
    distribute reprints for Governmental purposes notwithstanding any copyright
    notation thereon. The views and conclusions contained herein are those of the
    authors and should not be interpreted as necessarily representing the official
    policies or endorsements, either expressed or implied, of the United States Air Force.

\end{acknowledgments}

\bibliography{ref}

@misc{code_repo,
  note = {The code implementing our simulations can be found here: \url{github.com/radulaski/QuantumAlgorithmsOpenTavisCummings}},
}

@Article{WMLpartI,
  author        = {Patel, Dhrumil and Wilde, Mark M.},
  journal       = {Open Systems \& Information Dynamics},
  month         = jul,
  title         = {Wave Matrix {L}indbladization~{I}: Quantum Programs for Simulating {M}arkovian Dynamics},
  year          = {2023},
  number        = {02},
  pages         = {2350010},
  volume        = {30},
  archiveprefix = {arXiv},
  doi           = {10.1142/S1230161223500105},
  eprint        = {2307.14932},
}

@InProceedings{cleve2019efficient,
  author    = {Richard Cleve and Chunhao Wang},
  booktitle = {44th International Colloquium on Automata, Languages, and Programming (ICALP 2017)},
  title     = {Efficient Quantum Algorithms for Simulating {L}indblad Evolution},
  year      = {2017},
  address   = {Dagstuhl, Germany},
  editor    = {Ioannis Chatzigiannakis and Piotr Indyk and Fabian Kuhn and Anca Muscholl},
  pages     = {17:1--17:14},
  publisher = {Schloss Dagstuhl--Leibniz-Zentrum fuer Informatik},
  series    = {Leibniz International Proceedings in Informatics (LIPIcs)},
  volume    = {80},
  annote    = {Keywords: quantum algorithms, open quantum systems, Lindblad simulation},
  doi       = {10.4230/LIPIcs.ICALP.2017.17},
  isbn      = {978-3-95977-041-5},
  issn      = {1868-8969},
  url       = {http://drops.dagstuhl.de/opus/volltexte/2017/7477},
  urn       = {urn:nbn:de:0030-drops-74776},
}

@Article{Childs2019,
  author    = {Childs, Andrew M. and Ostrander, Aaron and Su, Yuan},
  journal   = {Quantum},
  title     = {Faster quantum simulation by randomization},
  year      = {2019},
  issn      = {2521-327X},
  month     = sep,
  pages     = {182},
  volume    = {3},
  doi       = {10.22331/q-2019-09-02-182},
  publisher = {Verein zur Forderung des Open Access Publizierens in den Quantenwissenschaften},
  url       = {https://quantum-journal.org/papers/q-2019-09-02-182/},
}

@Article{Suzuki1991,
  author    = {Suzuki, Masuo},
  journal   = {Journal of Mathematical Physics},
  title     = {General theory of fractal path integrals with applications to many-body theories and statistical physics},
  year      = {1991},
  issn      = {1089-7658},
  month     = feb,
  number    = {2},
  pages     = {400--407},
  volume    = {32},
  doi       = {10.1063/1.529425},
  publisher = {AIP Publishing},
  url       = {https://pubs.aip.org/aip/jmp/article/32/2/400/229229/General-theory-of-fractal-path-integrals-with},
}

@Misc{medianofmeans,
  author        = {Matthieu Lerasle},
  month         = aug,
  title         = {Lecture Notes: Selected topics on robust statistical learning theory},
  year          = {2019},
  archiveprefix = {arXiv},
  eprint        = {1908.10761},
}

@Article{WMLpartII,
  volume={30},
   ISSN={1793-7191},
   url={http://dx.doi.org/10.1142/S1230161223500142},
   DOI={10.1142/s1230161223500142},
   number={03},
   journal={Open Systems \& Information Dynamics},
   publisher={World Scientific Pub Co Pte Ltd},
   author={Patel, Dhrumil and Wilde, Mark M.},
   year={2023},
   month={sep},
  title         = {Wave Matrix {L}indbladization~{II}: General {L}indbladians, Linear Combinations, and Polynomials},
  archiveprefix = {arXiv},
  eprint        = {2309.14453},
  eprinttype    = {arXiv},
  primaryclass  = {quant-ph},
    pages = {2350014},
}

@Article{FracHoeff,
  author    = {Wagner, Rafael and Schwartzman-Nowik, Zohar and Paiva, Ismael L. and Te'eni, Amit and Ruiz-Molero, Antonio and Soares Barbosa, Rui and Cohen, Eliahu and Galvão, Ernesto F},
  journal   = {Quantum Science and Technology},
  month     = jan,
  title     = {Quantum circuits for measuring weak values, {K}irkwood--{D}irac quasiprobability distributions, and state spectra},
  year      = {2024},
  pages     = {015030},
  volume    = {9},
  doi       = {10.1088/2058-9565/ad124c},
  issue     = {1},
  numpages  = {39},
  publisher = {Quantum Science and Technology},
}

@Article{ExactTC,
  author    = {Bogoliubov, N M and Bullough, R K and Timonen, J},
  journal   = {Journal of Physics A: Mathematical and General},
  month     = jan,
  title     = {Exact solution of generalized {T}avis--{C}ummings models in quantum optics},
  year      = {1996},
  pages     = {6304},
  volume    = {29},
  doi       = {10.1088/0305-4470/29/19/015},
  issue     = {19},
  numpages  = {9},
  publisher = {IOP Publishing Ltd},
}

@Misc{QIM,
  author        = {Nikolay Bogoliubov and Igor Ermakov and Andrey Rybin},
  month         = feb,
  title         = {Dynamic correlation funtions of the generalized {T}avis--{C}ummings model},
  year          = {2017},
  archiveprefix = {arXiv},
  eprint        = {1702.03740},
  eprinttype    = {arXiv},
  primaryclass  = {quant-ph},
}

@Article{Feynman:1981tf,
  author  = {Feynman, Richard P.},
  journal = {International Journal of Theoretical Physics},
  title   = {{Simulating physics with computers}},
  year    = {1982},
  pages   = {467--488},
  volume  = {21},
  doi     = {10.1007/BF02650179},
  editor  = {Brown, L. M.},
}

@Book{10.1093/oso/9780198529798.001.0001chembook,
  author    = {Nitzan, Abraham},
  publisher = {Oxford University Press},
  title     = {{Chemical Dynamics in Condensed Phases: Relaxation, Transfer and Reactions in Condensed Molecular Systems}},
  year      = {2006},
  isbn      = {9780198529798},
  month     = apr,
  doi       = {10.1093/oso/9780198529798.001.0001},
}

@Article{Lloyd2014QuantumAnalysis,
  author    = {Lloyd, Seth and Mohseni, Masoud and Rebentrost, Patrick},
  journal   = {Nature Physics},
  title     = {{Quantum principal component analysis}},
  year      = {2014},
  issn      = {1745-2481},
  month     = jul,
  number    = {9},
  pages     = {631--633},
  volume    = {10},
  arxivid   = {1307.0401},
  doi       = {10.1038/nphys3029},
  keywords  = {Quantum information},
  publisher = {Nature Publishing Group},
  url       = {https://www.nature.com/articles/nphys3029},
}

@Book{molecularsystemsbook,
  author    = {Kühn, Oliver and May, Volkhard},
  publisher = {Wiley},
  title     = {Charge and Energy Transfer Dynamics in Molecular Systems},
  year      = {2023},
  isbn      = {9783527339785},
  month     = jun,
  doi       = {10.1002/9783527696277},
}

@Article{Prosen_2011,
  title = {Open {$XXZ$} Spin Chain: Nonequilibrium Steady State and a Strict Bound on Ballistic Transport},
  author = {Toma{\v{z}} Prosen},
  journal = {Physical Review Letters},
  volume = {106},
  issue = {21},
  pages = {217206},
  numpages = {4},
  year = {2011},
  month = {May},
  publisher = {American Physical Society},
  doi = {10.1103/PhysRevLett.106.217206},
  url = {https://link.aps.org/doi/10.1103/PhysRevLett.106.217206}
}

@Article{PhysRevE.86.061118,
  author    = {Manzano, Daniel and Tiersch, Markus and Asadian, Ali and Briegel, Hans J.},
  journal   = {Physical Review E},
  month     = dec,
  title     = {Quantum transport efficiency and {F}ourier's law},
  year      = {2012},
  pages     = {061118},
  volume    = {86},
  doi       = {10.1103/PhysRevE.86.061118},
  issue     = {6},
  numpages  = {5},
  publisher = {American Physical Society},
}

@Article{Olmos_2012,
  title = {Facilitated Spin Models of Dissipative Quantum Glasses},
  author = {Olmos, Beatriz and Lesanovsky, Igor and Garrahan, Juan P.},
  journal = {Physical Review Letters},
  volume = {109},
  issue = {2},
  pages = {020403},
  numpages = {5},
  year = {2012},
  month = {July},
  publisher = {American Physical Society},
  doi = {10.1103/PhysRevLett.109.020403},
  url = {https://link.aps.org/doi/10.1103/PhysRevLett.109.020403}
}

@Article{RevModPhys.70.101,
  author    = {Plenio, M. B. and Knight, P. L.},
  journal   = {Reviews of Modern Physics},
  month     = jan,
  title     = {The quantum-jump approach to dissipative dynamics in quantum optics},
  year      = {1998},
  pages     = {101--144},
  volume    = {70},
  doi       = {10.1103/RevModPhys.70.101},
  issue     = {1},
  numpages  = {0},
  publisher = {American Physical Society},
}

@Book{quantumnoisebook,
  author    = {Gardiner, Crispin and Zoller, Peter},
  publisher = {Springer Berlin, Heidelberg},
  title     = {{Quantum Noise: A Handbook of Markovian and Non-Markovian Quantum Stochastic Methods with Applications to Quantum Optics}},
  year      = {2004},
  isbn      = {978-3-540-22301-6},
  url       = {https://link.springer.com/book/9783540223016},
}

@Article{childsopen,
  author    = {Childs, Andrew and Li, Tongyang},
  journal   = {Quantum Information and Computation},
  month     = sep,
  title     = {Efficient simulation of sparse {M}arkovian quantum dynamics},
  year      = {2017},
  issn      = {1533-7146},
  number    = {11 \& 12},
  volume    = {17},
  doi       = {10.26421/qic17.11-12},
  publisher = {Rinton Press},
    pages= {0901-0947},
}

@Article{PRXQuantum.3.010320,
  author    = {Kamakari, Hirsh and Sun, Shi-Ning and Motta, Mario and Minnich, Austin J.},
  journal   = {PRX Quantum},
  month     = feb,
  title     = {Digital Quantum Simulation of Open Quantum Systems Using Quantum Imaginary--Time Evolution},
  year      = {2022},
  pages     = {010320},
  volume    = {3},
  doi       = {10.1103/PRXQuantum.3.010320},
  issue     = {1},
  numpages  = {10},
  publisher = {American Physical Society},
}

@Article{openReview,
  author  = {Miessen, Alexander and Ollitrault, Pauline and Tacchino, Francesco and Tavernelli, Ivano},
  journal = {Nature Computational Science},
  month   = dec,
  title   = {Quantum algorithms for quantum dynamics},
  year    = {2022},
  volume  = {3},
  doi     = {10.1038/s43588-022-00374-2},
  pages = {25–37}
}

@Article{PhysRevResearch.4.023216,
  author    = {Schlimgen, Anthony W. and Head-Marsden, Kade and Sager, LeeAnn M. and Narang, Prineha and Mazziotti, David A.},
  journal   = {Physical Review Research},
  month     = jun,
  title     = {Quantum simulation of the {L}indblad equation using a unitary decomposition of operators},
  year      = {2022},
  pages     = {023216},
  volume    = {4},
  doi       = {10.1103/PhysRevResearch.4.023216},
  issue     = {2},
  numpages  = {7},
  publisher = {American Physical Society},
}

@Article{Suri2023twounitary,
  author    = {Suri, Nishchay and Barreto, Joseph and Hadfield, Stuart and Wiebe, Nathan and Wudarski, Filip and Marshall, Jeffrey},
  journal   = {{Quantum}},
  month     = may,
  title     = {Two-Unitary Decomposition Algorithm and Open Quantum System Simulation},
  year      = {2023},
  issn      = {2521-327X},
  pages     = {1002},
  volume    = {7},
  doi       = {10.22331/q-2023-05-15-1002},
  publisher = {{Verein zur F{\"{o}}rderung des Open Access Publizierens in den Quantenwissenschaften}},
}

@Article{kastoryano2016quantumgibbssamplerscommuting,
  author    = {Kastoryano, Michael J. and Brand{\~{a}}o, Fernando G. S. L.},
  journal   = {Communications in Mathematical Physics},
  title     = {Quantum {G}ibbs Samplers: the commuting case},
  year      = {2014},
  issn      = {14320916},
  month     = sep,
  number    = {3},
  pages     = {915--957},
  volume    = {344},
  arxivid   = {1409.3435},
  doi       = {10.1007/s00220-016-2641-8},
  publisher = {Springer New York LLC},
archiveprefix = {arXiv},
  eprint        = {1409.3435},
  eprinttype    = {arXiv},
  primaryclass  = {quant-ph},
}

@Article{PhysRevA.87.012324,
  author    = {Magesan, Easwar and Puzzuoli, Daniel and Granade, Christopher E. and Cory, David G.},
  journal   = {Physical Review A },
  month     = jan,
  title     = {Modeling quantum noise for efficient testing of fault-tolerant circuits},
  year      = {2013},
  pages     = {012324},
  volume    = {87},
  doi       = {10.1103/PhysRevA.87.012324},
  issue     = {1},
  numpages  = {5},
  publisher = {American Physical Society},
}

@article{verstraete2008quantumcomputationquantumstate,
  title={Quantum computation and quantum-state engineering driven by dissipation},
  author={Verstraete, Frank and Wolf, Michael M. and Cirac, J. Ignacio},
  journal={Nature Physics},
  volume={5},
  number={9},
  pages={633--636},
  year={2009},
  publisher={Nature Publishing Group UK London},
archiveprefix = {arXiv},
  eprint        = {0803.1447},
  eprinttype    = {arXiv},
  primaryclass  = {quant-ph},
  doi = {10.1038/nphys1342}
}

@Article{PhysRevLett.106.090502,
  author    = {Kastoryano, M. J. and Reiter, F. and S\o{}rensen, A. S.},
  journal   = {Physical Review Letters},
  month     = feb,
  title     = {Dissipative Preparation of Entanglement in Optical Cavities},
  year      = {2011},
  pages     = {090502},
  volume    = {106},
  doi       = {10.1103/PhysRevLett.106.090502},
  issue     = {9},
  numpages  = {4},
  publisher = {American Physical Society},
}

@Article{PhysRevA.78.042307,
  author    = {Kraus, B. and B\"uchler, H. P. and Diehl, S. and Kantian, A. and Micheli, A. and Zoller, P.},
  journal   = {Physical Review A },
  month     = oct,
  title     = {Preparation of entangled states by quantum {M}arkov processes},
  year      = {2008},
  pages     = {042307},
  volume    = {78},
  doi       = {10.1103/PhysRevA.78.042307},
  issue     = {4},
  numpages  = {9},
  publisher = {American Physical Society},
}

@Article{PhysRevLett.117.040501,
  author    = {Reiter, Florentin and Reeb, David and S\o{}rensen, Anders S.},
  journal   = {Physical Review Letters},
  month     = jul,
  title     = {Scalable Dissipative Preparation of Many-Body Entanglement},
  year      = {2016},
  pages     = {040501},
  volume    = {117},
  doi       = {10.1103/PhysRevLett.117.040501},
  issue     = {4},
  numpages  = {6},
  publisher = {American Physical Society},
}

@Article{wallraff2004strong,
  author    = {Wallraff, Andreas and Schuster, David I and Blais, Alexandre and Frunzio, Luigi and Huang, R-S and Majer, Johannes and Kumar, Sameer and Girvin, Steven M and Schoelkopf, Robert J},
  journal   = {Nature},
  title     = {Strong coupling of a single photon to a superconducting qubit using circuit quantum electrodynamics},
  year      = {2004},
  number    = {7005},
  pages     = {162--167},
  volume    = {431},
  doi       = {10.1038/nature02851},
  publisher = {Nature Publishing Group UK London},
}

@Article{kimble1998strong,
  author    = {Kimble, H Jeff},
  journal   = {Physica Scripta},
  title     = {Strong interactions of single atoms and photons in cavity {QED}},
  year      = {1998},
  number    = {T76},
  pages     = {127},
  volume    = {1998},
  doi       = {10.1238/physica.topical.076a00127},
  publisher = {IOP Publishing},
}

@Article{yoshie2004vacuum,
  author    = {Yoshie, Tomoyuki and Scherer, Axel and Hendrickson, J and Khitrova, Galina and Gibbs, HM and Rupper, G and Ell, C and Shchekin, OB and Deppe, DG},
  journal   = {Nature},
  title     = {Vacuum {R}abi splitting with a single quantum dot in a photonic crystal nanocavity},
  year      = {2004},
  number    = {7014},
  pages     = {200--203},
  volume    = {432},
  doi       = {10.1038/nature03119},
  publisher = {Nature Publishing Group UK London},
}

@Book{larson2021jaynes,
  author    = {Larson, Jonas and Mavrogordatos, Themistoklis},
  publisher = {IoP Publishing},
  title     = {The Jaynes--Cummings Model and its Descendants: Modern Research Directions},
  year      = {2021},
  doi       = {10.1088/978-0-7503-3447-1},
}

@Article{PhysRev.170.379,
  author    = {Tavis, Michael and Cummings, Frederick W.},
  journal   = {Physical Review},
  month     = jun,
  title     = {Exact Solution for an {$N$}-Molecule---Radiation-Field {H}amiltonian},
  year      = {1968},
  pages     = {379--384},
  volume    = {170},
  doi       = {10.1103/PhysRev.170.379},
  issue     = {2},
  numpages  = {0},
  publisher = {American Physical Society},
}

@Article{radulaski2017photon,
  author    = {Radulaski, Marina and Fischer, Kevin A and Lagoudakis, Konstantinos G and Zhang, Jingyuan Linda and Vu{\v{c}}kovi{\'c}, Jelena},
  journal   = {Physical Review A},
  title     = {Photon blockade in two-emitter-cavity systems},
  year      = {2017},
  number    = {1},
  pages     = {011801},
  volume    = {96},
  doi       = {10.1103/physreva.96.011801},
  publisher = {APS},
}

@Article{PhysRevLett.122.243602,
  author    = {Trivedi, Rahul and Radulaski, Marina and Fischer, Kevin A. and Fan, Shanhui and Vu\ifmmode \check{c}\else \v{c}\fi{}kovi\ifmmode \acute{c}\else \'{c}\fi{}, Jelena},
  journal   = {Physical Review Letters},
  month     = jun,
  title     = {Photon Blockade in Weakly Driven Cavity Quantum Electrodynamics Systems with Many Emitters},
  year      = {2019},
  pages     = {243602},
  volume    = {122},
  doi       = {10.1103/PhysRevLett.122.243602},
  issue     = {24},
  numpages  = {6},
  publisher = {American Physical Society},
}

@Article{gross1982superradiance,
  author    = {Gross, Michel and Haroche, Serge},
  journal   = {Physics Reports},
  title     = {Superradiance: An essay on the theory of collective spontaneous emission},
  year      = {1982},
  number    = {5},
  pages     = {301--396},
  volume    = {93},
  doi       = {10.1016/0370-1573(82)90102-8},
  publisher = {Elsevier},
}

@Article{dicke1954coherence,
  author    = {Dicke, Robert H},
  journal   = {Physical Review},
  title     = {Coherence in spontaneous radiation processes},
  year      = {1954},
  number    = {1},
  pages     = {99},
  volume    = {93},
  doi       = {10.1103/physrev.93.99},
  publisher = {APS},
}

@Article{PhysRevA.80.012317,
  author    = {Walther, A. and Amari, A. and Kr\"oll, S. and Kalachev, A.},
  journal   = {Physical Review A },
  month     = jul,
  title     = {Experimental superradiance and slow-light effects for quantum memories},
  year      = {2009},
  pages     = {012317},
  volume    = {80},
  doi       = {10.1103/PhysRevA.80.012317},
  issue     = {1},
  numpages  = {7},
  publisher = {American Physical Society},
}

@Article{PhysRevA.89.052133,
  author    = {Torres, Juan Mauricio},
  journal   = {Physical Review A },
  month     = may,
  title     = {Closed-form solution of {L}indblad master equations without gain},
  year      = {2014},
  pages     = {052133},
  volume    = {89},
  doi       = {10.1103/PhysRevA.89.052133},
  issue     = {5},
  numpages  = {9},
  publisher = {American Physical Society},
}

@Article{QC_review,
  author    = {Kjaergaard, Morten and Schwartz, Mollie E. and Braumüller, Jochen and Krantz, Philip and Wang, Joel I.-J. and Gustavsson, Simon and Oliver, William D.},
  journal   = {Annual Review of Condensed Matter Physics},
  title     = {Superconducting Qubits: Current State of Play},
  year      = {2020},
  issn      = {1947-5462},
  number    = {Volume 11, 2020},
  pages     = {369--395},
  volume    = {11},
  doi       = {10.1146/annurev-conmatphys-031119-050605},
  keywords  = {NISQ era, quantum computing, quantum error correction, quantum algorithms, superconducting circuits, quantum simulation},
  publisher = {Annual Reviews},
  type      = {Journal Article},
}

@article{gorini1976completely,
    author = {Gorini, Vittorio and Kossakowski, Andrzej and Sudarshan, E. C. G.},
    title = {Completely positive dynamical semigroups of {$N$}‐level systems},
    journal = {Journal of Mathematical Physics},
    volume = {17},
    number = {5},
    pages = {821-825},
    year = {1976},
    month = may,
    issn = {0022-2488},
    doi = {10.1063/1.522979},
    url = {https://doi.org/10.1063/1.522979}
}

@Article{cmp/1103899849,
  author    = {G. Lindblad},
  journal   = {Communications in Mathematical Physics},
  title     = {{On the generators of quantum dynamical semigroups}},
  year      = {1976},
  number    = {2},
  pages     = {119--130},
  volume    = {48},
  doi       = {10.1007/bf01608499},
  publisher = {Springer},
}

@Misc{huang2022qubitizationbosons,
  author        = {Xin-Yu Huang and Lang Yu and Xu Lu and Yin Yang and De-Sheng Li and Chun-Wang Wu and Wei Wu and Ping-Xing Chen},
  title         = {Qubitization of Bosons},
  year          = {2022},
  archiveprefix = {arXiv},
  eprint        = {2105.12563},
  eprinttype    = {arXiv},
  primaryclass  = {quant-ph},
}

@Article{Tudorovskaya_2024,
  title = {Quantum computing simulation of a mixed spin-boson {H}amiltonian and its performance for a cavity quantum electrodynamics problem},
  author = {Tudorovskaya, Maria and Mu\~noz Ramo, David},
  journal = {Physical Review A},
  volume = {109},
  issue = {3},
  pages = {032612},
  numpages = {14},
  year = {2024},
  month = {Mar},
  publisher = {American Physical Society},
  doi = {10.1103/PhysRevA.109.032612},
  url = {https://link.aps.org/doi/10.1103/PhysRevA.109.032612}
}

@Article{Chiew_2023,
  author    = {Chiew, Mitchell and Strelchuk, Sergii},
  journal   = {Quantum},
  month     = oct,
  title     = {Discovering optimal fermion-qubit mappings through algorithmic enumeration},
  year      = {2023},
  issn      = {2521-327X},
  pages     = {1145},
  volume    = {7},
  doi       = {10.22331/q-2023-10-18-1145},
  publisher = {Verein zur Forderung des Open Access Publizierens in den Quantenwissenschaften},
}

@Article{Derby_2021,
  title = {Compact fermion to qubit mappings},
  author = {Derby, Charles and Klassen, Joel and Bausch, Johannes and Cubitt, Toby},
  journal = {Physical Review B},
  volume = {104},
  issue = {3},
  pages = {035118},
  numpages = {12},
  year = {2021},
  month = {Jul},
  publisher = {American Physical Society},
  doi = {10.1103/PhysRevB.104.035118},
  url = {https://link.aps.org/doi/10.1103/PhysRevB.104.035118}
}

@Article{RevModPhys.86.153,
  author    = {Georgescu, I. M. and Ashhab, S. and Nori, Franco},
  journal   = {Reviews of Modern Physics},
  month     = mar,
  title     = {Quantum simulation},
  year      = {2014},
  pages     = {153--185},
  volume    = {86},
  doi       = {10.1103/RevModPhys.86.153},
  issue     = {1},
  numpages  = {33},
  publisher = {American Physical Society},
}

@misc{pocrnic2024lindbladian,
      title={Quantum Simulation of {L}indbladian Dynamics via Repeated Interactions}, 
      author={Matthew Pocrnic and Dvira Segal and Nathan Wiebe},
      year={2024},
        month={December},
      eprint={2312.05371},
      archivePrefix={arXiv},}

@Article{Clinton_2024,
  author    = {Clinton, Laura and Cubitt, Toby and Flynn, Brian and Gambetta, Filippo Maria and Klassen, Joel and Montanaro, Ashley and Piddock, Stephen and Santos, Raul A. and Sheridan, Evan},
  journal   = {Nature Communications},
  month     = jan,
  title     = {Towards near-term quantum simulation of materials},
  year      = {2024},
  issn      = {2041-1723},
  number    = {1},
  volume    = {15},
  doi       = {10.1038/s41467-023-43479-6},
  publisher = {Springer Science and Business Media LLC},
    pages= {211},
}

@Article{Marinkovic_2023,
  author    = {Marinkovic, Marina Krstic and Radulaski, Marina},
  journal   = {Scientific Reports},
  month     = nov,
  title     = {Singly-excited resonant open quantum system {T}avis-{C}ummings model with quantum circuit mapping},
  year      = {2023},
  issn      = {2045-2322},
  number    = {1},
  volume    = {13},
  pages={19435},
  doi       = {10.1038/s41598-023-46138-4},
  publisher = {Springer Science and Business Media LLC},
}

@article{QuTiP_1,
title = {Qu{T}i{P}: An open-source {P}ython framework for the dynamics of open quantum systems},
journal = {Computer Physics Communications},
volume = {183},
number = {8},
pages = {1760-1772},
year = {2012},
issn = {0010-4655},
doi = {10.1016/j.cpc.2012.02.021},
url = {https://www.sciencedirect.com/science/article/pii/S0010465512000835},
author = {J.R. Johansson and P.D. Nation and Franco Nori},
keywords = {Open quantum systems, Lindblad master equation, Quantum Monte Carlo, Python},}

@article{QuTiP_2,
title = {Qu{T}i{P} 2: A {P}ython framework for the dynamics of open quantum systems},
journal = {Computer Physics Communications},
volume = {184},
number = {4},
pages = {1234-1240},
year = {2013},
issn = {0010-4655},
doi = {10.1016/j.cpc.2012.11.019},
url = {https://www.sciencedirect.com/science/article/pii/S0010465512003955},
author = {J.R. Johansson and P.D. Nation and Franco Nori},
keywords = {Open quantum systems, Lindblad, Bloch–Redfield, Floquet–Markov, Master equation, Quantum Monte Carlo, Python},}

@article{lei2023many,
  title={Many-body cavity quantum electrodynamics with driven inhomogeneous emitters},
  author={Lei, Mi and Fukumori, Rikuto and Rochman, Jake and Zhu, Bihui and Endres, Manuel and Choi, Joonhee and Faraon, Andrei},
  journal={Nature},
  volume={617},
  number={7960},
  pages={271--276},
  year={2023},
  publisher={Nature Publishing Group UK London},
  doi = {10.1038/s41586-023-05884-1}
}

@article{zhong2017interfacing,
  title={Interfacing broadband photonic qubits to on-chip cavity-protected rare-earth ensembles},
  author={Zhong, Tian and Kindem, Jonathan M. and Rochman, Jake and Faraon, Andrei},
  journal={Nature Communications},
  doi = {10.1038/ncomms14107},
  volume={8},
  number={1},
  pages={14107},
  year={2017},
  publisher={Nature Publishing Group UK London}
}

@article{lukin2023two,
  title={Two-emitter multimode cavity quantum electrodynamics in thin-film silicon carbide photonics},
  author={Lukin, Daniil M. and Guidry, Melissa A. and Yang, Joshua and Ghezellou, Misagh and Deb Mishra, Sattwik and Abe, Hiroshi and Ohshima, Takeshi and Ul-Hassan, Jawad and Vu{\v{c}}kovi{\'c}, Jelena},
  journal={Physical Review X},
  doi = {10.1103/PhysRevX.13.011005},
  volume={13},
  number={1},
  pages={011005},
  year={2023},
  publisher={APS}
}

@article{white2022enhancing,
  title={Enhancing superradiance in spectrally inhomogeneous cavity {QED} systems with dynamic modulation},
  author={White, Alexander D. and Trivedi, Rahul and Narayanan, Kalyan and Vučković, Jelena},
  doi = {10.1021/acsphotonics.2c00581},
  journal={ACS Photonics},
  volume={9},
  number={7},
  pages={2467--2472},
  year={2022},
  publisher={ACS Publications}
}

@incollection{radulaski2017nonclassical,
  title={Nonclassical light generation from {III--V} and group-{IV} solid-state cavity quantum systems},
  author={Radulaski, Marina and Fischer, Kevin A and Vu{\v{c}}kovi{\'c}, Jelena},
  booktitle={Advances In Atomic, Molecular, and Optical Physics},
  volume={66},
  pages={111--179},
  year={2017},
  publisher={Elsevier},
  doi = {10.1016/bs.aamop.2017.03.001}
}

@incollection{STEVENS201325,
title = {Chapter 2 --- {Photon Statistics, Measurements, and Measurements Tools}},
editor = {Alan Migdall and Sergey V. Polyakov and Jingyun Fan and Joshua C. Bienfang},
series = {Experimental Methods in the Physical Sciences},
publisher = {Academic Press},
volume = {45},
pages = {25-68},
year = {2013},
booktitle = {Single-Photon Generation and Detection},
issn = {1079-4042},
doi = {https://doi.org/10.1016/B978-0-12-387695-9.00002-0},
url = {https://www.sciencedirect.com/science/article/pii/B9780123876959000020},
author = {Martin J. Stevens},
}

@article{10.1063/1.1518555,
    author = {Havel, Timothy F.},
    title = {Robust procedures for converting among {L}indblad, {K}raus and matrix representations of quantum dynamical semigroups},
    journal = {Journal of Mathematical Physics},
    volume = {44},
    number = {2},
    pages = {534-557},
    year = {2003},
    month = {02},
    issn = {0022-2488},
    doi = {10.1063/1.1518555},
    url = {https://doi.org/10.1063/1.1518555}
}

@book{golub2013matrix,
  title={Matrix Computations},
  author={Golub, Gene H. and Van Loan, Charles F.},
  year={2013},
  edition={4},
  publisher={Johns Hopkins University Press},
  isbn={9781421407944},
  doi={10.56021/9781421407944},
  chapter={7.5}
}

@article{RevModPhys.93.015008,
  title = {Simulation methods for open quantum many-body systems},
  author = {Weimer, Hendrik and Kshetrimayum, Augustine and Or\'us, Rom\'an},
  journal = {Reviews of Modern Physics},
  volume = {93},
  issue = {1},
  pages = {015008},
  numpages = {24},
  year = {2021},
  month = Mar,
  publisher = {American Physical Society},
  doi = {10.1103/RevModPhys.93.015008},
  url = {https://link.aps.org/doi/10.1103/RevModPhys.93.015008}
}

@misc{lidar2020lecturenotestheoryopen,
      title={Lecture Notes on the Theory of Open Quantum Systems}, 
      author={Daniel A. Lidar},
      year={2020},
      eprint={1902.00967},
      archivePrefix={arXiv},
      primaryClass={quant-ph},
      url={https://arxiv.org/abs/1902.00967}, 
}

@book{breuer2002theory,
  title={The Theory of Open Quantum Systems},
  author={Breuer, Heinz-Peter and Petruccione, Francesco},
  year={2002},
  publisher={Oxford University Press, USA},
  doi = {10.1093/acprof:oso/9780199213900.001.0001}
}

@article{PhysRevLett.68.580,
  title = {Wave-function approach to dissipative processes in quantum optics},
  author = {Dalibard, Jean and Castin, Yvan and M\o{}lmer, Klaus},
  journal = {Physical Review Letters},
  volume = {68},
  issue = {5},
  pages = {580--583},
  numpages = {0},
  year = {1992},
  month = Feb,
  publisher = {American Physical Society},
  doi = {10.1103/PhysRevLett.68.580},
  url = {https://link.aps.org/doi/10.1103/PhysRevLett.68.580}
}

@article{PhysRevA.45.4879,
  title = {Monte Carlo simulation of the atomic master equation for spontaneous emission},
  author = {Dum, R. and Zoller, P. and Ritsch, H.},
  journal = {Physical Review A},
  volume = {45},
  issue = {7},
  pages = {4879--4887},
  numpages = {0},
  year = {1992},
  month = Apr,
  publisher = {American Physical Society},
  doi = {10.1103/PhysRevA.45.4879},
  url = {https://link.aps.org/doi/10.1103/PhysRevA.45.4879}
}

@book{carmichael2009open,
  title={An open systems approach to quantum optics: Lectures presented at the Universit{\'e} Libre de Bruxelles, October 28 to November 4, 1991},
  doi = {10.1007/978-3-540-47620-7},
  author={Carmichael, Howard},
  volume={18},
  year={2009},
  publisher={Springer Science \& Business Media}
}

@misc{ CRediT,
  author       = {{NISO}},
  title        = {Credit -- contributor roles taxonomy},
  howpublished = {\url{https://credit.niso.org/}},
  note         = {Accessed: 2025-01-12}
}

@Article{lloyd1996universal,
  author    = {Lloyd, Seth},
  journal   = {Science},
  title     = {Universal quantum simulators},
  year      = {1996},
  number    = {5278},
  pages     = {1073--1078},
  volume    = {273},
  doi       = {10.1126/science.273.5278.1073},
  publisher = {American Association for the Advancement of Science},
}

@article{Kimmel_2017,
   title={Hamiltonian simulation with optimal sample complexity},
   volume={3},
   pages = {13},
   ISSN={2056-6387},
   url={http://dx.doi.org/10.1038/s41534-017-0013-7},
   DOI={10.1038/s41534-017-0013-7},
   number={1},
   journal={npj Quantum Information},
   publisher={Springer Science and Business Media LLC},
   author={Kimmel, Shelby and Lin, Cedric Yen-Yu and Low, Guang Hao and Ozols, Maris and Yoder, Theodore J.},
   year={2017},
   month=mar }

@misc{go2024densitymatrixexponentiationsamplebased,
      title={Density matrix exponentiation and sample-based {H}amiltonian simulation: Non-asymptotic analysis of sample complexity}, 
      author={Byeongseon Go and Hyukjoon Kwon and Siheon Park and Dhrumil Patel and Mark M. Wilde},
      year={2024},
      eprint={2412.02134},
      archivePrefix={arXiv},
      primaryClass={quant-ph},
      url={https://arxiv.org/abs/2412.02134}, 
}

@article{Dhar_2018,
   title={Variational Renormalization Group for Dissipative Spin-Cavity Systems: Periodic Pulses of Nonclassical Photons from Mesoscopic Spin Ensembles},
   volume={121},
   ISSN={1079-7114},
   url={http://dx.doi.org/10.1103/PhysRevLett.121.133601},
   DOI={10.1103/physrevlett.121.133601},
   number={13},
   journal={Physical Review Letters},
   publisher={American Physical Society (APS)},
   author={Dhar, Himadri Shekhar and Zens, Matthias and Krimer, Dmitry O. and Rotter, Stefan},
   year={2018},
   month=sep }

@article{Zens_2021,
   title={Periodic Cavity State Revivals from Atomic Frequency Combs},
   volume={127},
   ISSN={1079-7114},
   url={http://dx.doi.org/10.1103/PhysRevLett.127.180402},
   DOI={10.1103/physrevlett.127.180402},
   number={18},
   journal={Physical Review Letters},
   publisher={American Physical Society (APS)},
   author={Zens, Matthias and Krimer, Dmitry O. and Dhar, Himadri S. and Rotter, Stefan},
   year={2021},
   month=oct }

@misc{tiwary2024protectinginformationparametricallydriven,
      title={Protecting information in a parametrically driven hybrid quantum system}, 
      author={Siddharth Tiwary and Harsh Sharma and Himadri Shekhar Dhar},
      year={2024},
      eprint={2207.14354},
      archivePrefix={arXiv},
      primaryClass={quant-ph},
      url={https://arxiv.org/abs/2207.14354}, 
}

@misc{rubin2024,
      title={Digital Quantum Simulation of Cavity Quantum Electrodynamics: Insights from Superconducting and Trapped Ion Quantum Testbeds}, 
      author={Alex H. Rubin and Brian Marinelli and Victoria A. Norman and Zainab Rizvi and Ashlyn D. Burch and Ravi K. Naik and John Mark Kreikebaum and Matthew N. H. Chow and Daniel S. Lobser and Melissa C. Revelle and Christopher G. Yale and Megan Ivory and David I. Santiago and Christopher Spitzer and Marina Krstic-Marinkovic and Susan M. Clark and Irfan Siddiqi and Marina Radulaski},
      year={2024},
      eprint={2404.03861},
      archivePrefix={arXiv},
      primaryClass={quant-ph},
      url={https://arxiv.org/abs/2404.03861}, 
}

\appendix
\onecolumngrid


\section{Decomposition of the Lindblad Operator~\texorpdfstring{$M$}{M}} 

\label{app:WML_fixed_inter_unitaries}

    In Section~\ref{sec:wml-practical}, we presented a decomposition of the Lindblad operator $M$, defined in~\eqref{eqn:fixed_interaction_WML}, as a linear combination of unitaries for the case of the system of interest being a single qubit. In this appendix, we extend this to the case where the system of interest consists of multiple qubits.
    
    To begin with, let $A$ denote the system register consisting of $q$ qubits. Similarly, let $B$ and $C$ jointly denote the program register, each consisting of $q$ qubits. As stated previously in~\eqref{eq:intro-M}, the Lindbladian $\mathcal{M}$ acts jointly on the system register and program register, and it consists of a single Lindblad operator $M$, which is given as follows:
    \begin{equation}
        M = \frac{1}{2^{q/2}}\left(I_{A}\otimes |\Gamma\rangle\!\langle\Gamma|_{BC}\right)\left(\operatorname{SWAP}_{AB}\otimes I_C\right).
    \end{equation}
    Now, consider the fact that the multi-qubit operators $I_{A}, |\Gamma\rangle\!\langle\Gamma|_{BC},$ and $\operatorname{SWAP}_{AB}$ can be decomposed as tensor products of operators that each act on only one or two qubits:
    \begin{align}
        I_{A} & = I_{A_1} \otimes I_{A_2} \otimes \cdots \otimes I_{A_q},\\
        |\Gamma\rangle\!\langle\Gamma|_{BC} & = |\Gamma\rangle\!\langle\Gamma|_{B_1C_1} \otimes |\Gamma\rangle\!\langle\Gamma|_{B_2C_2} \otimes \cdots \otimes |\Gamma\rangle\!\langle\Gamma|_{B_qC_q} ,\\
        \operatorname{SWAP}_{AB} & = \operatorname{SWAP}_{A_1B_1} \otimes \operatorname{SWAP}_{A_2B_2} \otimes \cdots \otimes \operatorname{SWAP}_{A_qB_q},
    \end{align}
    where the registers $\{A_i\}_{i\in[q]}$, $\{B_i\}_{i\in[q]}$, and $\{C_i\}_{i\in[q]}$ are all single-qubit registers and $A \coloneqq A_1 \otimes \cdots \otimes A_q $, $B \coloneqq B_1 \otimes \cdots \otimes B_q$, and  $C \coloneqq C_1 \otimes \cdots \otimes C_q$. Using the above equalities, we decompose $M$ as follows:
    \begin{align}
        M & = \underbrace{\frac{1}{2^{1/2}}\left(I_{A_1}\otimes |\Gamma\rangle\!\langle\Gamma|_{B_1C_1}\right)\left(\operatorname{SWAP}_{A_1B_1}\otimes I_{C_1}\right)}_{\eqqcolon M_1} \otimes \underbrace{\frac{1}{2^{1/2}}\left(I_{A_2}\otimes |\Gamma\rangle\!\langle\Gamma|_{B_2C_2}\right)\left(\operatorname{SWAP}_{A_2B_2}\otimes I_{C_2}\right)}_{\eqqcolon M_2}\notag\\
        & \qquad \otimes \cdots \otimes \underbrace{\frac{1}{2^{1/2}}\left(I_{A_q}\otimes |\Gamma\rangle\!\langle\Gamma|_{B_qC_q}\right)\left(\operatorname{SWAP}_{A_qB_q}\otimes I_{C_q}\right) }_{\eqqcolon M_q}\\
        & = M_1 \otimes M_2 \otimes \cdots \otimes M_q.\label{eq:decomp-M-M_i}
    \end{align}
    
    It is straightforward to see that we can obtain the linear-combination expression for $M$ if we can get the linear-combination expression for each $M_i$; therefore, we now focus on obtaining the latter. For all $i\in [q]$, $\operatorname{SWAP}_{A_iB_i}$ and $|\Gamma\rangle\!\langle\Gamma|_{B_iC_i}$ can be written in terms of Pauli strings as follows:
    \begin{align}
        \operatorname{SWAP}_{A_iB_i}&= \textcolor{green}{\frac{1}{2} \left(\right.}I_{A_i} \otimes I_{B_i}+ X_{A_i}\otimes X_{B_i}+Y_{A_i}\otimes Y_{B_i}+Z_{A_i}\otimes Z_{B_i} \textcolor{green}{\left.\right)},\\
        |\Gamma\rangle\!\langle\Gamma|_{B_iC_i} &= \textcolor{green}{\frac{1}{2} \left(\right.}I_{B_i}\otimes I_{C_i}+ X_{B_i}\otimes X_{C_i}-Y_{B_i}\otimes Y_{C_i}+Z_{B_i}\otimes Z_{C_i}\textcolor{green}{\left.\right)}.
    \end{align}
    Ignoring the system labels for simplicity, we can rewrite each $M_i$ as follows:
    \begin{align}
        M_i & = \frac{1}{\textcolor{green}{4\cdot}2^{1/2}}\Big(\left.I\otimes I\otimes I + X\otimes X\otimes I + Y\otimes Y \otimes I \right.+ Z\otimes Z\otimes I +I\otimes X\otimes X + X\otimes I\otimes X \notag \\
        &\quad + Y\otimes iZ \otimes X - Z\otimes iY\otimes X -I\otimes Y\otimes Y + X\otimes iZ\otimes Y - Y\otimes I\otimes Y - Z\otimes iX\otimes Y \notag \\
        &\quad + I\otimes Z\otimes Z + X\otimes iY\otimes Z - Y\otimes iX \otimes Z+ Z\otimes I \otimes Z \Big).\label{eq:M_i-exp-full}
    \end{align}
    
    Observe that there are 16 terms in the above linear-combination expression. This implies that there are $16^q$ or $2^{4q}$ terms in the linear-combination expression for $M$.

\section{Wave Matrix Lindbladization \texorpdfstring{$e^{\mathcal{M}\Delta}$}{exp(M Delta)} Channel Protocol 1}~\label{app:WML_channel}

    In this section, we outline a protocol for the implementation of the fixed interaction channel $e^{\mathcal{M}\Delta}$ of the WML algorithm using symmetries inherent to the operator $M$. For simplicity, the following protocol details the steps for the implementation of $e^{\mathcal{M}\Delta}$ channel with three qubits, $A$, $B$, and $C$, as input, where $A$ is the system of interest, and $BC$ contains the program state. We detail how we employ the LCU-based algorithm proposed in~\cite{cleve2019efficient} to realize this fixed interaction. Using the LCU method, we can produce a quantum map $\mathcal{M}_\Delta$ that approximates $e^{\mathcal{M}\Delta}$, where $\mathcal{M}_\Delta$ can be written in the following manner:
    \begin{equation}\label{eqn:lcu_approx_wml_appendix}
        \mathcal{M}_\Delta(O) = \sum^1_{j=0}A_j O A_j^\dagger,
    \end{equation}
    with $A_0 = I -\frac{\Delta}{2}M^\dagger M$ and $A_1 = \sqrt{\Delta} M$. Since we will express $M$ and $M^\dagger M$ as a linear combination of unitaries, we can use the LCU method~\cite{cleve2019efficient} to implement this approximation. First, recall the definition of $M$ from~\eqref{eqn:fixed_interaction_WML}
    \begin{align}
        M  & =\frac{1}{\sqrt{2}}\left(  I_{A}\otimes|\Gamma\rangle\!\langle\Gamma
        |_{BC}\right)  \left(  \text{SWAP}_{AB}\otimes I_{C}\right)  \\
        & =\frac{1}{\textcolor{green}{4}\sqrt{2}}\left(  I_{A}\otimes I_{B}\otimes I_{C}+I_{A}\otimes
        X_{B}\otimes X_{C}+I_{A}\otimes Z_{B}\otimes Z_{C}-I_{A}\otimes Y_{B}\otimes
        Y_{C}\right)  \times\nonumber\\
        & \qquad\left(  I_{A}\otimes I_{B}\otimes I_{C}+X_{A}\otimes X_{B}\otimes
        I_{C}+Z_{A}\otimes Z_{B}\otimes I_{C}+Y_{A}\otimes Y_{B}\otimes I_{C}\right)
        \\
        & =\frac{1}{\textcolor{green}{4}\sqrt{2}}\sum_{i,j,k,\ell\in\{0,1\}}\left(  Z_{B}^{i}X_{B}^{j}\otimes
        Z_{C}^{i}X_{C}^{j}\right)  \left(  -1\right)  ^{k\cdot\ell}\left(  Z_{A}%
        ^{k}X_{A}^{\ell}\otimes Z_{B}^{k}X_{B}^{\ell}\right)  .
    \end{align}
    Then it follows that
    \begin{align}
        M^{\dag}M  & =\left(\text{SWAP}_{AB}\otimes I_{C}\right)  \frac{1}{\sqrt{2}}\left(  I_{A}\otimes|\Gamma\rangle\!\langle\Gamma|_{BC}\right)  \frac{1}{\sqrt{2}}\left(  I_{A}\otimes|\Gamma\rangle\!\langle\Gamma|_{BC}\right)\left(\text{SWAP}_{AB}\otimes I_{C}\right)  \\
        & =\left(  \text{SWAP}_{AB}\otimes I_{C}\right)  \left(  I_{A}\otimes
        |\Gamma\rangle\!\langle\Gamma|_{BC}\right)  \left(  \text{SWAP}_{AB}\otimes
        I_{C}\right)  \\
        & =|\Gamma\rangle\!\langle\Gamma|_{AC}\otimes I_{B}\\
        & = \textcolor{green}{\frac{1}{2}\left(\right.} I_{A}\otimes I_{B}\otimes I_{C}+X_{A}\otimes I_{B}\otimes X_{C}%
        +Z_{A}\otimes I_{B}\otimes Z_{C}-Y_{A}\otimes I_{B}\otimes Y_{C} \textcolor{green}{\left.\right)}\\
        & = \textcolor{green}{\frac{1}{2}} \sum_{k,\ell \in \{0,1\}}\left(  Z_{A}^{k}X_{A}^{\ell}\otimes Z_{C}^{k}X_{C}^{\ell
        }\right)  .
    \end{align}
    We would like to implement a quantum channel that has the following two Kraus operators to approximate $e^{\mathcal{M}\Delta}$:
    \begin{align} 
        A_{0}  & =I-\frac{\Delta}{2}M^{\dag}M,\\
        A_{1}  & =\sqrt{\Delta}M.
    \end{align}
    To do so, we can use linear combination of unitaries (LCU) methods~\cite{cleve2019efficient}. Consider that the first Kraus operator can be written as%
    \begin{align}
        A_{0}  & =I_{A}\otimes I_{B}\otimes I_{C}-\frac{\Delta}{\textcolor{green}{4}}\left(  I_{A}\otimes I_{B}\otimes I_{C}+X_{A}\otimes I_{B}\otimes X_{C}+Z_{A}\otimes I_{B}\otimes Z_{C}-Y_{A}\otimes I_{B}\otimes Y_{C}\right)  \\
        & =I_{A}\otimes I_{B}\otimes I_{C}+\frac{\Delta}{\textcolor{green}{4}}\left(  -I_{A}\otimes I_{B}\otimes I_{C}-X_{A}\otimes I_{B}\otimes X_{C}-Z_{A}\otimes I_{B}\otimes Z_{C}+Y_{A}\otimes I_{B}\otimes Y_{C}\right)  .
    \end{align}

\begin{figure}
    \centering
    \includegraphics[width=0.9\linewidth]{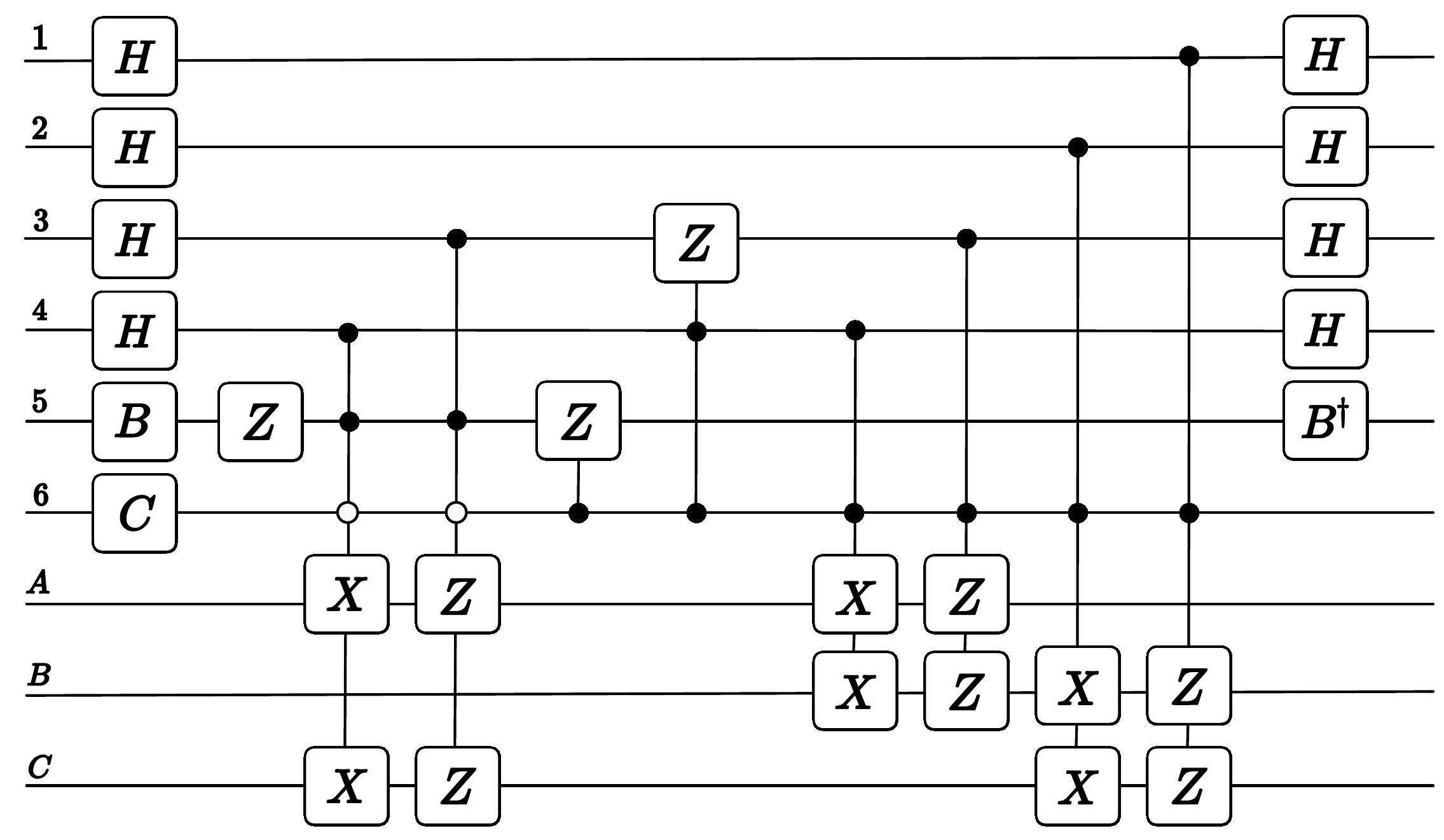}
    \caption{Circuit diagram for Protocol~1 for approximately implementing the channel $e^{\mathcal{M}\Delta}$.}
    \label{fig:protocol-1}
\end{figure}

    An LCU algorithm for implementing $e^{\mathcal{M}\Delta}$ is depicted in Figure~\ref{fig:protocol-1}. Let us verify that the constructed circuit in Figure~\ref{fig:protocol-1} is indeed correct. We define unitaries $B$ and $C$ as follows: 
    
    \begin{align}
        \textcolor{green}{B_5|0\rangle} & = \textcolor{green}{\frac{1}{\sqrt{1+\Delta}}\left(|0\rangle+\sqrt{\Delta}|1\rangle\right)},\\
        \textcolor{green}{C_6|0\rangle} & = \textcolor{green}{\frac{1}{\sqrt{(1+\Delta)^2+8\Delta}}\left((1+\Delta)|0\rangle+ 2\sqrt{2\Delta}|1\rangle\right)}.
    \end{align}
    The remaining gates, in order of their appearance, are defined as 
    \begin{align}
        C_{456}X_{A}X_{C} &  \coloneqq |1\rangle\!\langle1|_{4}\otimes|1\rangle\!\langle1|_{5}\otimes|0\rangle\!\langle 0|_{6}\otimes X_{A}\otimes X_{C} +\left(I_{456}-|1\rangle\!\langle1|_{4}\otimes|1\rangle\!\langle1|_{5}\otimes|0\rangle\!\langle 0|_{6}\right)  \otimes I_{A}\otimes I_{B},\\
        C_{356}Z_{A}Z_{C} &  \coloneqq |1\rangle\!\langle1|_{3}\otimes|1\rangle\!\langle1|_{5}\otimes|0\rangle\!\langle 0|_{6}\otimes Z_{A}\otimes Z_{C} +\left(  I_{356}-|1\rangle\!\langle1|_{3}\otimes|1\rangle\!\langle1|_{5}\otimes|0\rangle\!\langle 0|_{6}\right)  \otimes I_{A}\otimes I_{B},\\
        C_{46}Z_3 &  \coloneqq |1\rangle\!\langle1|_{4}\otimes|1\rangle\!\langle 1|_{6}\otimes Z_3 +\left(I_{46}-|1\rangle\!\langle1|_{4}\otimes|1\rangle\!\langle 1|_{6}\right)  \otimes I_{3},\\
        C_{6}Z_5 &  \coloneqq |1\rangle\!\langle 1|_{6}\otimes Z_5 +\left(I_{6}-|1\rangle\!\langle1|_{6}\right)  \otimes I_{5},\\
        C_{46}X_{A}X_{B} &  \coloneqq |1\rangle\!\langle1|_{4}\otimes|1\rangle\!\langle 1|_{6}\otimes X_{A}\otimes X_{B} +\left(I_{46}-|1\rangle\!\langle1|_{4}\otimes|1\rangle\!\langle 1|_{6}\right)  \otimes I_{A}\otimes I_{B},\\
        C_{36}Z_{A}Z_{B} &  \coloneqq |1\rangle\!\langle1|_{3}\otimes|1\rangle\!\langle 1|_{6}\otimes Z_{A}\otimes Z_{B} +\left(  I_{36}-|1\rangle\!\langle1|_{3}\otimes|1\rangle\!\langle 1|_{6}\right)  \otimes I_{A}\otimes I_{B},\\
        C_{16}X_{B}X_{C} &  \coloneqq |1\rangle\!\langle1|_{1}\otimes|1\rangle\!\langle 1|_{6}\otimes X_{B}\otimes X_{C} +\left(I_{16}-|1\rangle\!\langle1|_{1}\otimes|1\rangle\!\langle 1|_{6}\right)  \otimes I_{B}\otimes I_{C},\\
        C_{26}Z_{B}Z_{C} &  \coloneqq |1\rangle\!\langle1|_{2}\otimes|1\rangle\!\langle 1|_{6}\otimes Z_{B}\otimes Z_{C} +\left(  I_{16}-|1\rangle\!\langle1|_{2}\otimes|1\rangle\!\langle 1|_{6}\right)  \otimes I_{B}\otimes I_{C}.
    \end{align}
    The state after the application of the gates $H_1 H_2 H_3 H_4 Z_5 B_5 C_6$ is%
    \begin{align}
        & |+\rangle_{1}|+\rangle_{2}|+\rangle_{3}|+\rangle_{4}\left(  |0\rangle
        _{5}-\textcolor{green}{\sqrt{\Delta}}|1\rangle_{5}\right)  \left(  \left(  1+\textcolor{green}{\Delta}\right)
        |0\rangle_{6}+\textcolor{green}{2\sqrt{2\Delta}}|1\rangle_{6}\right)  |\psi\rangle_{ABC}%
        \nonumber\\
        & =\left(1+\textcolor{green}{\Delta}\right)  |+\rangle_{1}|+\rangle_{2}|+\rangle_{3}%
        |+\rangle_{4}|0\rangle_{5}|0\rangle_{6}|\psi\rangle_{ABC} +\textcolor{green}{2\sqrt{2\Delta}}|+\rangle_{1}|+\rangle_{2}|+\rangle_{3}|+\rangle_{4}\left(  |0\rangle_{5}-\textcolor{green}{\sqrt{\Delta}}|1\rangle_{5}\right)  |1\rangle_{6}|\psi\rangle_{ABC}\nonumber\\
        & \qquad-\textcolor{green}{\sqrt{\Delta}\left(  1+\Delta\right)}  |+\rangle_{1}|+\rangle
        _{2}|+\rangle_{3}|+\rangle_{4}|1\rangle_{5}|0\rangle_{6}|\psi\rangle_{ABC}.
    \end{align}
    The first term in the superposition is never modified by the circuit. We
    just need to handle the other terms. Let
    \begin{equation}
        |\Delta_{-}\rangle_{5}\equiv|0\rangle_{5}-\textcolor{green}{\sqrt{\Delta}}|1\rangle_{5}.
    \end{equation}
    The second term can be written as follows%
    \begin{align}
        & |+\rangle_{1}|+\rangle_{2}|+\rangle_{3}|+\rangle_{4}|\Delta_{-}\rangle
        _{5}|1\rangle_{6}|\psi\rangle_{ABC}=\frac{1}{4}\sum_{i,j,k,\ell}|i\rangle_{1}|j\rangle_{2}|k\rangle_{3}%
        |\ell\rangle_{4}|\Delta_{-}\rangle_{5}|1\rangle_{6}|\psi\rangle_{ABC}.
    \end{align}
    After the application of $C_{46}Z_3$, we get
    \begin{align}
        & \rightarrow\frac{1}{4}\sum_{i,j,k,\ell}\left(  -1\right)  ^{k\cdot\ell}|i\rangle_{1}|j\rangle_{2}|k\rangle_{3}|\ell\rangle_{4}|\Delta_{-}\rangle_{5}|1\rangle_{6}|\psi\rangle_{ABC}.
    \end{align}
    After the application of $C_{46}X_A X_B$ and $C_{36}Z_A Z_B$, we get
    \begin{align}
        &\rightarrow\frac{1}{4}\sum_{i,j,k,\ell}\left(  -1\right)  ^{k\cdot\ell}|i\rangle_{1}|j\rangle_{2}|k\rangle_{3}|\ell\rangle_{4}|\Delta_{-}\rangle_{5}|1\rangle_{6}\left(  Z_{A}^{k}X_{A}^{\ell}\otimes Z_{B}^{k}X_{B}^{\ell}\right)  |\psi\rangle_{ABC}
    \end{align}
    After the application of $C_{26}X_B X_C$ and $C_{16}Z_B Z_C$, we get
    \begin{align}
        &\rightarrow\frac{1}{4}\sum_{i,j,k,\ell}|i\rangle_{1}|j\rangle_{2}|k\rangle_{3}|\ell\rangle_{4}|\Delta_{-}\rangle_{5}|1\rangle_{6}\left(Z_{B}^{i}X_{B}^{j}\otimes Z_{C}^{i}X_{C}^{j}\right)\left(-1\right)^{k\cdot\ell} \left(Z_{A}^{k}X_{A}^{\ell}\otimes Z_{B}^{k}X_{B}^{\ell}\right)  |\psi\rangle_{ABC}.
    \end{align}
    The third term can be expressed as follows
    \begin{align}
        &|+\rangle_{1}|+\rangle_{2}|+\rangle_{3}|+\rangle_{4}|1\rangle_{5}|0\rangle_{6}|\psi\rangle_{ABC}=|+\rangle_{1}|+\rangle_{2}\frac{1}{2}\sum_{k,\ell}|k\rangle_{3}|\ell\rangle_{4}|1\rangle_{5}|0\rangle_{6}|\psi\rangle_{ABC}.
    \end{align}
    After the application of $C_{456}X_A X_C$ and $C_{356}Z_A Z_C$, we get
    \begin{align}
        &\rightarrow|+\rangle_{1}|+\rangle_{2}\frac{1}{2}\sum_{k,\ell}|k\rangle_{3}|\ell\rangle_{4}|1\rangle_{5}|0\rangle_{6}\left(  Z_{A}^{k}X_{A}^{\ell}\otimes Z_{C}^{k}X_{C}^{\ell}\right)  |\psi\rangle_{ABC}.
    \end{align}
    So this means that the overall superposition goes to
    \begin{align}
        & =\left(  1+\textcolor{green}{\Delta}\right)  |+\rangle_{1}|+\rangle_{2}|+\rangle_{3}|+\rangle_{4}|0\rangle_{5}|0\rangle_{6}|\psi\rangle_{ABC}\nonumber\\
        &\qquad+\textcolor{green}{2\sqrt{2\Delta}}\frac{1}{4}\sum_{i,j,k,\ell}|i\rangle_{1}|j\rangle_{2}|k\rangle_{3}|\ell\rangle_{4}|\Delta_{-}\rangle_{5}|1\rangle_{6}\left(Z_{B}^{i}X_{B}^{j}\otimes Z_{C}^{i}X_{C}^{j}\right)  \left(  -1\right)^{k\cdot\ell}\left(  Z_{A}^{k}X_{A}^{\ell}\otimes Z_{B}^{k}X_{B}^{\ell}\right)  |\psi\rangle_{ABC}\nonumber\\
        &\qquad-\textcolor{green}{\sqrt{\Delta}\left(  1+\Delta\right)}  |+\rangle_{1}|+\rangle_{2}\frac{1}{2}\sum_{k,\ell}|k\rangle_{3}|\ell\rangle_{4}|1\rangle_{5}|0\rangle_{6}\left(  Z_{A}^{k}X_{A}^{\ell}\otimes Z_{C}^{k}X_{C}^{\ell}\right)  |\psi\rangle_{ABC}\\
        &=\left(  1+\textcolor{green}{\Delta}\right)  |+\rangle_{1}|+\rangle_{2}|+\rangle_{3}|+\rangle_{4}|0\rangle_{5}|0\rangle_{6}|\psi\rangle_{ABC}\nonumber\\
        &\qquad+\textcolor{green}{\sqrt{\frac{\Delta}{2}}}\sum_{i,j,k,\ell}|i\rangle_{1}|j\rangle_{2}|k\rangle_{3}|\ell\rangle_{4}|\Delta_{-}\rangle_{5}|1\rangle_{6}\left(Z_{B}^{i}X_{B}^{j}\otimes Z_{C}^{i}X_{C}^{j}\right)  \left(  -1\right)^{k\cdot\ell}\left(  Z_{A}^{k}X_{A}^{\ell}\otimes Z_{B}^{k}X_{B}^{\ell}\right)  |\psi\rangle_{ABC}\nonumber\\
        &\qquad-\textcolor{green}{\frac{\sqrt{\Delta}}{2}\left(  1+\Delta\right)} |+\rangle_{1}|+\rangle_{2}\sum_{k,\ell}|k\rangle_{3}|\ell\rangle_{4}|1\rangle_{5}|0\rangle_{6}\left(  Z_{A}^{k}X_{A}^{\ell}\otimes Z_{C}^{k}X_{C}^{\ell}\right)  |\psi\rangle_{ABC}.
    \end{align}
    Now applying the projection onto $\langle+|_{1}\langle+|_{2}\langle+|_{3}\langle+|_{4}$, the state then becomes
    \begin{multline}
        \left(  1+\textcolor{green}{\Delta}\right)  |0\rangle_{5}|0\rangle_{6}|\psi\rangle_{ABC}%
        +\textcolor{green}{\sqrt{\frac{\Delta}{32}}}|\Delta_{-}\rangle_{5}|1\rangle_{6}\sum_{i,j,k,\ell}\left(
        Z_{B}^{i}X_{B}^{j}\otimes Z_{C}^{i}X_{C}^{j}\right)  \left(  -1\right)
        ^{k\cdot\ell}\left(  Z_{A}^{k}X_{A}^{\ell}\otimes Z_{B}^{k}X_{B}^{\ell
        }\right)  |\psi\rangle_{ABC}\\
        -\frac{1}{2}\sqrt{\frac{\Delta}{2}}\left(1+\textcolor{green}{\Delta}\right)
        |1\rangle_{5}|0\rangle_{6}\sum_{k,\ell}\left(Z_{A}^{k}X_{A}^{\ell}\otimes
        Z_{C}^{k}X_{C}^{\ell}\right)|\psi\rangle_{ABC}\\ =\left(1+\textcolor{green}{\Delta}\right) |0\rangle_{5}|0\rangle_{6}|\psi\rangle_{ABC}+\sqrt{\Delta}|\Delta_{-}\rangle_{5}|1\rangle_{6}M|\psi\rangle_{ABC}-\frac{1}{2}\sqrt{\frac{\Delta}{2}}\left(1+\textcolor{green}{\Delta}\right)  |1\rangle_{5}|0\rangle_{6}M^{\dag}M|\psi\rangle_{ABC}.
    \end{multline}
    Now, apply $C_6 Z_5$. The state then becomes%
    \begin{equation}
        \left(  1+\textcolor{green}{\Delta}\right)  |0\rangle_{5}|0\rangle_{6}|\psi\rangle_{ABC}+\sqrt{\Delta}|\Delta_{+}\rangle_{5}|1\rangle_{6}M|\psi\rangle_{ABC}-\frac{1}{2}\sqrt{\frac{\Delta}{2}}\left(  1+\textcolor{green}{\Delta}\right)  |1\rangle_{5}|0\rangle_{6}M^{\dag}M|\psi\rangle_{ABC},
    \end{equation}
    where%
    \begin{equation}
    |\Delta_{+}\rangle\equiv|0\rangle_{5}+\textcolor{green}{\sqrt{\Delta}}|1\rangle_{5}%
    \end{equation}
    Now apply the projection onto $\langle0|_{5}+\langle1|_{5}\textcolor{green}{\sqrt{\Delta}}$,
    which gives%
    \begin{align}
    & \left(  1+\textcolor{green}{\Delta}\right)  |0\rangle_{6}|\psi\rangle_{ABC}+\left(
    1+\textcolor{green}{\Delta}\right)  \sqrt{\Delta}|1\rangle_{6}M|\psi\rangle_{ABC}-\frac{\Delta
    }{2}\left(  1+\Delta\right)  |0\rangle_{6}M^{\dag}M|\psi\rangle_{ABC}%
    \nonumber\\
    & =\left(  1+\textcolor{green}{\Delta}\right)  |0\rangle_{6}\left(  I-\frac{\Delta}{2}M^{\dag
    }M\right)  |\psi\rangle_{ABC}+\left(  1+\textcolor{green}{\Delta}\right)  |1\rangle_{6}%
    \sqrt{\Delta}M|\psi\rangle_{ABC}\\
    & \propto|0\rangle_{6}\left(  I-\frac{\Delta}{2}M^{\dag}M\right)  |\psi
    \rangle_{ABC}+|1\rangle_{6}\sqrt{\Delta}M|\psi\rangle_{ABC}.
    \end{align}
    This is the final correct state, so that we realize the quantum map in \eqref{eqn:lcu_approx_wml_appendix} after tracing over register 6.

\section{Wave Matrix Lindbladization \texorpdfstring{$e^{\mathcal{M}\Delta}$}{exp(M Delta)} Channel Protocol 2}
\label{app:reduce-aux-overhead}

    In this appendix, we demonstrate how to reduce the auxillary-qubit overhead by reducing the number of unitaries required to express each $M_i$ as defined in~\eqref{eq:M_i-exp-full}. The key idea for reducing the number of terms in the linear-combination expression given in~\eqref{eq:M_i-exp-full} is to map this linear combination of unitaries to a linear combination of different unitaries. We achieve this by combining certain unitaries in a manner that ensures the resulting combination remains a unitary operator. To this end, we define the following unitaries:
    \begin{equation}
        M = \textcolor{green}{\frac{1}{2\sqrt{2}}}\left(U_{1,0} + U_{1,1} + U_{1,2} + U_{1,3}\right),
    \end{equation}
    where
    \begin{align}
        2\,U_{1,0}&\coloneqq X\otimes X \otimes I - I\otimes Y \otimes Y + Z\otimes Z \otimes I - Y\otimes I \otimes Y ,\\
        2\,U_{1,1} &\coloneqq I\otimes X \otimes X + Y \otimes XY \otimes X + X\otimes I \otimes X + Z \otimes XZ \otimes X ,\\
        2\,U_{1,2} &\coloneqq Y\otimes ZY \otimes Z + Z\otimes I\otimes Z + I \otimes Z\otimes Z + X \otimes ZX \otimes Z,\\
        2\,U_{1,3} &\coloneqq I\otimes I\otimes I + Y \otimes Y \otimes I -X\otimes YX \otimes Y - Z\otimes YZ\otimes Y.
    \end{align}
    Now, we evaluate $M^\dag M$ as follows: 
    \begin{align}
        M^\dag& M =\frac{1}{2}\left(\left(\operatorname{SWAP}_{12}\otimes I\right )\left(I_{1}\otimes |\Gamma\rangle\!\langle\Gamma|_{23}\right)
         \left(I_{1}\otimes |\Gamma\rangle\!\langle\Gamma|_{23}\right)\left(\operatorname{SWAP}_{12}\otimes I\right)\right)\\
         &=\left(\operatorname{SWAP}_{12}\otimes I\right )\left(I_{1}\otimes |\Gamma\rangle\!\langle\Gamma|_{23}\right)\left(\operatorname{SWAP}_{12}\otimes I\right)\\
         &=\sqrt{2}(\operatorname{SWAP}_{12} \otimes I) M= \textcolor{green}{\frac{1}{2}}(\operatorname{SWAP}_{12} \otimes I)\left(U_{1,0} + U_{1,1} + U_{1,2} + U_{1,3}\right)\\
         &= \textcolor{green}{\frac{1}{2}}\left(U_{0,0} + U_{0,1} + U_{0,2} + U_{0,3}\right),
    \end{align} 
    where $U_{0,i}=(\operatorname{SWAP}_{12} \otimes I)U_{1,i}$. This represents the case in which $M$ is applied on a one-qubit system.

    Observe that $M$ is now a linear combination of only four unitaries. From~\eqref{eq:decomp-M-M_i}, and the above equality, we have that there are now $4^q$ (or $2^{2q}$) terms in the linear-combination expression of $M$. This is a quadratic improvement over the previous expression of $M$, which contained $16^q$ (or $2^{4q}$) terms. This quadratic improvement halves the number of required auxiliary qubits.  Although this is a constant improvement, it is important in the actual implementation of the algorithm.
    
    Using this new linear combination of unitaries, we describe a new protocol to implement a quantum channel that approximates $e^{\mathcal{M}\Delta}$
    \begin{align} 
        A_{0}  & =I-\frac{\Delta}{2}M^{\dag}M =I+ \textcolor{green}{\frac{\Delta}{4}}\left(U_{0,0} + U_{0,1} + U_{0,2} + U_{0,3}\right),\\
        A_{1}  & =\sqrt{\Delta}M = \textcolor{green}{\frac{\sqrt{\Delta}}{2\sqrt{2}}}\left(U_{1,0} + U_{1,1} + U_{1,2} + U_{1,3}\right)
    \end{align}
    To do so, we can use linear combination of unitaries (LCU) methods~\cite{cleve2019efficient}. The unitaries required to implement the Kraus operators $A_{0}$ and $A_{1}$ are defined are follows
    \begin{align}
        C_{1234}U(0,0)_{ABC} &  \coloneqq |00\rangle\!\langle00|_{12}\otimes|1\rangle\!\langle1|_{3}\otimes|0\rangle\!\langle 0|_{4}\otimes U_{0,0} +\left(I_{1234}-|00\rangle\!\langle00|_{12}\otimes|0\rangle\!\langle0|_{3}\otimes|0\rangle\!\langle 0|_{4}\right)  \otimes I_{ABC},\\
        C_{1234}U(0,1)_{ABC} &  \coloneqq |01\rangle\!\langle01|_{12}\otimes|1\rangle\!\langle 1|_{3}\otimes|0\rangle\!\langle 0|_{4}\otimes U_{0,1} +\left(I_{1234}-|01\rangle\!\langle01|_{12}\otimes|0\rangle\!\langle0|_{3}\otimes|0\rangle\!\langle 0|_{4}\right)  \otimes I_{ABC},\\
        C_{1234}U(0,2)_{ABC} &  \coloneqq |10\rangle\!\langle10|_{12}\otimes|1\rangle\!\langle1|_{3}\otimes|0\rangle\!\langle 0|_{4}\otimes U_{0,2} +\left(I_{1234}-|10\rangle\!\langle10|_{12}\otimes|0\rangle\!\langle0|_{3}\otimes|0\rangle\!\langle 0|_{4}\right)  \otimes I_{ABC},\\
        C_{1234}U(0,3)_{ABC} &  \coloneqq |11\rangle\!\langle11|_{12}\otimes|1\rangle\!\langle1|_{3}\otimes|0\rangle\!\langle 0|_{4}\otimes U_{0,3} +\left(I_{1234}-|11\rangle\!\langle11|_{12}\otimes|0\rangle\!\langle0|_{3}\otimes|0\rangle\!\langle 0|_{4}\right)  \otimes I_{ABC},\\
        C_{124}U(1,0)_{ABC} &  \coloneqq |00\rangle\!\langle00|_{12}\otimes|1\rangle\!\langle 1|_{4}\otimes U_{1,0} +\left(I_{124}-|00\rangle\!\langle00|_{12}\otimes|1\rangle\!\langle 1|_{4}\right)  \otimes I_{ABC},\\
        C_{124}U(1,1)_{ABC} &  \coloneqq |01\rangle\!\langle01|_{12}\otimes|1\rangle\!\langle 1|_{4}\otimes U_{1,1} +\left(I_{124}-|01\rangle\!\langle01|_{12}\otimes|1\rangle\!\langle 1|_{4}\right)  \otimes I_{ABC},\\
        C_{124}U(1,2)_{ABC} &  \coloneqq |10\rangle\!\langle10|_{12}\otimes|1\rangle\!\langle 1|_{4}\otimes U_{1,2} +\left(I_{124}-|10\rangle\!\langle10|_{12}\otimes|1\rangle\!\langle 1|_{4}\right)  \otimes I_{ABC},\\
        C_{124}U(1,3)_{ABC} &  \coloneqq |11\rangle\!\langle11|_{12}\otimes|1\rangle\!\langle 1|_{4}\otimes U_{1,3} +\left(I_{124}-|11\rangle\!\langle11|_{12}\otimes|1\rangle\!\langle 1|_{4}\right)  \otimes I_{ABC}.
    \end{align}
\newline

\textbf{Channel Protocol 2 ($e^{\mathcal{M}\Delta}$) ---} Collect the four unitary gates for the $A_0$ Kraus operator and the four unitary gates corresponding to the non-identity part of the $A_1$ Kraus operator and  calculate $\Delta$. In Figure~\ref{fig:sample channel}, we show a sample $e^{\mathcal{M}\Delta}$ channel quantum circuit being applied when an emitter term is sampled.

\begin{figure}
    \centering
    \includegraphics[width=\columnwidth]{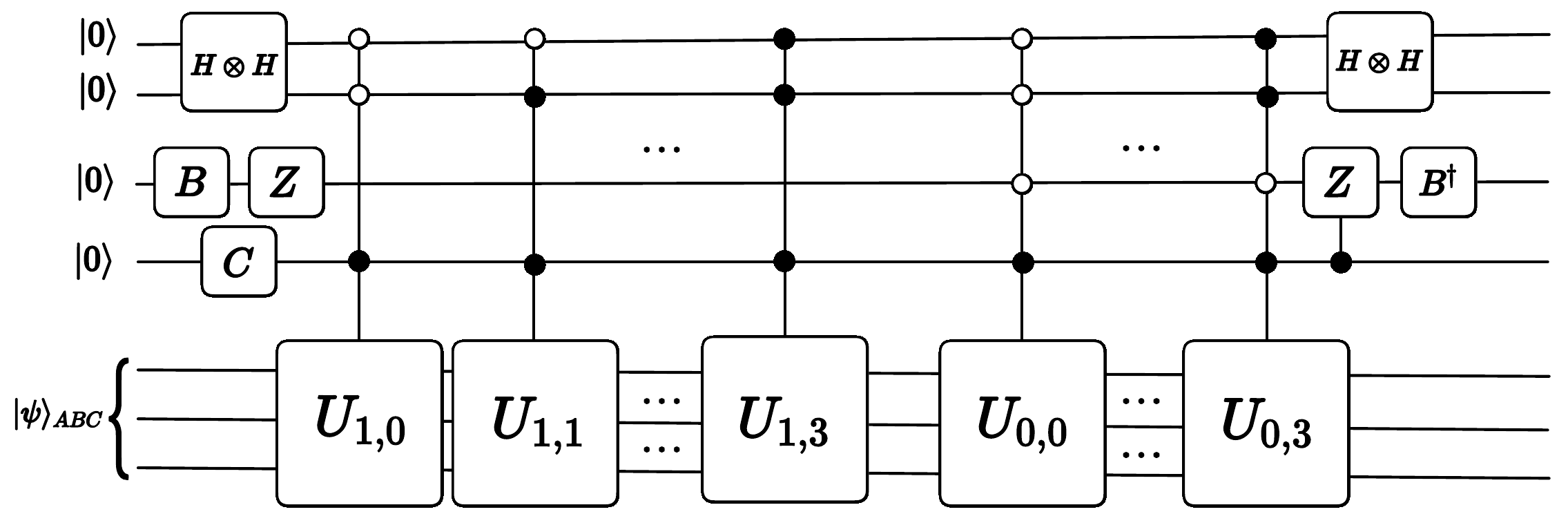}
    \caption{Circuit diagram for Protocol~2 for approximately implementing the channel $e^{\mathcal{M}\Delta}$.}
    \label{fig:sample channel}
\end{figure} 

    Apply the following gates
    \begin{align}
        \textcolor{green}{B_3|0\rangle} & = \textcolor{green}{\frac{1}{\sqrt{1+\Delta}}\left(|0\rangle+ \sqrt{\Delta}|1\rangle\right),}\\
        \textcolor{green}{C_4|0\rangle} & = \textcolor{green}{\frac{1}{\sqrt{(1+\Delta)^2+(\sqrt{2\Delta})^2}}\left(\left(1+\Delta\right)|0\rangle+\sqrt{2\Delta}|1\rangle\right).}
    \end{align}
    
    The state after the application of the gates $H_1 H_2 Z_3 B_3 C_4$ is%
    \begin{align}
        & |+\rangle_{1}|+\rangle_{2}\left(|0\rangle_3 -  \textcolor{green}{\sqrt{\Delta}}|1\rangle_3\right) \left(\textcolor{green}{\left(1+\Delta\right)}|0\rangle_4+\textcolor{green}{\sqrt{2\Delta}}|1\rangle_4\right)  |\psi\rangle_{ABC}%
        \nonumber\\
        & =\left(1+\textcolor{green}{\Delta}\right)  |+\rangle_{1}|+\rangle_{2}|0\rangle_{3}|0\rangle_{4}|\psi\rangle_{ABC} + \textcolor{green}{\sqrt{2\Delta}}|+\rangle_{1}|+\rangle_{2}\left(|0\rangle_3- \textcolor{green}{\sqrt{\Delta}}|1\rangle_3\right)|1\rangle_{4}|\psi\rangle_{ABC}-\textcolor{green}{\sqrt{\Delta}\left(1+\Delta\right)}|+\rangle_{1}|+\rangle
        _{2}|1\rangle_{3}|0\rangle_{4}|\psi\rangle_{ABC}.
    \end{align}
    
    Let us evaluate the second term, 
    \begin{equation}
        |+\rangle_{1}|+\rangle_{2}\left(|0\rangle_3- \textcolor{green}{\sqrt{\Delta}}|1\rangle_3\right)|1\rangle_{4}|\psi\rangle_{ABC}.
    \end{equation}
    
    Upon applying unitaries $C_{124}U(1,0)_{ABC}$, $C_{124}U(1,1)_{ABC}$, $C_{124}U(1,2)_{ABC}$, and $C_{124}U(1,3)_{ABC}$, the state is: 
    \begin{equation}
        \left(|0\rangle_3- \textcolor{green}{\sqrt{\Delta}}|1\rangle_3\right)\textcolor{green}{\otimes\frac{1}{2}}\bigg(\ket{001}_{124}\otimes U_{1,0}|\psi\rangle_{ABC}+\ket{011}_{124}\otimes U_{1,1}|\psi\rangle_{ABC}
        +\ket{101}_{124}\otimes U_{1,2}|\psi\rangle_{ABC}+\ket{111}_{124}\otimes U_{1,3}|\psi\rangle_{ABC}\bigg).
    \end{equation}
    
    Let us evaluate the third term. 
    \begin{equation}
        |+\rangle_{1}|+\rangle_{2}|1\rangle_{3}|0\rangle_{4}|\psi\rangle_{ABC}
    \end{equation}
    
    Apply $C_{1234}U(0,0)_{ABC}$, $C_{1234}U(0,1)_{ABC}$, $C_{1234}U(0,2)_{ABC}$, and $C_{1234}U(0,3)_{ABC}$. The state is 
    \begin{equation}
        \ket{10}_{34}\otimes\textcolor{green}{\frac{1}{2}}\bigg(\ket{00}_{12}\otimes U_{0,0}|\psi\rangle_{ABC}+\ket{01}_{12}\otimes U_{0,1}|\psi\rangle_{ABC} +\ket{10}_{12}\otimes U_{0,2}|\psi\rangle_{ABC}+\ket{11}_{12}\otimes U_{0,3}|\psi\rangle_{ABC}\bigg)
    \end{equation}
    
    Applying the projection onto $\langle+|_{1}\langle+|_{2}$, the resulting state is 
    \begin{multline}
        \left(1+\textcolor{green}{\Delta}\right)|0\rangle_{3}|0\rangle_{4}|\psi\rangle_{ABC}
         -\frac{\sqrt{\Delta}}{\textcolor{green}{4}}\left(1+\textcolor{green}{\Delta}\right)\ket{10}_{34}\otimes \bigg(U_{0,0}+ U_{0,1}+ U_{0,2}+ U_{0,3}\bigg)|\psi\rangle_{ABC}\notag\\
        +\textcolor{green}{\frac{\sqrt{\Delta}}{2\sqrt{2}}}\left(|0\rangle_3- \textcolor{green}{\sqrt{\Delta}}|1\rangle_3\right)\ket{1}_{4}\otimes\bigg(U_{1,0}+ U_{1,1}+U_{1,2}+ U_{1,3}\bigg)|\psi\rangle_{ABC}.
    \end{multline}
    
    Then upon applying $C_4 Z_3\coloneqq |1\rangle\!\langle 1|_{4}\otimes Z_3 +\left(I_{4}-|1\rangle\!\langle1|_{4}\right)  \otimes I_{3}$ and the projection onto $\langle0|_{3}+\langle1|_{3}\textcolor{green}{\sqrt{\Delta}}$, the final state is
    \begin{equation}
        \propto\left(I
         -\frac{\Delta}{\textcolor{green}{2}}M^\dagger M\right)|\psi\rangle_{ABC}\ket{0}_{4}
        +\textcolor{green}{\sqrt{\Delta}}M|\psi\rangle_{ABC}\ket{1}_{4}.
    \end{equation}

\section{Proof of Theorem~\ref{thm:gate-comp-WML}}\label{app:wml_complexity}
In what follows, we prove Theorem~\ref{thm:gate-comp-WML} by breaking the analysis into two parts. To make sense of these two parts, consider the following:
\begin{align}
    \frac{1}{2}\left \Vert e^{\mathcal{L}t} - \left(\mathcal{A}_{\operatorname{WML}}^{\operatorname{(LCU)}}\right)^{\!\circ n}\right\Vert_{\diamond} & = \frac{1}{2}\left \Vert e^{\mathcal{L}t} - \left(\mathcal{A}_{\operatorname{WML}}^{\operatorname{(ideal)}}\right)^{\!\circ n} + \left(\mathcal{A}_{\operatorname{WML}}^{\operatorname{(ideal)}}\right)^{\!\circ n} - \left(\mathcal{A}_{\operatorname{WML}}^{\operatorname{(LCU)}}\right)^{\!\circ n}\right\Vert_{\diamond}\\
    & \leq \frac{1}{2}\left \Vert e^{\mathcal{L}t} - \left(\mathcal{A}_{\operatorname{WML}}^{\operatorname{(ideal)}}\right)^{\!\circ n} \right\Vert_{\diamond} + \frac{1}{2}\left \Vert \left(\mathcal{A}_{\operatorname{WML}}^{\operatorname{(ideal)}}\right)^{\!\circ n}- \left(\mathcal{A}_{\operatorname{WML}}^{\operatorname{(LCU)}}\right)^{\!\circ n}\right\Vert_{\diamond}\\
    & = \frac{1}{2}\left \Vert \left(e^{\mathcal{L}\tau}\right)^{\!\circ n} - \left(\mathcal{A}_{\operatorname{WML}}^{\operatorname{(ideal)}}\right)^{\!\circ n} \right\Vert_{\diamond} + \frac{1}{2}\left \Vert \left(\mathcal{A}_{\operatorname{WML}}^{\operatorname{(ideal)}}\right)^{\!\circ n}- \left(\mathcal{A}_{\operatorname{WML}}^{\operatorname{(LCU)}}\right)^{\!\circ n}\right\Vert_{\diamond}\\
    & \leq \frac{n}{2}\left \Vert e^{\mathcal{L}\tau} - \mathcal{A}_{\operatorname{WML}}^{\operatorname{(ideal)}}\right\Vert_{\diamond} + \frac{n}{2}\left \Vert \mathcal{A}_{\operatorname{WML}}^{\operatorname{(ideal)}}- \mathcal{A}_{\operatorname{WML}}^{\operatorname{(LCU)}}\right\Vert_{\diamond}\label{eq:dia-dist-two-parts}.
\end{align}
To achieve a final error of at most $\varepsilon$, we can ensure that each of the two terms on the right-hand side of the inequality is bounded from above by $\frac{\varepsilon}{2}$: 
\begin{gather}
    \frac{n}{2}\left \Vert e^{\mathcal{L}\tau} - \mathcal{A}_{\operatorname{WML}}^{\operatorname{(ideal)}}\right\Vert_{\diamond}  \leq \frac{\varepsilon}{2},\\
     \frac{n}{2}\left \Vert \mathcal{A}_{\operatorname{WML}}^{\operatorname{(ideal)}}- \mathcal{A}_{\operatorname{WML}}^{\operatorname{(LCU)}}\right\Vert_{\diamond} \leq \frac{\varepsilon}{2}.
\end{gather}
To simplify the subsequent analysis, we divide it into two parts. In the first part, we analyze the initial inequality, which resolves the sample complexity of the algorithm, i.e., $n$. In the second part, we analyze the second inequality, which resolves the gate complexity of the algorithm.

\subsection{Sample Complexity}
Consider the following:
\begin{equation}
     \left \Vert e^{\mathcal{L}\tau} - \mathcal{A}_{\operatorname{WML}}^{\operatorname{(ideal)}}\right\Vert_{\diamond} 
     = \left \Vert e^{\mathcal{L}\tau} -\left(
     \begin{array}[c]{c}\sum_{j: c_j>0} \frac{c_j}{c} \operatorname{Tr}_{2}\circ~ e^{\mathcal{N}_1 c\tau} \circ \mathcal{P}_{1, j} +  \sum_{j: c_j<0} \frac{(-c_j)}{c} \operatorname{Tr}_{2}\circ~ e^{\mathcal{N}_2 c\tau} \circ \mathcal{P}_{1, j} \\ + \sum_{k} \frac{\left \Vert L_k\right\Vert_2^2}{c} \operatorname{Tr}_{23}\circ~ e^{\mathcal{M} c\tau} \circ \mathcal{P}_{2, k}
     \end{array} \right) \right\Vert_{\diamond}\label{eq:dim_exp}
\end{equation}
Expanding the second term on the right-hand side of the above equation, we obtain:
\begin{align}
    & \sum_{j: c_j>0} \frac{c_j}{c} \operatorname{Tr}_{23}\circ~ e^{\mathcal{N}_1 c\tau} \circ \mathcal{P}_{1, j} \notag\\
    & = \sum_{j: c_j>0} \frac{c_j}{c} \operatorname{Tr}_{2}\circ\left( \mathcal{I} + c \tau \mathcal{N}_1 + \sum_{r=2}^{\infty}\frac{ c^r \tau^r}{r!} \mathcal{N}^{r}_1  \right) \circ \mathcal{P}_{1, j}\\
    & = \sum_{j: c_j>0} \frac{c_j}{c} \mathcal{I} + \sum_{j: c_j>0} c_j \tau\operatorname{Tr}_{2}\circ~ \mathcal{N}_1 \circ \mathcal{P}_{1, j}+ \sum_{j: c_j>0}\sum_{r=2}^{\infty} \frac{c_jc^{r-1} \tau^r}{r!}   \operatorname{Tr}_{2}\circ~\mathcal{N}^{r}_1  \circ \mathcal{P}_{1, j}\\
    & = \sum_{j: c_j>0} \frac{c_j}{c} \mathcal{I} + \sum_{j: c_j>0} c_j \tau  \mathcal{H}_j + \sum_{j: c_j>0}\sum_{r=2}^{\infty} \frac{c_jc^{r-1} \tau^r}{r!}   \operatorname{Tr}_{2}\circ~\mathcal{N}^{r}_1  \circ \mathcal{P}_{1, j}\label{eq:H_first_term}.
\end{align}
Similarly, we obtain the following expression for the third term:
\begin{equation}
    \sum_{j: c_j<0} \frac{(-c_j)}{c} \operatorname{Tr}_{2}\circ~ e^{\mathcal{N}_2 c\tau} \circ \mathcal{P}_{1, j} 
    =  \sum_{j: c_j<0} \frac{(-c_j)}{c} \mathcal{I} + \sum_{j: c_j<0} (-c_j) \tau  \mathcal{H}_j + \sum_{j: c_j<0}\sum_{r=2}^{\infty} \frac{(-c_j)c^{r-1} \tau^r}{r!}   \operatorname{Tr}_{2}\circ~\mathcal{N}^{r}_2  \circ \mathcal{P}_{1, j}\label{eq:H_sec_term},
\end{equation}
and the following expression for the fourth term:
\begin{equation}
    \sum_{k} \frac{\left \Vert L_k\right\Vert_2^2}{c} \operatorname{Tr}_{23}\circ~ e^{\mathcal{M} c\tau} \circ \mathcal{P}_{2, k}
   = \sum_{k} \frac{\left \Vert L_k\right\Vert_2^2}{c} \mathcal{I} + \sum_{k} \left \Vert L_k\right\Vert_2^2 \tau  \mathcal{L}_k + \sum_{k}\sum_{r=2}^{\infty} \frac{\left \Vert L_k\right\Vert_2^2c^{r-1} \tau^r}{r!}   \operatorname{Tr}_{23}\circ~\mathcal{M}^{r}  \circ \mathcal{P}_{2, k}\label{eq:lind_term}.
\end{equation}
Combining~\eqref{eq:H_first_term},~\eqref{eq:H_sec_term}, and~\eqref{eq:lind_term} and rearranging, we get
\begin{align}
    & \left(\underbrace{\sum_{j: c_j>0} \frac{c_j}{c} + \sum_{j: c_j<0} \frac{(-c_j)}{c} + \sum_{k} \frac{\left \Vert L_k\right\Vert_2^2}{c}}_{=1}\right) \mathcal{I} + \tau \left(\underbrace{\sum_{j: c_j>0} c_j   \mathcal{H}_j + \sum_{j: c_j<0} (-c_j)  \mathcal{H}_j + \sum_{k} \left \Vert L_k\right\Vert_2^2  \mathcal{L}_k}_{=\mathcal{L}}\right)  \notag\\
    & \qquad + \sum_{j: c_j>0}\sum_{r=2}^{\infty} \frac{c_jc^{r-1} \tau^r}{r!}   \operatorname{Tr}_{2}\circ~\mathcal{N}^{r}_1  \circ \mathcal{P}_{1, j} + \sum_{j: c_j<0}\sum_{r=2}^{\infty} \frac{(-c_j)c^{r-1} \tau^r}{r!}   \operatorname{Tr}_{2}\circ~\mathcal{N}^{r}_2  \circ \mathcal{P}_{1, j}\notag\\
    & \qquad + \sum_{k}\sum_{r=2}^{\infty} \frac{\left \Vert L_k\right\Vert_2^2c^{r-1} \tau^r}{r!}   \operatorname{Tr}_{23}\circ~\mathcal{M}^{r}  \circ \mathcal{P}_{2, k}\\
    & = \mathcal{I} + \tau \mathcal{L} + \sum_{j: c_j>0}\sum_{r=2}^{\infty} \frac{c_jc^{r-1} \tau^r}{r!}   \operatorname{Tr}_{2}\circ~\mathcal{N}^{r}_1  \circ \mathcal{P}_{1, j} + \sum_{j: c_j<0}\sum_{r=2}^{\infty} \frac{(-c_j)c^{r-1} \tau^r}{r!}   \operatorname{Tr}_{2}\circ~\mathcal{N}^{r}_2  \circ \mathcal{P}_{1, j}\notag\\
    & \qquad + \sum_{k}\sum_{r=2}^{\infty} \frac{\left \Vert L_k\right\Vert_2^2c^{r-1} \tau^r}{r!}   \operatorname{Tr}_{23}\circ~\mathcal{M}^{r}  \circ \mathcal{P}_{2, k}.
\end{align}
By substituting the right-hand side of the above equation into~\eqref{eq:dim_exp} and expanding the term $e^{\mathcal{L}\tau}$ using its Taylor series, the first two terms of the Taylor series get canceled. As a result, we get the following:
\begin{align}
    & \left \Vert e^{\mathcal{L}\tau} - \mathcal{A}_{\operatorname{WML}}^{\operatorname{(ideal)}}\right\Vert_{\diamond} \notag\\
     & = \left \Vert \sum_{r=2}^{\infty} \frac{\tau^r}{r!}\mathcal{L}^r -\left(
     \begin{array}[c]{c}\sum_{j: c_j>0}\sum_{r=2}^{\infty} \frac{c_jc^{r-1} \tau^r}{r!}   \operatorname{Tr}_{2}\circ~\mathcal{N}^{r}_1  \circ \mathcal{P}_{1, j} + \sum_{j: c_j<0}\sum_{r=2}^{\infty} \frac{(-c_j)c^{r-1} \tau^r}{r!}   \operatorname{Tr}_{2}\circ~\mathcal{N}^{r}_2  \circ \mathcal{P}_{1, j}\\
    + \sum_{k}\sum_{r=2}^{\infty} \frac{\left \Vert L_k\right\Vert_2^2c^{r-1} \tau^r}{r!}   \operatorname{Tr}_{23}\circ~\mathcal{M}^{r}  \circ \mathcal{P}_{2, k}
     \end{array} \right) \right\Vert_{\diamond}\\
     & \leq \sum_{r=2}^{\infty} \frac{\tau^r}{r!}\left \Vert \mathcal{L}^r \right\Vert_{\diamond} + 
    \sum_{j: c_j>0}\sum_{r=2}^{\infty} \frac{c_jc^{r-1} \tau^r}{r!}\left \Vert   \operatorname{Tr}_{2}\circ~\mathcal{N}^{r}_1  \circ \mathcal{P}_{1, j}\right\Vert_{\diamond} + \sum_{j: c_j<0}\sum_{r=2}^{\infty} \frac{(-c_j)c^{r-1} \tau^r}{r!}   \left \Vert \operatorname{Tr}_{2}\circ~\mathcal{N}^{r}_2  \circ \mathcal{P}_{1, j}\right\Vert_{\diamond}\notag\\
    & \qquad + \sum_{k}\sum_{r=2}^{\infty} \frac{\left \Vert L_k\right\Vert_2^2c^{r-1} \tau^r}{r!}   \left \Vert \operatorname{Tr}_{23}\circ~\mathcal{M}^{r}  \circ \mathcal{P}_{2, k}\right\Vert_{\diamond}\\
    & \leq \sum_{r=2}^{\infty} \frac{\tau^r}{r!}\left \Vert \mathcal{L}^r \right\Vert_{\diamond} + 
    \sum_{j: c_j>0}\sum_{r=2}^{\infty} \frac{c_jc^{r-1} \tau^r}{r!}\left \Vert   \mathcal{N}^{r}_1\right\Vert_{\diamond} + \sum_{j: c_j<0}\sum_{r=2}^{\infty} \frac{(-c_j)c^{r-1} \tau^r}{r!}   \left \Vert \mathcal{N}^{r}_2\right\Vert_{\diamond}\notag\\
    & \qquad + \sum_{k}\sum_{r=2}^{\infty} \frac{\left \Vert L_k\right\Vert_2^2c^{r-1} \tau^r}{r!}   \left \Vert \mathcal{M}^{r} \right\Vert_{\diamond}\\
    & \leq \sum_{r=2}^{\infty} \frac{\tau^r}{r!}\left \Vert \mathcal{L} \right\Vert_{\diamond}^r + 
    \sum_{j: c_j>0}\sum_{r=2}^{\infty} \frac{c_jc^{r-1} \tau^r}{r!}\left \Vert   \mathcal{N}_1\right\Vert_{\diamond}^r + \sum_{j: c_j<0}\sum_{r=2}^{\infty} \frac{(-c_j)c^{r-1} \tau^r}{r!}   \left \Vert \mathcal{N}_2\right\Vert_{\diamond}^r\notag\\
    & \qquad + \sum_{k}\sum_{r=2}^{\infty} \frac{\left \Vert L_k\right\Vert_2^2c^{r-1} \tau^r}{r!}   \left \Vert \mathcal{M} \right\Vert_{\diamond}^r.\label{eq:first_sim_exp}
\end{align}
The first inequality follows from the triangle inequality. The second inequality follows from the following two facts: The diamond norm is submultiplicative under composition of maps, i.e., for maps $\mathcal{Q}$ and $\mathcal{R}$, it holds that $\left \Vert\mathcal{Q} \circ \mathcal{R} \right \Vert_{\diamond} \leq \left \Vert\mathcal{Q} \right \Vert_{\diamond}\left \Vert\mathcal{R} \right \Vert_{\diamond}$, and 2) the diamond norm for a quantum channel is equal to one, i.e., for a quantum channel $\mathcal{Q}$, it holds that $\left \Vert\mathcal{Q} \right \Vert_{\diamond} = 1$. Finally, the third inequality also follows from the submultiplicativity of the diamond norm under composition of maps.

Now, consider the following:
\begin{align}
    \left \Vert \mathcal{L} \right\Vert_{\diamond} & = \left \Vert \sum_{j: c_j>0} c_j   \mathcal{H}_j + \sum_{j: c_j<0} (-c_j)  \mathcal{H}_j + \sum_{k} \left \Vert L_k\right\Vert_2^2  \widehat{\mathcal{L}}_k \right\Vert_{\diamond}\\
    & \leq \sum_{j: c_j>0} c_j    \left \Vert\mathcal{H}_j\right\Vert_{\diamond} + \sum_{j: c_j<0} (-c_j)   \left \Vert\mathcal{H}_j\right\Vert_{\diamond} + \sum_{k} \left \Vert L_k\right\Vert_2^2   \left \Vert \widehat{\mathcal{L}}_k\right\Vert_{\diamond} \\
    & \leq \sum_{j: c_j>0} c_j    (2) + \sum_{j: c_j<0} (-c_j)   (2) + \sum_{k} \left \Vert L_k\right\Vert_2^2 (2)\\
    & = 2 \left(\sum_{j: c_j>0} c_j + \sum_{j: c_j<0} (-c_j)+ \sum_{k} \left \Vert L_k\right\Vert_2^2\right)\\
    & = 2c.\label{eq:bound_on_L}
\end{align}
The first inequality follows from the triangle inequality. The second inequality holds due to the following:
\begin{align}
    \left \Vert\mathcal{H}_j\right\Vert_{\diamond} & = \sup_{\omega}\left \Vert\mathcal{H}_j(\omega)\right\Vert_{1} = \sup_{\omega} \left \Vert (-i) [\sigma_j, \omega]\right\Vert_{1} \leq 2,\\
    \left \Vert\widehat{\mathcal{L}}_k(\cdot)\right\Vert_{\diamond} & = \sup_{\omega}\left \Vert\widehat{\mathcal{L}}_k(\omega)\right\Vert_{1} = \sup_{\omega}\left \Vert \widehat{L}_k \omega \widehat{L}_k^{\dagger} - \frac{1}{2} \left\{\widehat{L}_k^{\dagger} \widehat{L}_k, \omega\right\}\right\Vert_{1} \leq 2,
\end{align}
where $\widehat{\mathcal{L}}_k\coloneqq   \mathcal{L}_k/\left \Vert L_k\right\Vert_2^2$.

Now, similar to bounding $\left \Vert \mathcal{L}' \right\Vert_{\diamond}$, we bound $\left \Vert   \mathcal{N}_1\right\Vert_{\diamond}$, $\left \Vert   \mathcal{N}_2\right\Vert_{\diamond}$, and $\left \Vert   \mathcal{M}\right\Vert_{\diamond}$ from above:
\begin{align}
    \left \Vert   \mathcal{N}_1\right\Vert_{\diamond} & = \sup_{\omega}\left \Vert   \mathcal{N}_1 (\omega)\right\Vert_{1} = \sup_{\omega} \left \Vert   (-i)[\operatorname{SWAP}, \omega]\right\Vert_{1}\\
    & = \sup_{\omega} \left \Vert\left (\operatorname{SWAP} \omega - \omega\operatorname{SWAP}\right)\right\Vert_{1}\leq 2\left \Vert\operatorname{SWAP}\right\Vert \leq 2\label{eq:bound_on_N_1},\\
    \left \Vert   \mathcal{N}_2\right\Vert_{\diamond} & = \sup_{\omega}\left \Vert   \mathcal{N}_1 (\omega)\right\Vert_{1} = \sup_{\omega} \left \Vert   (-i)[\operatorname{-SWAP}, \omega]\right\Vert_{1}\\
    & = \sup_{\omega} \left \Vert\left (\operatorname{SWAP} \omega - \omega\operatorname{SWAP}\right)\right\Vert_{1}\leq 2\left \Vert\operatorname{SWAP}\right\Vert \leq 2 \label{eq:bound_on_N_2},
    \\
    \left \Vert   \mathcal{M}\right\Vert_{\diamond} & = \sup_{\omega}\left \Vert   \mathcal{M} (\omega)\right\Vert_{1} = \sup_{\omega}\left \Vert   M\omega M^{\dagger} - \frac{1}{2} \{M^{\dagger}M, \omega\}\right\Vert_{1}\\
    & \leq 2 \left \Vert M \right \Vert^2 \leq 2 Q.\label{eq:bound_on_M}
\end{align}
Here, the last inequality follows due to the following:
\begin{align}
    \left \Vert M \right \Vert & = \left \Vert \frac{1}{\sqrt{Q}}\left(I_{1}\otimes |\Gamma\rangle\!\langle\Gamma|_{23}\right)\left(\operatorname{SWAP}_{12}\otimes I_3\right) \right \Vert\\
    & = \left \Vert \sqrt{Q}\left(I_{1}\otimes |\Phi\rangle\!\langle\Phi|_{23}\right)\left(\operatorname{SWAP}_{12}\otimes I_3\right) \right \Vert\\
    & \leq \sqrt{Q},
\end{align}
where the last inequality follows due to the submultiplicativity of operator norm under composition and tensor product.

Using the bounds~\eqref{eq:bound_on_L},~\eqref{eq:bound_on_N_1},~\eqref{eq:bound_on_N_2}, and~\eqref{eq:bound_on_M} in~\eqref{eq:first_sim_exp}, we get
\begin{align}
    & \left \Vert e^{\mathcal{L}\tau} - \mathcal{A}_{\operatorname{WML}}^{\operatorname{(ideal)}}\right\Vert_{\diamond}\notag\\
    & \leq \sum_{r=2}^{\infty} \frac{\tau^r}{r!}(2c)^r + 
    \sum_{j: c_j>0}\sum_{r=2}^{\infty} \frac{c_jc^{r-1} \tau^r}{r!}2^r + \sum_{j: c_j<0}\sum_{r=2}^{\infty} \frac{(-c_j)c^{r-1} \tau^r}{r!}   2^r + \sum_{k}\sum_{r=2}^{\infty} \frac{\left \Vert L_k\right\Vert_2^2c^{r-1} \tau^r}{r!} (2Q)^r\\
    & \leq \sum_{r=2}^{\infty} \frac{(2c\tau)^r}{r!} + 
    \sum_{j: c_j>0}\sum_{r=2}^{\infty} \frac{c_jc^{r-1} (2\tau)^r}{r!} + \sum_{j: c_j<0}\sum_{r=2}^{\infty} \frac{(-c_j)c^{r-1} (2\tau)^r}{r!}+ \sum_{k}\sum_{r=2}^{\infty} \frac{\left \Vert L_k\right\Vert_2^2c^{r-1} (2\tau)^r}{r!} (1 + Q^r - 1)\\
    & = \sum_{r=2}^{\infty} \frac{(2c\tau)^r}{r!} + 
    \sum_{r=2}^{\infty} \frac{c^{r-1} (2\tau)^r}{r!} \left (\underbrace{\sum_{j: c_j>0} c_j  + \sum_{j: c_j<0}(-c_j)+  \sum_{k}\left \Vert L_k\right\Vert_2^2}_{=c}\right) + \sum_{k}\sum_{r=2}^{\infty} \frac{\left \Vert L_k\right\Vert_2^2c^{r-1} (2\tau)^r}{r!} (Q^r - 1)\\
   & = \sum_{r=2}^{\infty} \frac{(2c\tau)^r}{r!} + 
    \sum_{r=2}^{\infty} \frac{ (2c\tau)^r}{r!} + \sum_{k}\sum_{r=2}^{\infty} \frac{\left \Vert L_k\right\Vert_2^2c^{r-1} (2\tau)^r}{r!} (Q^r - 1)\\
    & = 2\sum_{r=2}^{\infty} \frac{(2c\tau)^r}{r!} + \sum_{k}\sum_{r=2}^{\infty} \frac{\left \Vert L_k\right\Vert_2^2c^{r-1} (2\tau)^r}{r!} (Q^r - 1)\\
    & \leq 2\sum_{r=2}^{\infty} \frac{(2c\tau)^r}{r!} + \sum_{r=2}^{\infty} \frac{c \cdot c^{r-1} (2\tau)^r}{r!} Q^r\\
    & \leq 2\sum_{r=2}^{\infty} \frac{(2c\tau)^r}{r!} +\sum_{r=2}^{\infty} \frac{(2cQ\tau)^r}{r!}.
\end{align}
The third inequality follows from the fact that $\sum_k\left \Vert L_k\right\Vert_2^2 \leq c$ and $Q^r - 1 \leq Q^r$. Now, substituting $\tau = \frac{t}{n}$ in the above inequality and dividing by two on both sides for normalizing the diamond distance, we get
\begin{align}
    & \frac{1}{2}\left \Vert e^{\mathcal{L}\tau} - \mathcal{A}_{\operatorname{WML}}^{\operatorname{(ideal)}}\right\Vert_{\diamond}\leq \sum_{r=2}^{\infty} \frac{1}{r!} \left (\frac{2ct}{n}\right)^r + \frac{1}{2}\sum_{r=2}^{\infty} \frac{1}{r!} \left (\frac{2cQt}{n}\right)^r.
\end{align}
To bound the right-hand side of the inequality from above for $n \geq 2cQt$, we utilize the fact that for all $0 \leq x \leq 1$, $\sum_{r=2}^{\infty} \frac{x^r}{r!} \leq x^2$:
\begin{align}
    \frac{1}{2}\left \Vert e^{\mathcal{L}\tau} - \mathcal{A}_{\operatorname{WML}}^{\operatorname{(ideal)}}\right\Vert_{\diamond} & \leq \frac{(2ct)^2}{n^2} + \frac{1}{2}\frac{(2cQt)^2}{n^2} \\
    & \leq \frac{4(cQt)^2}{n^2},
\end{align}
where the last inequality follows due to the fact that $Q\geq 2$.

Now, we use the above inequality to further bound the first term of~\eqref{eq:dia-dist-two-parts} from above:
\begin{align}
    n \cdot \frac{1}{2}\left \Vert e^{\mathcal{L}\tau} - \mathcal{A}_{\operatorname{WML}}^{\operatorname{(ideal)}}\right\Vert_{\diamond} \leq n \cdot \frac{4(cQt)^2}{n^2} = \frac{4(cQt)^2}{n}.
\end{align}
If we want the final error to be less than $\frac{\varepsilon}{2}$, then we need 
\begin{equation}
    n \geq \frac{8(cQt)^2}{\varepsilon} = O\!\left(\frac{c^2t^2}{\varepsilon}\right),
\end{equation}
where we use that fact that $Q = 2^q = 2^{O(1)} = O(1)$. This resolves the sample complexity of the WML algorithm.

\subsection{Gate Complexity}
Substituting~\eqref{eq:practical-wml-iterate} and~\eqref{eq:ideal-WML-iterate} into~\eqref{eq:dia-dist-two-parts}, the first two terms of~\eqref{eq:practical-wml-iterate} and~\eqref{eq:ideal-WML-iterate} cancel out, leaving us with the following expression:
\begin{align}
     \frac{n}{2}\left \Vert \mathcal{A}_{\operatorname{WML}}^{\operatorname{(ideal)}} - \mathcal{A}_{\operatorname{WML}}^{\operatorname{(LCU)}}\right\Vert_{\diamond} & = \frac{n}{2}\left \Vert 
      \sum_{k} \frac{\left \Vert L_k\right\Vert_2^2}{c} \operatorname{Tr}_{23}\circ~ e^{\mathcal{M} c\tau} \circ \mathcal{P}_{2, k} - \sum_{k} \frac{\left \Vert L_k\right\Vert_2^2}{c} \operatorname{Tr}_{23}\circ~\mathcal{R}_{c\tau} \circ \mathcal{P}_{2, k} \right\Vert_{\diamond}\\
      & \leq  \frac{n}{2}\sum_{k} \frac{\left \Vert L_k\right\Vert_2^2}{c}\left \Vert 
      \operatorname{Tr}_{23}\circ~ e^{\mathcal{M} c\tau} \circ \mathcal{P}_{2, k} - \operatorname{Tr}_{23}\circ~\mathcal{R}_{c\tau} \circ \mathcal{P}_{2, k} \right\Vert_{\diamond}\\
      & \leq  \frac{n}{2}\left \Vert 
      \operatorname{Tr}_{23}\circ~ e^{\mathcal{M} c\tau} \circ \mathcal{P}_{2, k} - \operatorname{Tr}_{23}\circ~\mathcal{R}_{c\tau} \circ \mathcal{P}_{2, k} \right\Vert_{\diamond}\\
      & \leq  \frac{n}{2}\left \Vert 
      \operatorname{Tr}_{23}\circ\left(e^{\mathcal{M} c\tau} - \mathcal{R}_{c\tau} \right) \circ \mathcal{P}_{2, k}\right\Vert_{\diamond}\\
      & \leq \frac{n}{2}\left \Vert 
      e^{\mathcal{M} c\tau} - \mathcal{R}_{c\tau} \right\Vert_{\diamond},\label{eq:second-part-simplify}
\end{align}
where the first inequality follows from the triangle inequality, the second inequality follows due to the following fact:
\begin{equation}
    \sum_k\frac{\left \Vert L_k\right\Vert_{2}^{2}}{c} \leq 1,
\end{equation}
and the last inequality follows from the following two facts: The diamond norm is submultiplicative under composition of maps, i.e., for all maps $\mathcal{Q}$ and $\mathcal{R}$, it holds that $\left \Vert\mathcal{Q} \circ \mathcal{R} \right \Vert_{\diamond} \leq \left \Vert\mathcal{Q} \right \Vert_{\diamond}\left \Vert\mathcal{R} \right \Vert_{\diamond}$, and 2) the diamond norm for a quantum channel is equal to one, i.e., for all quantum channels $\mathcal{Q}$, it holds that $\left \Vert\mathcal{Q} \right \Vert_{\diamond} = 1$. Now, if we want the final error in~\eqref{eq:second-part-simplify} to be at most $\frac{\varepsilon}{2}$, then it suffices to have the following:
\begin{equation}
    \frac{1}{2}\left \Vert 
      e^{\mathcal{M} c\tau} - \mathcal{R}_{c\tau} \right\Vert_{\diamond} \leq \frac{\varepsilon}{2n}.
\end{equation}

Recall that $\mathcal{R}_{c\tau}$ is an LCU-based quantum algorithm proposed in~\cite{cleve2019efficient} for simulating Lindbladian channels. In our case, the channel of interest is $e^{\mathcal{M} c\tau}$. The algorithm $\mathcal{R}_{c\tau}$ assumes an input model where the Lindblad operators are expressed as linear combinations of Pauli strings. Therefore, before applying the algorithm, we need to first express the Lindblad operators of the Lindbladian $\mathcal{M}$ into this required form, which we have in Appendix~\ref{app:WML_fixed_inter_unitaries}.

Observe that there are 16 terms in~\eqref{eq:M_i-exp-full}. This implies that there are $16^q$ or $2^{4q}$ terms in the linear-combination expression for $M$. This resolves the number of terms in the linear combination expression of $M$.

Additionally, note that a coefficient $\alpha_i$ in the linear-combination expression for $M$ is either $+1/2^{q/2}$ or $-1/2^{q/2}$, which is clear to see from~\eqref{eq:decomp-M-M_i} and~\eqref{eq:M_i-exp-full}. Using this fact, we resolve the quantity $\left \Vert \mathcal{M} \right\Vert_{\operatorname{Pauli}}$ in the following way:
\begin{align}
    \left \Vert \mathcal{M} \right\Vert_{\operatorname{Pauli}} & \coloneqq \left(\sum_{i=0}^{2^{4q} - 1} \alpha_i \right)^2 \leq \left(\sum_{i=0}^{2^{4q} - 1}\frac{1}{2^{q/2}} \right)^2\\
    & =\left(\frac{1}{2^{q/2}} 2^{4q} \right)^2 = 2^{7q}.
\end{align}

Using the development above and Theorem~1 of~\cite{cleve2019efficient}, we can say that the gate complexity $G$ of the algorithm $\mathcal{R}_{c\tau}$ for implementing the channel $e^{\mathcal{M}c\tau}$ such that $\left \Vert e^{\mathcal{M} c\tau} - \mathcal{R}_{c\tau} \right\Vert_{\diamond} \leq \varepsilon/n$ holds is given as follows:
\begin{equation}
    G = O\!\left( 2^{15q} c\tau  \frac{\left(\ln(2^{15q}nc\tau /\varepsilon) + q\right)\ln(n c\tau /\varepsilon)}{\ln \ln(Nc\tau /\varepsilon)}\right) = O\!\left(\frac{\ln^2(n/\varepsilon)}{\ln\ln(n /\varepsilon)}\right) = O\!\left(\frac{\ln^2(ct/\varepsilon)}{\ln\ln(ct /\varepsilon)}\right),
\end{equation}
where the second equality holds because $q = O(1)$ and $c\tau \leq 1$. This implies that the total gate complexity of the full algorithm is 
\begin{equation}
    n \cdot G = O\!\left(\frac{c^2 t^2 \ln^2(ct/\varepsilon)}{\varepsilon \ln\ln(ct /\varepsilon)}\right).
\end{equation}
This is the expression for the gate complexity of the LCU-based WML algorithm claimed in the statement of Theorem~\ref{thm:gate-comp-WML}, and thus concludes its proof.

\section{Proof of Theorem~\ref{thm:split_j_matrix}}\label{app:split-J-matrix-analysis}

To analyze the performance of the Split-$J$ Matrix algorithm, we need to bound the following quantity from above:
\begin{equation}
    \left\Vert e^{\mathcal{L}t} - \left(e^{\mathcal{H}t/n}\circ \left(\prod_{q=1}^{Q} e^{\mathcal{H}'_{q}t/n}\right) \circ \mathcal{J}_{1}(t/n)\circ\cdots\circ\mathcal{J}_{K}(t/n) \right)^{\circ n}\right\Vert_{\diamond}.
\end{equation}
Using the fact that the diamond distance obeys subadditivity under composition, we get
 \begin{align}
    & \left\Vert e^{\mathcal{L}t} - \left(e^{\mathcal{H}t/n}\circ \left(\prod_{q=1}^{Q} e^{\mathcal{H}'_{q}t/n}\right) \circ \mathcal{J}_{1}(t/n)\circ\cdots\circ\mathcal{J}_{K}(t/n) \right)^{\circ n}\right\Vert_{\diamond}\notag\\
    & \leq n\left\Vert e^{\mathcal{L}t/n} - e^{\mathcal{H}t/n}\circ \left(\prod_{q=1}^{Q} e^{\mathcal{H}'_{q}t/n}\right) \circ \mathcal{J}_{1}(t/n)\circ\cdots\circ\mathcal{J}_{K}(t/n) \right\Vert_{\diamond}\\
    & = n\Big\Vert e^{\mathcal{L}t/n} - e^{\mathcal{H}t/n}\circ e^{\mathcal{H}'t/n} \circ e^{\mathcal{N}_1 t/n}\circ\cdots\circ e^{\mathcal{N}_K t/n} \notag\\
    & \qquad + e^{\mathcal{H}t/n}\circ e^{\mathcal{H}'t/n} \circ e^{\mathcal{N}_1 t/n}\circ\cdots\circ e^{\mathcal{N}_K t/n}  - e^{\mathcal{H}t/n}\circ \left(\prod_{q=1}^{Q} e^{\mathcal{H}'_{q}t/n}\right) \circ \mathcal{J}_{1}(t/n)\circ\cdots\circ\mathcal{J}_{K}(t/n) \Big\Vert_{\diamond}\\
    & \leq n\left\Vert e^{\mathcal{L}t/n} - e^{\mathcal{H}t/n}\circ e^{\mathcal{H}'t/n} \circ e^{\mathcal{N}_1 t/n}\circ\cdots\circ e^{\mathcal{N}_K t/n}\right\Vert_{\diamond} \notag\\
    & \qquad + n\left\Vert e^{\mathcal{H}t/n}\circ e^{\mathcal{H}'t/n} \circ e^{\mathcal{N}_1 t/n}\circ\cdots\circ e^{\mathcal{N}_K t/n}  - e^{\mathcal{H}t/n}\circ \left(\prod_{q=1}^{Q} e^{\mathcal{H}'_{q}t/n}\right) \circ \mathcal{J}_{1}(t/n)\circ\cdots\circ\mathcal{J}_{K}(t/n) \right\Vert_{\diamond}\\
    & \leq n\left\Vert e^{\mathcal{L}t/n} - e^{\mathcal{H}t/n}\circ e^{\mathcal{H}'t/n} \circ e^{\mathcal{N}_1 t/n}\circ\cdots\circ e^{\mathcal{N}_K t/n}\right\Vert_{\diamond} \notag\\
    & \qquad + n\left \Vert e^{\mathcal{N}_1 t/n}\circ\cdots\circ e^{\mathcal{N}_K t/n}  - \mathcal{J}_{1}(t/n)\circ\cdots\circ\mathcal{J}_{K}(t/n) \right\Vert_{\diamond} + n\left\Vert  e^{\mathcal{H}'t/n} - \left(\prod_{q=1}^{Q} e^{\mathcal{H}'_{q}t/n}\right) \right\Vert_{\diamond}\\
    & = n\left\Vert e^{\mathcal{L}t/n} - e^{\mathcal{H}t/n}\circ e^{\mathcal{H}'t/n} \circ e^{\mathcal{N}_1 t/n}\circ\cdots\circ e^{\mathcal{N}_K t/n}\right\Vert_{\diamond} \notag\\
    & \qquad + n\left \Vert e^{\mathcal{N}_1 t/n}\circ\cdots\circ e^{\mathcal{N}_K t/n}  - \mathcal{J}_{1}(t/n)\circ\cdots\circ\mathcal{J}_{K}(t/n) \right\Vert_{\diamond} + O\!\left(\frac{Q^2 \lambda_{\max}^2t^2}{n}\right) ,
    \label{eq:two-terms-split-J-matrix-analysis}
\end{align}
where we obtain the second inequality by using the triangle inequality, the third inequality by using the subadditivity under composition property of the diamond distance, and the last equality follows from the standard error analysis for the first-order Trotter for Hamiltonian simulation~\cite[Equation~4]{Childs2019}, with $\lambda_{\max}$ defined in \eqref{eq:split-lambda}.

\subsection{Bounding the First Term of~\eqref{eq:two-terms-split-J-matrix-analysis}}

Consider the following:
\begin{align}
    & n\left\Vert e^{\mathcal{L}t/n} - e^{\mathcal{H}t/n}\circ e^{\mathcal{H}'t/n} \circ e^{\mathcal{N}_1 t/n}\circ\cdots\circ e^{\mathcal{N}_K t/n}\right\Vert_{\diamond}\notag\\
    & = n\left\Vert e^{\mathcal{L}t/n} - e^{\mathcal{H}t/n}\circ e^{\mathcal{H}'t/n} \circ e^{\left(\mathcal{N}_1 + \cdots + \mathcal{N}_K \right) t/n}\right\Vert_{\diamond}\\
    & = n\left\Vert e^{\mathcal{L}t/n} - e^{\mathcal{H}t/n}\circ e^{\mathcal{H}'t/n} \circ e^{\mathcal{N} t/n}\right\Vert_{\diamond}\\
    & = n\left\Vert e^{\left(\mathcal{H}+  \mathcal{H}' + \mathcal{N}\right)t/n} - e^{\mathcal{H}t/n}\circ e^{\mathcal{H}'t/n} \circ e^{\mathcal{N} t/n}\right\Vert_{\diamond}\\
    & = n\Bigg\Vert \mathcal{I} + \left(\mathcal{H}+  \mathcal{H}' + \mathcal{N}\right)\frac{t}{n} + \sum_{r=2}^{\infty} \frac{\left(\mathcal{H}+  \mathcal{H}' + \mathcal{N}\right)^r}{r!}\left(\frac{t}{n}\right)^r \notag\\
    & \qquad\qquad\qquad\qquad\qquad  - \left(\mathcal{I} + \left(\mathcal{H}+  \mathcal{H}' + \mathcal{N}\right)\frac{t}{n} + \sum_{r=2}^{\infty}\sum_{\substack{r_1, r_2, r_3=0:\\r_1 +r_2+r_3=r}}^{\infty} \frac{\mathcal{H}^{r_1}\mathcal{H}'^{r_2}\mathcal{N}^{r_3}}{r_1!r_2!r_3!}\left(\frac{t}{n}\right)^{r}\right)\Bigg\Vert_{\diamond}\\
    & = n\left\Vert \sum_{r=2}^{\infty}\left( \frac{\left(\mathcal{H}+  \mathcal{H}' + \mathcal{N}\right)^r}{r!}\left(\frac{t}{n}\right)^r - \sum_{\substack{r_1, r_2, r_3=0:\\r_1 +r_2+r_3=r}}^{\infty} \frac{\mathcal{H}^{r_1}\mathcal{H}'^{r_2}\mathcal{N}^{r_3}}{r_1!r_2!r_3!}\left(\frac{t}{n}\right)^{r}\right)\right\Vert_{\diamond}\\
    & \leq n \sum_{r=2}^{\infty}\left\Vert \frac{\left(\mathcal{H}+  \mathcal{H}' + \mathcal{N}\right)^r}{r!}\left(\frac{t}{n}\right)^r - \sum_{\substack{r_1, r_2, r_3=0:\\r_1 +r_2+r_3=r}}^{\infty} \frac{\mathcal{H}^{r_1}\mathcal{H}'^{r_2}\mathcal{N}^{r_3}}{r_1!r_2!r_3!}\left(\frac{t}{n}\right)^{r}\right\Vert_{\diamond}\\
    & \leq n \sum_{r=2}^{\infty}\left(\left\Vert \frac{\left(\mathcal{H}+  \mathcal{H}' + \mathcal{N}\right)^r}{r!}\left(\frac{t}{n}\right)^r \right\Vert_{\diamond} + \left \Vert \sum_{\substack{r_1, r_2, r_3=0:\\r_1 +r_2+r_3=r}}^{\infty} \frac{\mathcal{H}^{r_1}\mathcal{H}'^{r_2}\mathcal{N}^{r_3}}{r_1!r_2!r_3!}\left(\frac{t}{n}\right)^{r}\right\Vert_{\diamond}\right)\\
    & \leq n \sum_{r=2}^{\infty} \left(\frac{\left(\left\Vert\mathcal{H}+  \mathcal{H}' + \mathcal{N}\right\Vert_{\diamond}\right)^r}{r!}\left(\frac{t}{n}\right)^r  + \sum_{\substack{r_1, r_2, r_3=0:\\r_1 +r_2+r_3=r}}^{\infty} \frac{\left\Vert\mathcal{H}\right\Vert_{\diamond}^{r_1}\left \Vert\mathcal{H}'\right\Vert_{\diamond}^{r_2}\left\Vert\mathcal{N}\right\Vert_{\diamond}^{r_3}}{r_1!r_2!r_3!}\left(\frac{t}{n}\right)^{r}\right)\\
    & \leq n \sum_{r=2}^{\infty}\left( \frac{\left(3 \left\Vert\mathcal{L}\right\Vert_{\max}\right)^r}{r!}\left(\frac{t}{n}\right)^r  + \sum_{\substack{r_1, r_2, r_3=0:\\r_1 +r_2+r_3=r}}^{\infty} \frac{\left\Vert\mathcal{L}\right\Vert_{\max}^{r_1}\left\Vert\mathcal{L}\right\Vert_{\max}^{r_2}\left\Vert\mathcal{L}\right\Vert_{\max}^{r_3}}{r_1!r_2!r_3!}\left(\frac{t}{n}\right)^{r}\right)\\
    & = n \sum_{r=2}^{\infty}\left( \frac{3^r \left\Vert\mathcal{L}\right\Vert_{\max}^r}{r!}\left(\frac{t}{n}\right)^r  + \sum_{\substack{r_1, r_2, r_3=0:\\r_1 +r_2+r_3=r}}^{\infty} \frac{\left\Vert\mathcal{L}\right\Vert_{\max}^{r_1+r_2+r_3}}{r_1!r_2!r_3!}\left(\frac{t}{n}\right)^{r}\right)\\
    & = n \sum_{r=2}^{\infty}\left( \frac{3^r \left\Vert\mathcal{L}\right\Vert_{\max}^r}{r!}\left(\frac{t}{n}\right)^r  + \sum_{\substack{r_1, r_2, r_3=0:\\r_1 +r_2+r_3=r}}^{\infty} \frac{\left\Vert\mathcal{L}\right\Vert_{\max}^{r}}{r_1!r_2!r_3!}\left(\frac{t}{n}\right)^{r}\right)\\
    & = n \sum_{r=2}^{\infty}\left( \frac{3^r \left\Vert\mathcal{L}\right\Vert_{\max}^r}{r!}\left(\frac{t}{n}\right)^r  + \left\Vert\mathcal{L}\right\Vert_{\max}^{r}\left(\frac{t}{n}\right)^{r} \sum_{\substack{r_1, r_2, r_3=0:\\r_1 +r_2+r_3=r}}^{\infty} \frac{1}{r_1!r_2!r_3!}\right)\\
    & \leq n \sum_{r=2}^{\infty}\left( 3^r \left\Vert\mathcal{L}\right\Vert_{\max}^r\left(\frac{t}{n}\right)^r  + \left\Vert\mathcal{L}\right\Vert_{\max}^{r}\left(\frac{t}{n}\right)^{r} \sum_{\substack{r_1, r_2, r_3=0:\\r_1 +r_2+r_3=r}}^{\infty} 1\right),
\end{align}
where $\left\Vert\mathcal{L}\right\Vert_{\max} \coloneqq 2K\max\{\left\Vert H\right\Vert, \left\Vert H_{I, 1}\right\Vert, \ldots, \left\Vert H_{I, J}\right\Vert, \left\Vert L_1\right\Vert^2, \ldots, \left\Vert L_K\right\Vert^2 \}$. The first equality holds due to the fact that the Lindbladians $\mathcal{N}_1, \ldots, \mathcal{N}_K$ commute with each other. The first and second inequalities follow due to the triangle inequality. The third inequality follows due to the submultiplicativity of the diamond norm and the triangle inequality.
The number of ways to pick $r_1, r_2, r_3\geq 0$ such that $r_1 + r_2 + r_3 = r$ is given by $\binom{r+2}{2}$, and for $r\geq 2$, this number can be bounded from above by $\left(2^r-1\right)3^r$. Using this fact in the above inequality, we get
\begin{align}
    n\left\Vert e^{\mathcal{L}t/n} - e^{\mathcal{H}t/n}\circ e^{\mathcal{H}'t/n} \circ e^{\mathcal{N}_1 t/n}\circ\cdots\circ e^{\mathcal{N}_K t/n}\right\Vert_{\diamond} & \leq n \sum_{r=2}^{\infty}\left( 3^r \left\Vert\mathcal{L}\right\Vert_{\max}^r\left(\frac{t}{n}\right)^r  + \left\Vert\mathcal{L}\right\Vert_{\max}^{r}\left(\frac{t}{n}\right)^{r} \left(2^r-1\right)3^r\right)\\
    & = n \sum_{r=2}^{\infty}6^r \left\Vert\mathcal{L}\right\Vert_{\max}^r\left(\frac{t}{n}\right)^r \\
    & = n \frac{36 \left\Vert\mathcal{L}\right\Vert_{\max}^2\left(\frac{t}{n}\right)^2}{1-6 \left\Vert\mathcal{L}\right\Vert_{\max}\left(\frac{t}{n}\right)}\\
    & \leq 72 \left\Vert\mathcal{L}\right\Vert_{\max}^2\frac{t^2}{n}\label{eq:sJm-first-bound},
\end{align}
where, for the last inequality, we assume that $6 \left\Vert\mathcal{L}\right\Vert_{\max}\left(\frac{t}{n}\right) \leq 1/2$.

\subsection{Bounding the Second Term of~\eqref{eq:two-terms-split-J-matrix-analysis}}

Consider the following:
\begin{equation}
    n\left\Vert e^{\mathcal{N}_1 t/n}\circ\cdots\circ e^{\mathcal{N}_K t/n}  - \mathcal{J}_{1}(t/n)\circ\cdots\circ\mathcal{J}_{K}(t/n) \right\Vert_{\diamond} \leq n\sum_{k=1}^{K}\left\Vert e^{\mathcal{N}_{k} t/n}- \mathcal{J}_{k}(t/n)\right\Vert_{\diamond}. \label{eqn:multi_split_j}
\end{equation}
Now we can analyze the individual term $\left\Vert e^{\mathcal{N}_{k} t/n}- \mathcal{J}_{k}(t/n)\right\Vert_{\diamond}$. For this, we must first understand the action of $\mathcal{J}_{k}(t)$.
        Note that, up to $O(t^2)$, Hamiltonian simulation can be expressed as follows:
        \begin{equation}\label{eqn:hamilton_to_second_order}
            e^{-iHt}\rho e^{iHt}=\rho - it[H,\rho]+\frac{(it)^2}{2}[H,[H,\rho]]+\cdots.
        \end{equation}
        Given the definition of $\mathcal{J}_{k}(t)$, we can use~\eqref{eqn:hamilton_to_second_order} and get
        \begin{align}
            &e^{-iJ_{k}\sqrt{t}}\left(\rho\otimes|0\rangle\!\langle 0|_{A}\right) e^{iJ_{k}\sqrt{t}}\notag\\
            &=\left(\rho\otimes|0\rangle\!\langle 0|_{A}\right)- i\sqrt{t}[J_{k},\left(\rho\otimes|0\rangle\!\langle 0|_{A}\right)]+\frac{(i\sqrt{t})^2}{2}[J_{k},[J_{k},\left(\rho\otimes|0\rangle\!\langle 0|_{A}\right)]] + \cdots\\
        	&=\left(\rho\otimes|0\rangle\!\langle 0|_{A}\right)- i\sqrt{t} \left[\left(L^\dagger_{k}\otimes |0\rangle\!\langle 1|_{A} + L_{k}\otimes |1\rangle\!\langle 0|_{A}\right),\left(\rho\otimes|0\rangle\!\langle 0|_{A}\right)\right]\notag\\
        	&\qquad+\frac{(i\sqrt{t})^2}{2}\left[\left(L^\dagger_{k}\otimes |0\rangle\!\langle 1|_{A} + L_{k}\otimes |1\rangle\!\langle 0|_{A}\right),\left[\left(L^\dagger_{k}\otimes |0\rangle\!\langle 1|_{A} + L_{k}\otimes |1\rangle\!\langle 0|_{A}\right),\left(\rho\otimes|0\rangle\!\langle 0|_{A}\right)\right]\right] + \cdots\label{eqn:any_lind_no_trace}.
        \end{align}
        Now consider the individual terms in~\eqref{eqn:any_lind_no_trace}. The first commutator can be simplified as follows:
        \begin{align}
        	&\left[\left(L^\dagger_{k}\otimes |0\rangle\!\langle 1|_{A} + L_{k}\otimes |1\rangle\!\langle 0|_{A}\right),\left(\rho\otimes|0\rangle\!\langle 0|_{A}\right)\right]\notag\\
        	&=  L_{k}\otimes |1\rangle\!\langle 0|_{A}\left(\rho\otimes|0\rangle\!\langle 0|_{A}\right)- \left(\rho\otimes|0\rangle\!\langle 0|_{A}\right) L^\dagger_{k}\otimes |0\rangle\!\langle 1|_{A}\\
        	&=  L_{k}\rho\otimes |1\rangle\!\langle 0|_{A}-\rho L^\dagger_{k}\otimes |0\rangle\!\langle 1|_{A}.
        \end{align}
        The second commutator can be simplified as follows:
        \begin{align}
        	&\left[\left(L^\dagger_{k}\otimes |0\rangle\!\langle 1|_{A} + L_{k}\otimes |1\rangle\!\langle 0|_{A}\right),\left[\left(L^\dagger_{k}\otimes |0\rangle\!\langle 1|_{A} + L_{k}\otimes |1\rangle\!\langle 0|_{A}\right),\left(\rho\otimes|0\rangle\!\langle 0|_{A}\right)\right]\right] \notag \\
        	&=\left[\left(L^\dagger_{k}\otimes |0\rangle\!\langle 1|_{A} + L_{k}\otimes |1\rangle\!\langle 0|_{A}\right),L_{k}\rho\otimes |1\rangle\!\langle 0|_{A}-\rho L^\dagger_{k}\otimes |0\rangle\!\langle 1|_{A}\right]\\
        	&=L^\dagger_{k}L_{k}\rho\otimes |0\rangle\!\langle 0|_{A} + \rho L^\dagger_{k}L_{k} \otimes |0\rangle\!\langle 0|_{A} - L_{k}\rho L^\dagger_{k}\otimes |1\rangle\!\langle 1|_{A} - L_{k}\rho L^\dagger_{k}\otimes |1\rangle\!\langle 1|_{A} 
        \end{align}
        After substituting the appropriate terms in~\eqref{eqn:any_lind_no_trace} and tracing out the auxiliary system,  we get
        \begin{align}
        	\mathcal{J}_{k}(t)(\rho) & = \operatorname{Tr}_{A}\!\left[e^{-iJ_{k}\sqrt{t}}\left(\rho\otimes|0\rangle\!\langle 0|_{A}\right)e^{iJ_{k}\sqrt{t}}\right] \\
            & =\rho + t\left(L^\dagger_{k} \rho L_{k} -\frac{1}{2}\left\{L_{k}L^\dagger_{k},\rho\right\}\right) + \operatorname{Tr}_{A}\!\left[\sum_{m=3}^{\infty}\frac{t^{m/2}}{m! n^m }\tilde{\mathcal{J}}_{k}^{m}(\rho)\right]\\
            & = \rho + \mathcal{N}_k(\rho) t + \operatorname{Tr}_{A}\!\left[\sum_{m=3}^{\infty}\frac{t^{m/2}}{m! n^m }\tilde{\mathcal{J}}_{k}^{m}(\rho)\right],
        \end{align}
        where $\tilde{\mathcal{J}}_{k}^{m}(\rho)=[J_{k},\tilde{\mathcal{J}}_{k}^{m-1}(\rho)]$ and $\tilde{\mathcal{J}}_{k}(\rho\otimes|0\rangle\!\langle 0|_{A})=[J_{k},\rho\otimes|0\rangle\!\langle 0|_{A}]$. 
        Note that terms with $\sqrt{t}$ and $(\sqrt{t})^3$ do not contribute to the above equation. Using the Taylor expansion of both $e^{\mathcal{N}_{k} t/n}$ and $\mathcal{J}_{k}(t/n)$, and plugging these back into $\left\Vert e^{\mathcal{N}_{k} t/n}- \mathcal{J}_{k}(t/n)\right\Vert_{\diamond}$, we get
        \begin{align}
            \left\Vert e^{\mathcal{N}_{k} t/n}- \mathcal{J}_{k}(t/n)\right\Vert_{\diamond}= \left\Vert\mathcal{I}+\frac{t}{n}\mathcal{N}_{k}+\sum_{m=2}^{\infty}\frac{t^m}{m!n^m}\mathcal{N}_{k}^{m}-\mathcal{I}-\frac{t}{n}\mathcal{N}_{k}-\operatorname{Tr}_{A}\!\left(\sum_{m=3}^{\infty}\frac{t^{m/2}}{m! n^m }\tilde{\mathcal{J}}_{k}^{m}\right)\right\Vert_{\diamond}
        \end{align}
        Then,
        \begin{align}
            \left\Vert e^{\mathcal{N}_{k} t/m}- \mathcal{J}_{k}(t/m)\right\Vert_{\diamond}&=\left\Vert\sum_{m=2}^{\infty}\frac{t^m}{m!n^m}\mathcal{N}_{k}^{m}-\operatorname{Tr}_{A}\!\left(\sum_{m=3}^{\infty}\frac{t^{m/2}}{n!n^m}\tilde{\mathcal{J}}_{k}^{m}\right)\right\Vert_{\diamond}\\
            &\leq\left\Vert\sum_{m=2}^{\infty}\frac{t^m}{m!n^m }\mathcal{N}_{k}^{m}\right\Vert_{\diamond}+\left\Vert\sum_{m=3}^{\infty}\frac{t^{m/2}}{m!n^m}\operatorname{Tr}_{A}\tilde{\mathcal{J}}_{k}^{m}\right\Vert_{\diamond}\\
            &\leq\sum_{m=2}^{\infty}\frac{t^m}{m!n^m }\left\Vert\mathcal{N}_{k}^{m}\right\Vert_{\diamond}+\sum_{m=3}^{\infty}\frac{t^{m/2}}{m!n^m}\left\Vert\operatorname{Tr}_{A}\tilde{\mathcal{J}}_{k}^{m}\right\Vert_{\diamond}\\
            &\leq\sum_{m=2}^{\infty}\frac{t^m}{m!n^m }\left\Vert\mathcal{N}_{k}\right\Vert^{m}_{\diamond}+\sum_{m=3}^{\infty}\frac{t^{m/2}}{m!n^m}\left\Vert\operatorname{Tr}_{A}\tilde{\mathcal{J}}_{k}^{m}\right\Vert_{\diamond}.\label{eq:alomst_final_sum_split_j}
        \end{align}
        The first and second inequalities hold due to the triangle inequality, and the final inequality holds due to the sub-multiplicativity of the diamond norm. Note that, $\left\Vert\mathcal{N}_{k}\right\Vert_{\diamond}\leq 2\left\Vert L_{k}\right\Vert^2 \leq 2\lambda_{\max}$. Now consider $\Vert\operatorname{Tr}_{A}\tilde{\mathcal{J}}_{k}^{m}\Vert_{\diamond}$. If $m$ is odd,  $\Vert\operatorname{Tr}_{A}\tilde{\mathcal{J}}_{k}^{m}\Vert_{\diamond}= 0$ because, after the partial trace on $A$, these terms do not contribute to $\mathcal{J}_{k}$. If $m$ is even, 
        \begin{equation}
            \left\Vert\operatorname{Tr}_{A}\tilde{\mathcal{J}}_{k}^{m}\right\Vert_{\diamond}=\left\Vert\operatorname{Tr}_{A}\!\left(\tilde{\mathcal{J}}_{k}\circ \tilde{\mathcal{J}}_{k}\right)^{m'}\right\Vert_{\diamond}\leq \left(4\left\Vert L_{k}\right\Vert^{2}\right)^{m'}\leq  (4\lambda_{\max})^{m'},
        \end{equation}
        where $m' \coloneqq m/2$. Substituting these bounds on $\left\Vert\mathcal{N}_{k}\right\Vert_{\diamond}$ and  $\left\Vert\operatorname{Tr}_{A}\tilde{\mathcal{J}}_{k}^{m}\right\Vert_{\diamond}$ in~\eqref{eq:alomst_final_sum_split_j}, we get
        \begin{align}
            \left\Vert e^{\mathcal{N}_{k} t/n}- \mathcal{J}_{k}(t/n)\right\Vert_{\diamond} &\leq \sum_{m=2}^{\infty}\frac{2^m\lambda_{\max}^m t^m}{m!n^m}+\sum_{m'=2}^{\infty}\frac{4^{m'}\lambda_{\max}^{m'} t^{m'}}{(2m')!n^{2m'}}\\
            &\leq \sum_{m=2}^{\infty}\frac{2^m\lambda_{\max}^m t^m}{m!n^m}+\sum_{m'=2}^{\infty}\frac{4^{m'}\lambda_{\max}^{m'} t^{m'}}{m'!n^{m'}}\\
            & \leq \frac{4 \lambda_{\max}^2 t^2}{n^2}e^{2\lambda_{\max} t/n} + \frac{16\lambda_{\max}^2 t^2}{n^2}e^{4\lambda_{\max} t/n}\\
            & \leq \frac{4\lambda_{\max}^2 t^2}{n^2}e^{4\lambda_{\max} t/n} + \frac{16\lambda_{\max}^2 t^2}{n^2}e^{4\lambda_{\max} t/n}\\
            &= \frac{20\lambda_{\max}^2 t^2}{n^2}e^{4\lambda_{\max} t/n}. \label{eq:single_split_j}
        \end{align}
        Substituting~\eqref{eq:single_split_j} in~\eqref{eqn:multi_split_j}, we get
        \begin{align}
            \left\Vert e^{\mathcal{N}t} - \left(\mathcal{J}_{1}(t)\circ\cdots\circ\mathcal{J}_{K}(t)\right)\right\Vert_{\diamond}&\leq n\sum_{k=1}^{K}\frac{20\lambda_{\max}^2 t^2}{n^2}e^{4\lambda_{\max} t/n}\\
            &\leq \sum_{k=1}^{K}\frac{20\lambda_{\max}^2 t^2}{n}e^{4\lambda_{\max} t/n}\\
            &= \frac{20 K\lambda_{\max}^2 t^2}{n}e^{4\lambda_{\max} t/n}.
        \end{align}
        When $n$ is large enough that $e^{4\lambda_{\max} t/n}\approx 1$, we get  
        \begin{align}
            \left\Vert e^{\mathcal{N}t} - \left(\mathcal{J}_{1}(t)\circ\cdots\circ\mathcal{J}_{K}(t)\right)\right\Vert_{\diamond}\leq 20K\lambda_{\max}^2\frac{ t^2}{n}\label{eq:sJm-sec-bound}.
        \end{align}

\subsection{Final Bound and Gate Complexity}

       From~\eqref{eq:two-terms-split-J-matrix-analysis},~\eqref{eq:sJm-first-bound}, the above inequality, and normalizing the diamond distance on the left-hand side of~\eqref{eq:two-terms-split-J-matrix-analysis}, we finally get
       \begin{align}
           & \frac{1}{2}\left\Vert e^{\mathcal{L}t} - \left(e^{\mathcal{H}t/n}\circ \left(\prod_{q=1}^{Q} e^{\mathcal{H}'_{q}t/n}\right)  \circ \mathcal{J}_{1}(t/n)\circ\cdots\circ\mathcal{J}_{K}(t/n) \right)^{\circ n}\right\Vert_{\diamond} \notag\\
           & \leq \frac{1}{2}\left(72 \left\Vert\mathcal{L}\right\Vert_{\max}^2\frac{t^2}{n} + 20K\lambda_{\max}^2\frac{ t^2}{n} + O\!\left(\frac{Q^2 \lambda_{\max}^2t^2}{n}\right) \right).
       \end{align}
        If we require that our final simulation error is at most $\varepsilon$, then 
        \begin{align}
            n& \geq \frac{1}{2}\left(72 \left\Vert\mathcal{L}\right\Vert_{\max}^2\frac{t^2}{n} + 20K\lambda_{\max}^2\frac{ t^2}{n} + O\!\left(\frac{Q^2 \lambda_{\max}^2t^2}{n}\right) \right)\\
            & =O\!\left(\frac{K^2 \lambda_{\max}^2t^2}{\varepsilon}\right)+  O\!\left(\frac{Q^2 \lambda_{\max}^2t^2}{\varepsilon}\right)\\
            & = O\!\left(\frac{ (K^2 + Q^2)\lambda_{\max}^2 t^2}{\varepsilon}\right),
        \end{align}
where, in the first equality, we use the fact that $\left\Vert\mathcal{L}\right\Vert_{\max} = 2K\lambda_{\max}$.

Given $n$, we can now directly compute the gate complexity of the Split $J$-Matrix algorithm from the channel form of this algorithm given by~\eqref{eq:sJm-channel}. Note that the unitary $e^{-iH\tau}$, where $H$ is a local Hamiltonian, acting on a constant number of qubits, and $\tau$ is some time, can be implemented using $O(1)$ number of one- and two-qubit gates. With this in mind, we can determine the number of one- and two-qubit gates required to implement the different components of the Split $J$-Matrix channel \eqref{eq:sJm-channel} as follows:
\begin{align}
    e^{\mathcal{H}t/n} = \prod_{p=1}^{P} e^{\mathcal{H}_{p}t/n} & \longrightarrow O(P) \text{ gates}\\
    \prod_{q=1}^{Q} e^{\mathcal{H}'_{q}t/n} & \longrightarrow O(Q) \text{ gates}\\
    \mathcal{J}_{1}(t/n)\circ\cdots\circ\mathcal{J}_{K}(t/n) & \longrightarrow O(K) \text{ gates}.
\end{align}
Therefore, the total gate complexity of the Split $J$-Matrix algorithm is
\begin{align}
    O(n(P+Q+K)) = O\left(\frac{ (P+Q+K)(K^2 + Q^2)\lambda_{\max}^2 t^2}{\varepsilon}\right).
\end{align}

This concludes the proof of Theorem~\ref{thm:split_j_matrix}.

\section{Calculating \texorpdfstring{$c$}{Lg} in~\eqref{def:WML_c} as a Function of \texorpdfstring{$N$}{Lg} and \texorpdfstring{$R$}{Lg}}\label{app:constants}
    Recall that $N$ is the number of emitters included in the open Tavis--Cummings model. Let $R$ be the number of excitations allowed within the cavity. To express $c$, as defined in~\eqref{def:WML_c}, in terms of $N$ and $R$, we first calculate $\Vert a \Vert_{2}^{2}$. Note that when $R$ excitations are allowed in the cavity, $a$ can be expressed in the Fock basis as follows:
    \begin{equation}\label{def:general_annhil}
        a  = |0\rangle\!\langle 1|+\sqrt{2}\,|1\rangle\!\langle 2|+\sqrt{3}\,|2\rangle\!\langle 3| + \cdots + \sqrt{R}\,|R-1\rangle\!\langle R|.
    \end{equation}
    Using~\eqref{def:general_annhil} and~\eqref{eqn:p-norm}, we get 
    \begin{equation}
        \Vert a \Vert_{2}^{2} = \operatorname{Tr}\!\left[a^\dagger a\right] = \operatorname{Tr}\!\left[\sum_{r=1}^{R}r\,|r\rangle\!\langle r|\right] = \frac{R(R+1)}{2} \leq 2R^2
    \end{equation}
    Similarly, using~\eqref{eqn:annihil_emitter} and~\eqref{eqn:p-norm}, we get
    \begin{equation}
        \Vert \sigma_{j}^{-} \Vert_{2}^{2} = 1.
    \end{equation}
    From \ref{subsec:background_DME_WML}, we can infer the weights $c_j$ associated with each of the program states. Hence, we get
    \begin{equation}
        \sum_{j}c_{j}= R(R-1)\frac{\omega_{C}}{2} + \omega_{E}N + \sum_{r=1}^{R}2g\sqrt{r} + \sum_{r=1}^{R}2E_{P}\sqrt{r}.
    \end{equation}
    We can bound the above summation as follows:
    \begin{equation}
        \sum_{j}c_{j}\leq R^{2}\omega_{C} + \omega_{E}N + 2g R\sqrt{R} + 2E_{P}R\sqrt{R}.
    \end{equation}
    Therefore, for the Tavis--Cummings model, $c$ can be bound as follows:
    \begin{equation}
        c \leq (\kappa+ \omega_{C})R^{2} + (\gamma  + \omega_{E})N + 2(g  + E_{P})R\sqrt{R}.
    \end{equation}

\section{Computational Complexities of Basic Classical Methods \label{app:classical-complexity}}

In this appendix, we briefly describe the simplest classical methods of simulating open quantum systems, and we also give their associated space and time complexities when applied to the open Tavis--Cummings model.
We show that the two most basic approaches to simulating open systems dynamics (solving the Lindblad master equation in Liouville space and the wavefunction Monte Carlo method, which QuTiP employs) incur exponential space and time cost. For a comprehensive review of classical methods for simulating open systems dynamics, see~\cite{RevModPhys.93.015008}.            

\subsection{Lindblad Master Equation in Liouville Space}

As in the main text, we seek to solve~\eqref{eqn:open_TC_master}, which contains terms representing both unitary and dissipative time evolution.
For tractability, we truncate the Hilbert space of the cavity to $D$ dimensions, which allows for simulations containing up to $D-1$ photons.
With $N$ two-level quantum emitters coupled to the cavity, the dimensionality of the Hilbert space $\mathcal{H}$ of the entire Tavis--Cummings system is $2^N D$.

We again present the Lindblad master equation as given in~\eqref{eqn:open_TC_master}, rewritten as
\begin{equation}
    \dot{\rho} = \mathcal{L}(\rho),
\end{equation}
where $\mathcal{L}$ is a superoperator.
Let $L(\mathcal{H})$ be the space of linear operators from an input Hilbert space $\mathcal{H}$ to $\mathcal{H}$ itself, so that an operator $X\in L(\mathcal{H})$ takes states in $\mathcal{H}$ to states in $\mathcal{H}$.
A superoperator, then, is a function $\mathcal{N} \colon L(\mathcal{H}) \to L(\mathcal{H})$ which map operators to operators.
The Hilbert space of operators is also known as Liouville space, and its dimensionality is the square of the dimensionality of elements of $L(\mathcal{H})$: $(2^N D)^2 = 2^{2N} D^2$.
An operator $\rho \in L(\mathcal{H})$ can be converted to an element of the corresponding Liouville space by vectorizing it, i.e., ``column-stacking"~\cite{10.1063/1.1518555}. For example, for a $2 \times 2$ operator,
\begin{equation}
\rho = \begin{pmatrix} \rho_{11} & \rho_{12} \\ \rho_{21} & \rho_{22} \end{pmatrix} \quad \mapsto \quad \lvert \rho \rangle\rangle = \begin{pmatrix} \rho_{11} \\ \rho_{21} \\ \rho_{12} \\ \rho_{22} \end{pmatrix},
\end{equation}
where $|\rho\rangle\rangle$ denotes the vectorized form of $\rho$.
In making this transformation, operations on $\rho$ (superoperators) transform as follows:
\begin{equation}
    A \rho B \quad \mapsto \quad \left( B^{\intercal} \otimes A \right) \lvert \rho \rangle\rangle,
\end{equation}
where $B^{\intercal}$ indicates the transpose of $B$.
Importantly, the right-hand side can be written as a matrix acting on $|\rho\rangle\rangle$.

Using this transformation, the superoperator terms of the general Lindblad master equation~\eqref{eqn:general_master} can be written as
\begin{align}
    [H,\rho] = H\rho - \rho H \quad &\mapsto \quad \mathbb{H} \lvert \rho \rangle\rangle = ((I\otimes H) - (H^{\intercal}\otimes I)) \lvert \rho \rangle\rangle, \\
    \mathcal{L}_L (\rho) = L \rho L^\dagger - \frac{1}{2}\left\{ L^\dagger L, \rho \right\} \quad&\mapsto\quad \mathbb{L}_L|\rho\rangle\rangle = \left[(L^{*}\otimes L) - \frac{1}{2}(I\otimes L^\dagger L) - \frac{1}{2}(L^{\intercal}L^{*}\otimes I)\right] \lvert \rho \rangle\rangle,
\end{align}
where $L^*$ denotes the complex conjugate of $L$, and $I$ is the identity matrix.
so that the Lindblad master equation for an open Tavis--Cummings system~\eqref{eqn:open_TC_master} becomes
\begin{align}
    \lvert \dot{\rho} \rangle\rangle = \mathbb{L}~ \lvert \rho \rangle\rangle \equiv \left( -i\mathbb{H} + \mathbb{L}_a + \sum_{j=1}^N \mathbb{L}_{\sigma^-_j} \right) \lvert \rho \rangle\rangle.    
\end{align}
where, notably, $\mathbb{L}$ is a ($2^{2N} D^2$)-dimensional matrix.
The solution of this Liouville space master equation is then given by,
\begin{align}
    \lvert \rho(t) \rangle\rangle = e^{\mathbb{L}t} \lvert \rho (t=0) \rangle\rangle,
\end{align}

Steps to obtain $e^{\mathbb{L}t}$ are:
\begin{enumerate}
    \item Diagonalize $\mathbb{L}$ so that it can be written $\mathbb{L} = A \Lambda A^{-1}$, where $\Lambda = \operatorname{diag}(\lambda_1, \lambda_2, \dots)$ is the diagonal matrix of the eigenvalues of $\mathbb{L}$.
    \item Compute $e^{\mathbb{L}t} = A e^{\Lambda t} A^{-1} = A \operatorname{diag}(\{e^{\lambda_i t}\}) A^{-1}$.
\end{enumerate}
The matrix operations involved in computing $e^{\mathbb{L}t} \lvert \rho(t=0) \rangle\rangle$ are thus diagonalization, two matrix-matrix multiplications, and a matrix-vector multiplication.
If $d$ is the dimension of $\mathbb{L}$, these operations have time complexities $O(d^3)$, $O(d^3)$, and $O(d^2)$, respectively~\cite{golub2013matrix}.
Therefore, the first two operations dominate the time complexity, for an overall scaling of $O(d^3) = O(2^{6N} D^6)$.
The space resources required by this algorithm are proportional to the memory needed to record $\mathbb{L}$, which is $O(d^2) = O(2^{4N} D^4)$.
Thus, this basic method of classical simulation has time and space costs which are both exponential in the number of emitters, with the caveat that we have not considered any potential methods for taking advantage of the sparsity of $\mathbb{L}$.

\subsection{Wavefunction Monte Carlo Method}

The wavefunction Monte Carlo method, also known as the quantum jump or quantum trajectories method~\cite{PhysRevLett.68.580, PhysRevA.45.4879, carmichael2009open, breuer2002theory}, improves the computational complexity of evolving the vectorized density matrix in Liouville space by doing the following.
Consider the spectral decomposition of the density matrix $\rho = \sum_{i} \lambda_{i} \lvert \phi_{i} \rangle \! \langle \phi_{i} \rvert$. The key idea, then, is to evolve each of the pure states $\lvert \phi_{i} \rangle$ in this decomposition according to an ``effective" or ``conditional" Hamiltonian. 

The terms in a Lindbladian dissipator~\eqref{eqn:lindbladian} can be grouped into two types: $L \rho L^\dagger$ represents quantum jumps, where the system transitions between states, while $\{L^\dagger L, \rho\}$ represents the gradual loss of coherence in the system.
This allows us to define a non-Hermitian effective Hamiltonian:
\begin{equation}
    H_{\text{eff}} \coloneqq H - \frac{i}{2} \sum_{k}^K \gamma_{k} L^{\dag}_{k} L_{k},
\end{equation}
which enables us to rewrite the Lindblad master equation~\eqref{eqn:general_master} as 
\begin{equation}
    \dot{\rho} = -i\left(H_{\text{eff}}~\rho - \rho~H^\dagger_\text{eff}\right) + \sum_{k} \gamma_{k} L_{k} \rho L_{k}^{\dag}.
\end{equation}
The effective Hamiltonian combines the coherence decay terms with the unitary evolution, a factor of $i$ so that it generates decay, and the oscillations of the unitary evolution under $H$. Having defined this quantity, we proceed to approximate the time evolution of the open quantum system from time $t = 0$ to $t = t_{f}$ with the following steps~\cite{lidar2020lecturenotestheoryopen}:

\begin{enumerate}
\item Initialize the system in the state $\lvert \psi (0) \rangle$ and set $j=0$. For each time step $t_j \in [t_f]$:
\item Evolve the state under the effective Hamiltonian for a small time step $\tau$: $\lvert \tilde{\psi}(t_j+\tau) \rangle = e^{-iH_\text{eff}\tau} \lvert \psi(t_j) \rangle$.
\item Calculate jump probabilities for each jump operator $L_k$: $p_k = \frac{\gamma_k \|L_k \lvert \tilde{\psi}(t_j+\tau) \rangle \|^2}{\sum_k \gamma_k \|L_k \lvert \tilde{\psi}(t_j+\tau) \rangle \|^2}$.
\item On the basis of the probabilities, randomly select an $L_k$ and apply it: $|\tilde{\psi}(t_j + \tau)\rangle \mapsto \frac{L_k|\tilde{\psi}(t_j+\tau)\rangle}{\|L_k|\tilde{\psi}(t_j+\tau)\rangle \|}$. Set $t_{j+1} = t_j + \tau$.
\item Repeat Steps 2 through 4 $P$ times. 
\item For all $p \in [P]$, denote the outcome of the $p^{\text{th}}$ round as $\lvert \psi_{p}(t_{f}) \rangle$ and take the average over all such outcomes: $\rho(t_f) = \frac{1}{P}\sum_{p=1}^P \lvert \psi_p(t_f) \rangle \langle \psi_p(t_f) \rvert$. Upon convergence of $\rho(t_{f})$, this average gives an estimate of the solution.
\end{enumerate}

The space complexity of this method is proportional to the memory required to store the density matrix which, as before, has dimension $2^N D$, so the memory cost is $O(2^{2N} D^2)$.
Similar to the Liouville space method, the most expensive part of this method in terms of runtime is computing the matrix exponential $e^{-i H_\text{eff} \tau}$.
As discussed in the previous section, the cost of this operation scales as the cube of the dimension of $H_\text{eff}$, so the runtime cost is $O(2^{3N} D^3)$.
These costs are polynomially lower than those of the Liouville space method, yet still exponential.
The price of this slightly lower exponential scaling is that we trade off accuracy.
The direct method computes the exact density matrix, whereas the Monte Carlo method only approximates it; the error of approximation scales like the standard error of the mean: $1/\sqrt{P}$.

\end{document}